\newtheorem{thm}{Theorem}[section]
\newtheorem{prop}[thm]{Proposition}
\newtheorem{lem}[thm]{Lemma}
\newtheorem{cor}[thm]{Corollary}
\newtheorem{fact}[thm]{Fact}
\newtheorem{Q}{Question}
\theoremstyle{remark}
\newtheorem{rem}[thm]{Remark}
\newcommand{\FF}{\mathbb{F}}
\newcommand{\ww}{\omega}
\newcommand{\vv}{\omega^2}
\newcommand{\cD}{\mathcal{D}}
\newcommand{\cB}{\mathcal{B}}
\DeclareMathOperator{\wt}{wt}
\DeclareMathOperator{\supp}{supp}
\DeclareMathOperator{\Nei}{Nei}
\begin{document}
\title{Some restrictions on the weight enumerators of near-extremal
ternary self-dual codes and quaternary Hermitian
self-dual codes
}

\author{
Makoto Araya\thanks{Department of Computer Science,
Shizuoka University,
Hamamatsu 432--8011, Japan.
email: {\tt araya@inf.shizuoka.ac.jp}}
and
Masaaki Harada\thanks{
Research Center for Pure and Applied Mathematics,
Graduate School of Information Sciences,
Tohoku University, Sendai 980--8579, Japan.
email: \texttt{mharada@tohoku.ac.jp}.}
}

\maketitle

\begin{abstract}
We give  restrictions on the weight enumerators of 
ternary near-extremal self-dual codes of length divisible by $12$
and quaternary near-extremal Hermitian self-dual codes of length divisible by $6$.
We consider the weight enumerators for which there is a
ternary near-extremal self-dual code of length $12m$ for $m =3,4,5,6$.
Also we consider the weight enumerators for which there is a
quaternary near-extremal Hermitian self-dual code of length $6m$ for $m =4,5,6$.
\end{abstract}

\section{Introduction}
Self-dual codes are one of the most interesting classes of codes.
This interest is justified by many combinatorial objects
and algebraic objects related to self-dual codes
(see e.g., \cite{SPLAG} and \cite{RS-Handbook}).

Let $\FF_q$ denote the finite field of order $q$, where $q$ is a prime power.
A code $C$ over $\FF_q$ of length $n$ is said to be \emph{self-dual} if
$C=C^\perp$, where
the dual code $C^\perp$ of $C$ is defined as
$C^\perp=\{x \in \FF_q^{n} \mid \langle x,y\rangle_E=0 \text{ for all } y\in C\}$
under the standard inner product $\langle x,y\rangle_E$.
A code $C$ over $\FF_{q^2}$ of length $n$ is said to be \emph{Hermitian self-dual} if
$C=C^{\perp_H}$, where
the Hermitian dual code $C^{\perp_H}$ of $C$ is defined as
$C^{\perp_H}=\{x \in \FF_{q^2}^{n} \mid \langle x,y\rangle_H=0 \text{ for all } y\in C\}$
under the Hermitian inner product $\langle x,y\rangle_H$.
By the Gleason--Pierce theorem, there are
nontrivial divisible self-dual
codes over $\FF_q$ for $q=2$ and $3$ only, and  
there are nontrivial divisible Hermitian self-dual
codes over $\FF_{q^2}$ for $q=2$ only.
This is one of the reasons why
much work has been done concerning these self-dual codes.
Codes over $\FF_3$ and $\FF_4$ are called \emph{ternary} and
\emph{quaternary}, respectively.
In this paper, we concentrate on ternary self-dual codes and 
quaternary Hermitian self-dual codes.

The minimum weight $d$ of a ternary (resp.\ quaternary Hermitian) self-dual
code of length $n$ is bounded by
$d \leq 3 \lfloor n/12 \rfloor +3$~\cite{MS-bound}
(resp.\ $d \leq 2 \lfloor n/6 \rfloor +2$~\cite{MOSW}).
A ternary (resp.\ quaternary Hermitian) self-dual code
of length $n$ and minimum weight
$3 \lfloor n/12 \rfloor +3$ (resp.\ $2 \lfloor n/6 \rfloor +2$)
is called \emph{extremal}.
By the Assmus--Mattson theorem~\cite{AM},
the supports of codewords of 
weight $i$ in a ternary extremal (resp.\ quaternary extremal Hermitian) self-dual
code of length $12m$ (resp.\ $6m$)
form a $5$-design for $i \le 6m+3$~\cite{MS-bound}
(resp.\ $i \le 2\lfloor (3m+2)/2 \rfloor$~\cite[Theorem~18]{MOSW}).

A ternary (resp.\ quaternary Hermitian) self-dual code
of length $n$ and minimum weight
$3 \lfloor n/12 \rfloor$ (resp.\ $2 \lfloor n/6 \rfloor$) is called \emph{near-extremal}.
Recently, Miezaki, Munemasa and Nakasora~\cite{MMN} gave 
a new Assmus--Mattson type theorem for 
ternary near-extremal self-dual codes of length $12m$ and
quaternary near-extremal Hermitian self-dual codes of length $6m$.
The supports of codewords of weight $i$ in 
a ternary near-extremal 
(resp.\ quaternary near-extremal Hermitian)   self-dual code
of length $12m$ (resp.\ $6m$)
form a $1$-design for $i \le 6m-3$ (resp.\ $i \le 3m-1$).
This motivates our study of 
ternary near-extremal self-dual codes of length $12m$
and quaternary near-extremal Hermitian self-dual codes of length $6m$.
In this paper, we give restrictions on the weight enumerators of these codes.
Furthermore,  we give a divisibility property of the coefficients of the weight enumerators of 
these codes.  This divisibility property is the main result of this paper.

This paper is organized as follows.
In Section~\ref{Sec:2}, we give some definitions, notations
and basic results.
In Section~\ref{Sec:Res}, 
we show that
if there is a  ternary near-extremal self-dual code $C$ of length $12m$ then
$A_{3i}(C) \equiv  0 \pmod 8$ for $i=m,m+1,\ldots,4m$,
where $A_{i}(C)$ denotes the number of codewords of weight $i$ in $C$
(Theorem~\ref{thm:F3}).
Furthermore, we show that
if there is a  quaternary near-extremal Hermitian self-dual code $C$ of length $6m$
then ${A_{2i}(C) \equiv  0 \pmod 9}$ for ${i=m,m+1,\ldots,3m}$
(Theorem~\ref{thm:F4}).
Theorems~\ref{thm:F3} and \ref{thm:F4} are the main results of this paper.
As a consequence, it is shown that 
if there is a  ternary extremal (resp.\ quaternary extremal Hermitian)
self-dual code $C$ of length $12m$ (resp.\ $6m$) then
$A_{3i}(C) \equiv  0 \pmod 8$ (resp.\ $A_{2i}(C) \equiv  0 \pmod 9$)
for  $i=m+1,m+2,\ldots,4m$ (resp.\ $3m$).
In Section~\ref{Sec:F3},
we consider the weight enumerators for which there is a
ternary near-extremal self-dual code of length $12m$ for $m =3,4,5,6$.
In Section~\ref{Sec:F4},
we consider the weight enumerators for which there is a
quaternary near-extremal Hermitian self-dual code of length $6m$ for $m =4,5,6$.

All computer calculations in this paper were done using programs in 
\textsc{Magma}~\cite{Magma} and \textsc{Mathematica}~\cite{Mathematica}.

\section{Preliminaries}\label{Sec:2}

In this section, we prepare some definitions, notations
and basic results.

\subsection{Self-dual codes}

From now on, we suppose that $q = 3$ or $4$.
We denote the finite fields of orders $3$ and $4$
by $\FF_3=\{ 0,1,2\}$ and $\FF_4=\{ 0,1,\ww , \vv  \}$, respectively,
where $\omega^2 = \omega +1$.
An $[n,k]$ \emph{code} $C$ over $\FF_q$
is a $k$-dimensional vector subspace of $\FF_q^n$.
A code $C$ over $\FF_3$ (resp.\ $\FF_4$) is called 
\emph{ternary} (resp.\ \emph{quaternary}).
The parameter $n$ is called the \emph{length} of $C$.
A generator matrix of an $[n,k]$ code $C$ over $\FF_q$ is a $k \times n$
matrix such that the rows of the matrix generate $C$.
Two codes $C$ and $C'$ over $\FF_q$ are \emph{equivalent} if there is a
monomial matrix $P$ over $\mathbb{F}_q$ with $C' = C \cdot P$,
where $C \cdot P = \{ x P\mid  x \in C\}$.

The \emph{support} $\supp(x)$ of a vector $x=(x_1,x_2,\ldots,x_n) \in \FF_q^n$ is
a subset $\{i \mid x_i \ne 0\}$ of $\{1,2,\ldots,n\}$.
The \emph{weight} $\wt(x)$ of a vector $x$ is defined as $|\supp(x)|$.
A vector of a code $C$ is called a \emph{codeword} of $C$.
The minimum non-zero weight of all codewords in $C$ is called
the \emph{minimum weight} of $C$. An  $[n,k,d]$ code over $\FF_q$
is an $[n,k]$ code over $\FF_q$ with minimum weight $d$.
The \emph{weight enumerator} of $C$ is given by $\sum_{c \in C} y^{\wt(c)}$.


The \emph{dual} code $C^{\perp}$ of a ternary code $C$  of length $n$
is defined as
\[
C^{\perp}=
\{x \in \FF_3^n \mid \langle x,y\rangle_E = 0 \text{ for all } y \in C\},
\]
where $\langle x,y\rangle_E = \sum_{i=1}^{n} x_i {y_i}$
for $x=(x_1,x_2,\ldots,x_n), y=(y_1,y_2,\ldots,y_n) \in \FF_3^n$.
A ternary code $C$ is said to be \emph{self-dual} if $C=C^\perp$.
Every codeword of a ternary self-dual code has weight a multiple of $3$.
A ternary self-dual code of length $n$ exists
if and only if $n>0$ with $n \equiv 0 \pmod 4$~\cite{MS-bound}.
All ternary self-dual codes were classified in~\cite{CPS},
\cite{HM}, \cite{MPS} and \cite{PSW} for lengths up to $24$.

The \emph{Hermitian dual} code $C^{\perp_H}$ of a quaternary
code $C$ of length $n$ is defined as
\[
C^{\perp_H}=
\{x \in \FF_{4}^n \mid \langle x,y\rangle_H = 0 \text{ for all } y \in C \},
\]
where $\langle x,y\rangle_H= \sum_{i=1}^{n} x_i y_i^2$
for $x=(x_1,x_2,\ldots,x_n), y=(y_1,y_2,\ldots,y_n)\in \FF_{4}^n$.
A quaternary code $C$ is said to be \emph{Hermitian self-dual} if
$C=C^{\perp_H}$.
All codewords of a quaternary Hermitian self-dual code have even weights.
A quaternary Hermitian self-dual code of length $n$ exists
if and only if $n>0$ with $n \equiv 0 \pmod 2$~\cite{MOSW}. 
All quaternary Hermitian self-dual codes were classified 
in~\cite{CPS}, \cite{HLMT}, \cite{HM11} and \cite{MOSW}
for lengths up to $20$.

\subsection{Near-extremal self-dual codes }

The minimum weight $d$ of a ternary (resp.\ quaternary Hermitian) self-dual
code of length $n$ is bounded by
$d \leq 3 \lfloor n/12 \rfloor +3$~\cite{MS-bound}
(resp.\ $d \leq 2 \lfloor n/6 \rfloor +2$~\cite{MOSW}).
A ternary (resp.\ quaternary Hermitian) self-dual code of length $n$ and minimum weight
$3 \lfloor n/12 \rfloor +3$ (resp.\ $2 \lfloor n/6 \rfloor +2$) is called \emph{extremal}.
A ternary (resp.\ quaternary Hermitian) self-dual code of length $n$ and minimum weight
$3 \lfloor n/12 \rfloor$ (resp.\ $2 \lfloor n/6 \rfloor$) is called \emph{near-extremal}.

In this paper, we are interested in ternary near-extremal self-dual codes of length $12m$
and quaternary near-extremal Hermitian  self-dual codes of length $6m$.

\begin{thm}[{Han and Kim~\cite{HK}}]
\label{thm:HK}
\begin{itemize}
\item[\rm (i)]
There is no ternary near-extremal self-dual code of length $12m$
for $m \ge 147$.
\item[\rm (ii)]
There is no quaternary near-extremal Hermitian  self-dual code of length $6m$
for $m \ge 38$.
\end{itemize}
\end{thm}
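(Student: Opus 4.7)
My plan is to combine Gleason's theorem on the invariant structure of weight enumerators of self-dual codes with a feasibility analysis of the linear inequalities $A_i(C) \ge 0$. By Gleason's theorem, the weight enumerator of a ternary self-dual code of length $n = 12m$ lies in the polynomial ring generated by two explicit base invariants $\phi$ and $\psi$, so one can write
\[
W_C(y) = \sum_{j=0}^{m} a_j \, \phi(y)^{3m-3j} \psi(y)^{j},
\]
with $a_0$ fixed by $A_0(C) = 1$. An analogous parametrization is available in the quaternary Hermitian case, using the corresponding pair of invariants of degrees dividing $6$.

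Next, I would impose the near-extremality constraints $A_{3i}(C) = 0$ for $i = 1, \ldots, m-1$ (resp.\ $A_{2i}(C) = 0$ for $i = 1, \ldots, m-1$). These give $m-1$ linear equations in the $m$ unknowns $a_1, \ldots, a_m$, so the resulting system has a one-dimensional affine solution space, and each remaining coefficient $A_{3i}(C)$ (resp.\ $A_{2i}(C)$) becomes an affine function of a single free parameter $\alpha$ whose slope and intercept are explicit polynomials in $m$.

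The third step is to find two indices $i < j$ in the remaining range $\{m, m+1, \ldots, 4m\}$ (resp.\ $\{m, \ldots, 3m\}$) such that the non-negativity constraints $A_{3i}(C) \ge 0$ and $A_{3j}(C) \ge 0$ (resp.\ $A_{2i}(C) \ge 0$ and $A_{2j}(C) \ge 0$) define disjoint intervals of admissible $\alpha$ for every $m$ above the threshold. This produces the desired contradiction, so no near-extremal code can exist in that range. Small values of $m$ below the threshold are handled directly by computation in \textsc{Magma} or \textsc{Mathematica}.

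The main obstacle is obtaining closed-form or sufficiently tight expressions for the $A_{3i}(C)$ (resp.\ $A_{2i}(C)$) as explicit affine functions of $\alpha$ and polynomials in $m$. A contour-integral formula \`a la Mallows--Sloane for the coefficients of $\phi^{3m-3j}\psi^{j}$, or a direct recursive analysis based on the MacWilliams identity, is the natural tool. The delicate part is selecting the pair $(i,j)$ optimally: a suboptimal choice yields only a weaker bound and fails to pin down the sharp thresholds $m \ge 147$ and $m \ge 38$ stated in Theorem~\ref{thm:HK}.
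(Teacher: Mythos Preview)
The paper does not prove Theorem~\ref{thm:HK}; it is quoted verbatim from Han and Kim~\cite{HK} and used as a black box (to bound the range of $m$ in the proofs of Theorems~\ref{thm:F3} and~\ref{thm:F4}). So there is no ``paper's own proof'' to compare against.

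That said, your outline is the correct general strategy and is essentially what Han and Kim do: parametrize the weight enumerator via the Gleason basis, solve for the near-extremal constraints to leave one free parameter, and then show that for $m$ beyond the threshold some coefficient $A_i$ is forced negative regardless of the parameter. However, what you have written is a plan, not a proof. You explicitly flag the ``main obstacle'' of obtaining workable expressions for the coefficients and of selecting the right pair $(i,j)$, and you do not carry out either step. Without those computations (or at least sharp asymptotic estimates of the Mallows--Sloane type for the relevant coefficients), there is no argument here---only an indication of where one would look. In particular, the specific thresholds $147$ and $38$ do not fall out of anything you have written; they require the actual analysis.
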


The above theorem is used in the proofs of 
Theorems~\ref{thm:F3} and \ref{thm:F4}.

\subsection{Designs in self-dual codes}

A \emph{$t$-$(v,k,\lambda)$ design} $\cD$ is a pair $(X,\cB)$,
where $X$ is a set of $v$ \emph{points} and a collection $\cB$ of
$k$-subsets of $X$ called \emph{blocks}
such that every $t$-subset of $X$ is contained in exactly $\lambda$ blocks.
A $t$-$(v,k,\lambda)$ design is simply called a $t$-design.
The number of blocks that contain a given
point is traditionally denoted by $r$, and the total number of
blocks is $b$.
If $\cD$ is a $1$-design, then it holds that $bk=vr$.

The Assmus--Mattson theorem shows that the supports of codewords of a fixed weight
in a given code are the blocks of a $t$-design under certain conditions~\cite{AM}.
We often identify a $t$-design with its set of blocks.
By the Assmus--Mattson theorem,
the supports of codewords of 
weight $i$ in a ternary extremal self-dual code of length $12m$
form a $5$-design for $i =3m+3, 3m+6,\ldots,6m+3$ (see~\cite{MS-bound}).
By the Assmus--Mattson theorem,
the supports of codewords of 
weight $i$ in a quaternary extremal Hermitian self-dual code of length $6m$
form a $5$-design for $i =2m+2, 2m+4, \ldots, 2\lfloor (3m+2)/2 \rfloor$ (see~\cite[Theorem~18]{MOSW}).

Recently, Miezaki, Munemasa and Nakasora~\cite{MMN} gave 
an analogue of the result for 
ternary near-extremal self-dual codes of length $12m$ and
quaternary near-extremal Hermitian 
self-dual codes of length $6m$.

\begin{thm}[{Miezaki, Munemasa and Nakasora~\cite{MMN}}]
\label{thm:MMN}
\begin{itemize}
\item[\rm (i)]
The supports of codewords of weight $i$ in 
a ternary near-extremal self-dual code of length $12m$
form a $1$-design for $i =3m,3m+3,\ldots, 6m-3$.
\item[\rm (ii)]
The supports of codewords of weight $i$ in 
a quaternary near-extremal Hermitian  self-dual code of length $6m$
form a $1$-design for $i =2m,2m+2,\ldots,3m-1$.
\end{itemize}
\end{thm}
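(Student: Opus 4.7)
The plan is to invoke Bachoc's theory of harmonic weight enumerators, which gives a clean algebraic criterion for the support-design property. For a code $C$ of length $n$ and a discrete harmonic polynomial $f$ of degree $t$ on $\{1,\dots,n\}$, set
\[
W_{C,f}(x,y) \;=\; \sum_{c \in C} \tilde f(\supp(c))\, x^{n-\wt(c)}\, y^{\wt(c)},
\]
where $\tilde f$ is the canonical extension of $f$ to subsets. The supports of weight-$i$ codewords in $C$ form a $t$-design if and only if the coefficient of $x^{n-i}y^{i}$ in $W_{C,f}$ vanishes for every harmonic $f$ of every degree $1,\dots,t$. For the present theorem I would only need the case $t=1$, so it suffices to prove that vanishing for every degree-$1$ harmonic $f$.

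Next I would set up the MacWilliams-type identity: for a ternary self-dual $C$ and a degree-$1$ harmonic $f$, $W_{C,f}$ is a relative invariant of the ternary MacWilliams group with an explicit character, and is divisible by $xy$ (since $\tilde f$ vanishes on the empty set and, after the transform, on the full set). The quotient $W_{C,f}/(xy)$ then lies in a finitely generated module $\mathcal{M}_1$ over the Gleason invariant ring $\mathbb{C}[\phi_4,\phi_{12}]$, with $\phi_4 = x^4 + 8xy^3$ and $\phi_{12}$ a standard degree-$12$ invariant, and the Molien series of $\mathcal{M}_1$ produces a short explicit list of generators. A parallel construction works in the quaternary Hermitian case, with invariants of degrees $2$ and $6$, giving a corresponding module $\mathcal{M}_1^{H}$. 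In length $n=12m$ (resp.\ $6m$), expanding $W_{C,f}$ in the chosen basis yields a parametric form involving only a bounded number of unknown coefficients.

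The near-extremal hypothesis $d=3m$ (resp.\ $d=2m$), together with the divisibility of all weights by $3$ (resp.\ $2$), forces the coefficients of $y, y^3, \dots, y^{3m-3}$ (resp.\ $y^2, y^4, \dots, y^{2m-2}$) in $W_{C,f}$ to vanish, producing a linear system on the free parameters. Solving it should pin down all parameters and simultaneously force every coefficient $y^i$ with $i=3m,3m+3,\dots,6m-3$ (resp.\ $i=2m,2m+2,\dots,3m-1$) to be zero, giving the $1$-design conclusion. The main obstacle is the invariant-theoretic bookkeeping: producing a convenient basis of $\mathcal{M}_1$ (and of $\mathcal{M}_1^{H}$), and then verifying by dimension counting that the upper end of the forced-vanishing range is sharply $6m-3$ in the ternary case and $3m-1$ in the quaternary Hermitian case, exactly matching the statement. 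Once the Molien-series computation is in hand, the remaining argument reduces to straightforward linear algebra on polynomial coefficients.
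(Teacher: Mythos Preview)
The paper under review does not prove this theorem at all: it is quoted as Theorem~\ref{thm:MMN} from Miezaki, Munemasa and Nakasora~\cite{MMN} and then used as a black box in the proofs of Lemmas~\ref{lem:F3} and~\ref{lem:F4}. So there is no ``paper's own proof'' to compare your attempt against.

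That said, your sketch via Bachoc's harmonic weight enumerators is the standard route to results of this type and is essentially how the original reference~\cite{MMN} proceeds. The outline---writing $W_{C,f}$ as a relative invariant, identifying the module over the Gleason ring in which the degree-$1$ harmonic enumerators live, and then using the vanishing of low-weight coefficients forced by $d=3m$ (resp.\ $d=2m$) to kill the free parameters---is correct in spirit. The only point where you are vague is precisely the one you flag: the bookkeeping that shows the forced vanishing reaches exactly up to $6m-3$ (resp.\ $3m-1$). This is where the actual work lies, and it is not a formality; one must compute the Hilbert series of the relevant module and check that the number of imposed linear conditions matches the number of free coefficients so that the system is determined. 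Until that count is carried out explicitly you have a plausible plan rather than a proof, but the plan is sound.
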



The above theorem motivates our study of 
ternary near-extremal self-dual codes of length $12m$
and quaternary near-extremal Hermitian self-dual codes of length $6m$.
The theorem is also used in the proofs of 
Lemmas~\ref{lem:F3} and \ref{lem:F4}.

\subsection{Gleason type theorems}

It is well known that the possible weight enumerators of ternary self-dual codes
can be determined by the Gleason type theorem (see~\cite{MS-bound}).
The weight enumerator $W$ of a ternary self-dual code of length $n$
is written as
\begin{equation}\label{eq:F3:G}
W = \sum_{j=0}^{\lfloor \frac{n}{12} \rfloor} 
a_j(1+8y^3)^{\frac{n}{4}-3j}(y^3(1-y^3)^3)^j,  
\end{equation}
using some integers $a_j$.
As an example, we give the weight enumerator $W_{3,12m}$ of
a  ternary near-extremal self-dual code of length $12m$ for $m=1,2,3$:
\begin{equation}\label{eq:F3:W}
\begin{split}
W_{3,12}=&
1
+ \alpha y^3
+( 264 - 3 \alpha)y^6
+( 440 + 3 \alpha)y^9
+( 24 - \alpha)y^{12},
\\
W_{3,24}=&
1
+ \alpha y^6
+( 4048 - 6 \alpha)y^9
+( 61824 + 15 \alpha)y^{12}
\\&
+( 242880 - 20 \alpha)y^{15}
+( 198352 + 15 \alpha)y^{18}
\\&
+( 24288 - 6 \alpha)y^{21}
+( 48 + \alpha)y^{24},
\\
W_{3,36}=&
1
+ \alpha y^9
+( 42840 - 9 \alpha)y^{12}
+( 1400256 + 36 \alpha)y^{15}
\\&
+( 18452280- 84 \alpha)y^{18}
+( 90370368 + 126 \alpha)y^{21}
\\&
+( 162663480 - 126 \alpha)y^{24}
+( 97808480 + 84 \alpha)y^{27}
\\&
+( 16210656 - 36 \alpha)y^{30}
+( 471240 + 9 \alpha)y^{33}
+( 888 - \alpha)y^{36},
\end{split}
\end{equation}
where $\alpha$ is a positive integer.
This calculation was done by \textsc{Mathematica}~\cite{Mathematica}.

Also, it is well known that the possible weight enumerators of  quaternary Hermitian  self-dual codes
can be determined by the Gleason type theorem (see~\cite{MMS}). 
The weight enumerator $W$ of a  quaternary Hermitian  self-dual code of length $n$
is written as
\begin{equation}\label{eq:F4:G}
{W = \sum_{j=0}^{\lfloor \frac{n}{6} \rfloor} 
a_j(1+3y^2)^{\frac{n}{2}-3j}(y^2(1-y^2)^2)^j,}  
\end{equation}
using some integers $a_j$.
As an example, we give the weight enumerator $W_{4,6m}$ of
a quaternary near-extremal Hermitian self-dual code of length $6m$
for $m=1,2,3$:
\begin{equation}\label{eq:F4:W}
\begin{split}
W_{4,6}=&
1 + \alpha y^2
+( 45 - 2 \alpha) y^4
+( 18 + \alpha) y^6,
\\
W_{4,12}=&
1 + \alpha y^4
+( 396 - 4 \alpha) y^6
+( 1485 + 6 \alpha) y^8
\\&
+( 1980 - 4 \alpha) y^{10}
+( 234 + \alpha) y^{12},
\\
W_{4,18}=&
1
+ \alpha y^6
+( 2754 - 6 \alpha y^8
+( 18360 + 15 \alpha) y^{10}
\\&
+( 77112 - 20 \alpha) y^{12}
+( 110160 + 15 \alpha) y^{14}
\\&
+( 50949 - 6 \alpha) y^{16}
+( 2808 +  \alpha) y^{18},
\end{split}
\end{equation}
where $\alpha$ is a positive integer.
This calculation was done by \textsc{Mathematica}.

\subsection{Methods for constructing self-dual codes}
\label{Sec:const}

We give methods for constructing self-dual codes used in  this paper.

Throughout this paper,
let $I_n$ denote the identity matrix of order $n$
and let
$A^T$ denote the transpose of a matrix $A$.
An $n \times n$ \emph{$\mu$-circulant} matrix has the following form:
\[
\left( \begin{array}{cccccc}
r_0&r_1&r_2& \cdots &r_{n-2} &r_{n-1}\\
\mu r_{n-1}&r_0&r_1& \cdots &r_{n-3}&r_{n-2} \\
\mu r_{n-2}&\mu r_{n-1}&r_0& \cdots &r_{n-4}&r_{n-3} \\
\vdots &\vdots & \vdots &&\vdots& \vdots\\
\mu r_1&\mu r_2&\mu r_3& \cdots&\mu r_{n-1}&r_0
\end{array}
\right),
\]
where $\mu \ne 0$.
If $\mu =1$ and $-1$,
then the matrix is called \emph{circulant} and \emph{negacirculant}, respectively.

If $A$ and $B$ are $n \times n$ circulant (resp.\ negacirculant) matrices, then
a $[4n,2n]$ code over
$\FF_q$ having the following generator matrix:
\begin{equation} \label{eq:4}
\left(
\begin{array}{ccc@{}c}
\quad & {\Large I_{2n}} & \quad &
\begin{array}{cc}
A & B \\
-B^T & A^T
\end{array}
\end{array}
\right)
\end{equation}
is called a \emph{four-circulant} (resp.\ \emph{four-negacirculant}) code.
If $A$ is an $(n-1) \times (n-1)$ circulant matrix,
then a $[2n,n]$ code over
$\FF_q$ having the following generator matrix:
\begin{equation}\label{eq:bdcc}
\left(\begin{array}{ccccccccc}
{} & {} & {}      & {} & {} & 0& 1  & \cdots & 1 \\
{} & {} & {}      & {} & {} &1    & {} & {}     &{} \\
{} & {} & I _n      & {} & {} &\vdots & {} & A    &{} \\
{} & {} & {}      & {} & {} & 1     & {} &{}      &{} \\
\end{array}\right),
\end{equation}
is called a \emph{bordered double circulant} code.
If $A$ is an $n \times n$ $\mu$-circulant matrix, then
a $[2n,n]$ code over
$\FF_q$ having the following generator matrix:
\begin{equation} \label{eq:w}
\left(
\begin{array}{cc}
I_n & A \\
\end{array}
\right)
\end{equation}
is called a \emph{$\mu$-circulant} code or a quasi-twisted code.
Many 
four-circulant self-dual codes,
four-negacirculant self-dual codes,
bordered double circulant codes  and $\mu$-circulant codes
  having large minimum weights are known
(see e.g., \cite{G00}, \cite{GH}, \cite{GHM} and \cite{HHKK}).

For a code $C$ of length $n$ over $\FF_q$ and a vector $x \in \FF_q^n$, 
we denote by $\langle x \rangle$ the code generated by $x$, and
we denote by $\langle C, x \rangle$ the code generated by $C$ and $x$.
Let $C$ be a ternary self-dual code of length $n$ and let $x$ be a vector of $\FF_3^n$.
If $\langle x,x\rangle_E=0$ and $x \not\in C$, then
\begin{equation}\label{eq:nei}
\Nei_E(C,x)=
\langle C \cap \langle x \rangle^{\perp}, x \rangle
\end{equation}
is a ternary self-dual code.
Let $D$ be a quaternary Hermitian self-dual code of length $n$ and 
let $y$ be a vector of $\FF_4^n$.
If $\langle y,y\rangle_H=0$ and $y \not\in D$, then
\begin{equation}\label{eq:neiH}
\Nei_H(D,y)=
\langle D\cap \langle y \rangle^{\perp_H}, y \rangle
\end{equation}
is a quaternary Hermitian self-dual code.
These codes are often called \emph{neighbors} of $C$ and $D$, respectively.
For all ternary self-dual codes $\Nei_E(C,x)$ and quaternary Hermitian self-dual codes
$\Nei_H(D,y)$ constructed in this paper, the vectors $x$ and $y$ have forms
$(0,0,\ldots,0,\hat{x})$ and $(0,0,\ldots,0,\hat{y})$, respectively,
where $\hat{x} \in \FF_3^{n/2}$ and $\hat{y} \in \FF_4^{n/2}$.
Throughout this paper, we simply write $\Nei_E(C,x)$ and $\Nei_H(D,y)$ by
$\Nei_E(C,\hat{x})$ and $\Nei_H(D,\hat{y})$, respectively.

\subsection{Largest minimum weights $d_3(12m)$ and $d^H_4(6m)$}

Let $d_3(n)$ denote the largest minimum weight  among ternary self-dual codes of 
length $n$.
For $m =3,4,\ldots,10$,
we list the current information on $d_3(12m)$ in Table~\ref{Tab:d:F3}
along with the references.
Remark that there is a ternary extremal self-dual code of length $12m$ for
$m =1,2,3,4,5$.
Throughout this paper,
$QR_{12m}$ and $P_{12m}$ denote the extended quadratic residue code and
the Pless symmetry code of length $12m$, respectively.

\begin{table}[thb]
\caption{Largest minimum weights $d_3(12m)$}
\label{Tab:d:F3}
\begin{center}
{\small
\begin{tabular}{c|c|l||c|c|l}
\noalign{\hrule height0.8pt}
$m$ & $d_3(12m)$ &  \multicolumn{1}{c||}{Reference} &
$m$ & $d_3(12m)$ &\multicolumn{1}{c}{Reference} \\
\hline
3 & $12$     & $P_{36}$            &7 & $21$ or $24$ & $QR_{84}, P_{84}$ \\
4 & $15$     & $QR_{48}, P_{48}$   &8 & $24$     & $P_{96}$           \\
5 & $18$     & $QR_{60}, P_{60}$, \cite{NV} &9 & $27$ or $30$ & \cite[Table~3]{GG}, \cite{NV} \\
6 & $18$     & $QR_{72}$, Section~\ref{Sec:F3}
           &10& $24, 27$ or $30$ & \cite[Table~3]{GG}  \\
\noalign{\hrule height0.8pt}
\end{tabular}
}
\end{center}
\end{table}

\begin{table}[thbp]
\caption{Largest minimum weights $d^H_4(6m)$}
\label{Tab:d:F4}
\begin{center}
{\small
\begin{tabular}{c|c|l||c|c|l}
\noalign{\hrule height0.8pt}
$m$ & $d^H_4(6m)$ & \multicolumn{1}{c||}{Reference}  &
$m$ & $d^H_4(6m)$ & \multicolumn{1}{c}{Reference} \\
\hline
 4 & $8$ & \cite{LP}, Section~\ref{Sec:F4}  & 8 & $14, 16$ or $18$ &\cite[Table~5]{GG} \\
 5 & $12$ & \cite{MOSW} &9 & $16, 18$ or $20$ &\cite[Table~5]{GG} \\
 6 & $12$ or $14$ & \cite{G00}, Section~\ref{Sec:F4} &10 & $16, 18, 20$ or $22$ &\cite[Table~5]{GG} \\
 7 & $12, 14$ or $16$ &\cite[Table~5]{GG} &&&\\
\noalign{\hrule height0.8pt}
\end{tabular}
}
\end{center}
\end{table}

Let $d^H_4(n)$ denote the largest minimum weight  among 
quaternary Hermitian self-dual codes of length $n$.
For $m =4,5,\ldots,10$,
we list the current information on $d^H_4(6m)$ in Table~\ref{Tab:d:F4}
along with the references.
It is currently not known whether there is a quaternary near-extremal Hermitian self-dual code
of length $6m$ or not for $m=7,8,9,10$ \cite[Table~5]{GG}.

\section{Restrictions on weight enumerators of self-dual codes}\label{Sec:Res}


In this section, we give restrictions on the weight enumerators of 
ternary near-extremal self-dual codes of length $12m$ 
and quaternary near-extremal Hermitian self-dual codes of length $6m$.
Furthermore, we give a divisibility property of the coefficients of the weight enumerators of 
ternary near-extremal self-dual codes of length $12m$  (Theorem~\ref{thm:F3})
and quaternary near-extremal Hermitian self-dual codes of length $6m$
(Theorem~\ref{thm:F4}).
Theorems~\ref{thm:F3} and \ref{thm:F4} are the main results of this paper.

\subsection{Ternary near-extremal self-dual codes of length $12m$}
\label{Sec:F3:main}

\begin{lem}\label{lem:F3}
Let $C$ be a ternary near-extremal self-dual code of length $12m$.
Let $A_{3m}$ denote the number of codewords of weight $3m$ in $C$.
Then $A_{3m} \equiv  0 \pmod 8$.
\end{lem}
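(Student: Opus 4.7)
The plan is to combine the 1-design property from Theorem~\ref{thm:MMN}(i) with the observation that, at the minimum weight, distinct codewords sharing a support must be scalar multiples of one another. We may assume $A_{3m} \ne 0$, since the case $A_{3m} = 0$ is trivial.

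First I would apply Theorem~\ref{thm:MMN}(i) with $i = 3m$: the supports of weight-$3m$ codewords of $C$ form a $1$-$(12m,\,3m,\,r)$ design for some positive integer $r$. Writing $b$ for the number of blocks, the standard identity $bk = vr$ for $1$-designs specializes to $3m \cdot b = 12m \cdot r$, whence $b = 4r$.

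Next I would show that each block of this design is the support of exactly two codewords. Suppose $c, c' \in C$ both have weight $3m$ and common support $B$. Choose any $i \in B$ and put $\alpha = c_i (c'_i)^{-1} \in \FF_3 \setminus \{0\}$, where $c_i$ denotes the $i$-th coordinate of $c$. Then $c - \alpha c'$ vanishes at coordinate $i$ and outside $B$, so $\wt(c - \alpha c') \le 3m - 1$. Since $C$ is near-extremal with minimum weight $3m$, this forces $c - \alpha c' = 0$, i.e., $c = \alpha c'$ with $\alpha \in \{1,2\}$. Hence the only codewords of weight $3m$ with support $B$ are $c'$ and $2c' = -c'$, so $A_{3m} = 2b = 8r \equiv 0 \pmod 8$.

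The argument is short and I do not anticipate any substantial obstacle. Conceptually, the divisor $8$ factors as $2 \cdot 4$: the $2$ comes from the free $\FF_3^\times$-action on nonzero codewords (justified here by the minimum-weight hypothesis), while the $4$ comes from $v/k = 12m/3m = 4$ in the $1$-design identity. The only subtlety worth flagging is ensuring that the support-uniqueness argument genuinely uses $3m$ being the minimum weight, which is precisely what near-extremality supplies.
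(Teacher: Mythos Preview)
Your proof is correct and follows essentially the same approach as the paper: both apply Theorem~\ref{thm:MMN}(i) to obtain the $1$-design, show that two minimum-weight codewords with the same support must be $\FF_3^\times$-multiples (so each support corresponds to exactly two codewords), and then use the identity $bk=vr$ with $v=12m$, $k=3m$ to conclude $A_{3m}=2b=8r$. The only differences are cosmetic: the paper phrases the support-uniqueness step via $\min\{\wt(x+y),\wt(x+2y)\}<3m$ rather than fixing a coordinate, and it organizes the counting by first partitioning $C(3m)$ into a set $X$ and its negatives.
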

\begin{proof}
Let $C(3m)$ be the set of codewords of weight $3m$ in $C$.
There is a set $X$ such that
\[
C(3m)=X \cup Y \text{ and }X \cap Y =\emptyset,
\]
where
$Y=\{2x \mid x \in X\}$.
Let $x$ and $y$ be codewords of $C(3m)$ such that $\supp(x)=\supp(y)$.
If $x \not\in  \{y,2y\}$, then $0< \min\{\wt(x+y),\wt(x+2y)\}<3m$.
This implies that 
\[
|X|=|\{\supp(x) \mid x \in X\}|=\frac{A_{3m}}{2}.
\]
By Theorem~\ref{thm:MMN}, 
the supports of codewords of weight $3m$ in $C$
form a $1$-design.
Thus, the supports of codewords in $X$ form a $1$-design.
Hence, it holds that $\frac{A_{3m}}{2} 3m=12m r$, where $r$ is a positive integer.
Therefore, $A_{3m} \equiv 0 \pmod 8$.
\end{proof}


The above lemma gives a restriction on the weight enumerators of 
ternary near-extremal self-dual codes of length $12m$.
Furthermore, we have the following divisibility property of the coefficients of the 
weight enumerators of these codes.


\begin{thm}\label{thm:F3}
If there is a  ternary near-extremal self-dual code of length $12m$, then
$A_{3i} \equiv  0 \pmod 8$ for $i=m,m+1,\ldots,4m$,
where $A_{3i}$ denotes the number of codewords of weight $3i$ in the code.
\end{thm}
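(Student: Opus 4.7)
The plan is to combine Lemma~\ref{lem:F3} with a modulo-$8$ analysis of the Gleason-type expansion~\eqref{eq:F3:G}. Let $C$ be a ternary near-extremal self-dual code of length $12m$ and write its weight enumerator as
\[
W_C \;=\; \sum_{j=0}^{m} a_j\,(1+8y^3)^{3m-3j}\bigl(y^3(1-y^3)^3\bigr)^j.
\]
The $j$-th summand has lowest-order term $a_j y^{3j}$ with coefficient $a_j$, so the matrix expressing $(A_0,A_3,\ldots,A_{3m})$ in $(a_0,\ldots,a_m)$ is upper unitriangular with integer entries. Hence each $a_j\in\ZZ$, and reducing modulo~$8$ is meaningful.

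Since $1+8y^3\equiv 1\pmod 8$, the expansion collapses modulo~$8$ to
\[
W_C \;\equiv\; \sum_{j=0}^{m} a_j\,y^{3j}(1-y^3)^{3j}\pmod 8.
\]
Substituting $z=y^3$ and setting $P_j(z)=z^j(1-z)^{3j}$, one has $[z^k]P_k=1$ and $[z^k]P_j=0$ for $j>k$. The near-extremality of $C$ gives $A_0=1$ and $A_{3k}=0$ for $k=1,\ldots,m-1$. Matching the coefficients of $z^0,z^1,\ldots,z^{m-1}$ in the mod-$8$ congruence and solving recursively forces $a_0\equiv 1\pmod 8$ and $a_k\equiv 0\pmod 8$ for $k=1,\ldots,m-1$. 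Extracting the coefficient of $z^m$ then yields $a_m\equiv A_{3m}(C)\pmod 8$, which vanishes modulo~$8$ by Lemma~\ref{lem:F3}.

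With $a_0\equiv 1$ and $a_1\equiv\cdots\equiv a_m\equiv 0\pmod 8$, the Gleason expansion reduces modulo~$8$ to $W_C\equiv (1+8y^3)^{3m}\equiv 1\pmod 8$ as a polynomial in $y$, so every coefficient $A_{3i}(C)$ with $i\geq 1$ is divisible by~$8$. This covers the stated range $i=m,m+1,\ldots,4m$; the same argument, applied with $A_{3m}=0$, also gives the extremal corollary announced in the introduction.

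The argument is essentially formal, with no case analysis in $m$ and no appeal to the Han--Kim bound (Theorem~\ref{thm:HK}). The only delicate point is identifying the right ``triangular'' structure to exploit modulo~$8$: once one observes that $(1+8y^3)^N\equiv 1\pmod 8$ for every $N\ge 0$ and that the basis $\{(1+8y^3)^{3m-3j}(y^3(1-y^3)^3)^j\}_j$ becomes the unitriangular basis $\{P_j(z)\}_j$ after reduction, the theorem falls out of Lemma~\ref{lem:F3} by a single triangular back-substitution.
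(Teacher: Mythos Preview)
Your proof is correct and takes a genuinely different route from the paper's. The authors argue numerically: invoking the Han--Kim bound (Theorem~\ref{thm:HK}) to restrict to $m\le 146$, they compute $W_{3,12m}$ explicitly in each case, write $A_{3i}=s_{3i}+t_{3i}\alpha$ with $\alpha=A_{3m}$, and verify by machine that every $s_{3i}\equiv 0\pmod 8$; Lemma~\ref{lem:F3} then finishes. Your argument is entirely structural: reducing the Gleason expansion modulo~$8$ collapses $(1+8y^3)^{3m-3j}$ to~$1$, the resulting basis $\{z^j(1-z)^{3j}\}$ is unitriangular, and near-extremality together with Lemma~\ref{lem:F3} forces $a_0\equiv 1$, $a_1\equiv\cdots\equiv a_m\equiv 0\pmod 8$, whence $W_C\equiv 1\pmod 8$. (A tiny quibble: the change-of-basis matrix from $(a_j)$ to $(A_{3k})$ is lower unitriangular in the natural ordering, not upper, but this does not affect the argument.)

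Your approach is cleaner: it is uniform in~$m$, requires no computer verification, and does not depend on Theorem~\ref{thm:HK} at all; it also yields the extremal corollary immediately by setting $A_{3m}=0$. The paper's approach, while less elegant as a proof, has the side benefit of producing the explicit weight enumerators that are used throughout Sections~\ref{Sec:F3} and~\ref{Sec:F4}.
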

\begin{proof}
By Theorem~\ref{thm:HK}, it is sufficient to consider for $m \le 146$.
Our proof is numerical.
Let $W_{3,12m}$ be the weight enumerator of a  ternary near-extremal self-dual code of length $12m$.
We numerically calculated the weight enumerator $W_{3,12m}$ using $A_{3m}$, say $\alpha=A_{3m}$ 
by the Gleason type theorem~\eqref{eq:F3:G} for each $m =1,2,\ldots,146$.
This calculation was done by \textsc{Mathematica}~\cite{Mathematica}.
For $m=1,2,3$, the weight enumerators $W_{3,12m}$ are given in~\eqref{eq:F3:W}.
For $m=4,5,\ldots,10$, 
the weight enumerators $W_{3,12m}$ are given in Tables~\ref{Tab:Ai},
\ref{Tab:Ai36}, \ref{Tab:Aiall1} and \ref{Tab:Aiall2}.
For the remaining $m$,
the weight enumerators $W_{3,12m}$ can be obtained electronically from~\cite{AH}.
For each $m =1,2,\ldots,146$,
$A_{3i}$ can be written as $s_{3i}+t_{3i}\alpha$, using some integers $s_{3i},t_{3i}$
$(i =m+1,m+2,\ldots,4m)$.
For each $m =1,2,\ldots,146$,
we numerically verified by \textsc{Mathematica}
that $s_{3i} \equiv  0 \pmod 8$ $(i =m+1,m+2,\ldots,4m)$.
By Lemma~\ref{lem:F3}, $\alpha \equiv  0 \pmod 8$. 
This completes the proof.
\end{proof}


All ternary self-dual codes were classified in~\cite{CPS}, \cite{HM}, \cite{MPS} 
and \cite{PSW} for lengths up to $24$.
For $m=1,2$, 
we list in Table~\ref{Tab:F3:small}
the numbers $N_{12m}$ of inequivalent ternary near-extremal self-dual codes of length $12m$
along with the references.
We also list the values
$\alpha$ in $W_{3,12m}$ for which there is a ternary near-extremal self-dual code of length $12m$.

\begin{table}[thb]
\caption{Small ternary near-extremal self-dual codes of length $12m$}
\label{Tab:F3:small}
\centering
\medskip
{\small
\begin{tabular}{c|c|c|l}
\noalign{\hrule height0.8pt}
$m$ & Reference & $N_{12m}$ & \multicolumn{1}{c}{$\alpha$ in $W_{3,12m}$} \\ 
\hline
$1$ &\cite{MPS} & $2$ & $8,24$\\
$2$ &\cite{HM} & $166$ & $8\beta\ (\beta=2,3,\ldots,16,18, 21, 24, 25, 30, 36, 66)$ \\
\noalign{\hrule height0.8pt}
\end{tabular}
}
\end{table}

As a consequence,
Theorem~\ref{thm:F3} gives 
a divisibility property of the coefficients of the weight enumerators of 
ternary extremal self-dual codes of length $12m$.

\begin{cor}
If there is a  ternary extremal self-dual code of length $12m$, then
$B_{3i} \equiv  0 \pmod 8$ for  $i=m+1,m+2,\ldots,4m$,
where $B_{3i}$ denotes the number of codewords of weight $3i$ in the code.
\end{cor}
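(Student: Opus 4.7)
The plan is to deduce the corollary directly from Theorem~\ref{thm:F3} via the observation that an extremal code is, from the point of view of the Gleason type identity, a near-extremal code with the free parameter set to zero. A ternary extremal self-dual code $C$ of length $12m$ has minimum weight $3m+3$, so in particular $A_{3m}(C)=0$. The Gleason type parametrization~\eqref{eq:F3:G} of ternary self-dual weight enumerators of length $12m$ uses $m+1$ coefficients $a_0,\ldots,a_m$; the $m$ linear conditions $A_0=1$ and $A_{3k}=0$ for $k=1,\ldots,m-1$ imposed by near-extremality leave exactly one free parameter $\alpha=A_{3m}$ (this is precisely how $W_{3,12m}$ was written in the proof of Theorem~\ref{thm:F3}), and the additional extremal condition $A_{3m}=0$ forces $\alpha=0$ and determines the weight enumerator uniquely. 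So the first step in the plan is simply to record that the weight enumerator of any extremal code of length $12m$ coincides with $W_{3,12m}|_{\alpha=0}$.

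With this identification, if we write $A_{3i}(C)=s_{3i}+t_{3i}\alpha$ in the notation of the proof of Theorem~\ref{thm:F3}, then $B_{3i}=s_{3i}$. That proof has already verified, via \textsc{Mathematica}, that $s_{3i}\equiv 0\pmod 8$ for every $m=1,2,\ldots,146$ and every $i=m+1,\ldots,4m$. Crucially, this check is a purely formal statement about the coefficients of the polynomial~\eqref{eq:F3:G} and does not require the existence of any near-extremal code. Combining the two points yields $B_{3i}\equiv 0\pmod 8$ at once in the stated range.

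The one point deserving care is the range of $m$: Theorem~\ref{thm:HK}(i) rules out near-extremal codes only for $m\ge 147$ and says nothing about extremal codes in that range. Since the $s_{3i}$ are explicit rational expressions in $m$ extracted from~\eqref{eq:F3:G}, however, the \textsc{Mathematica} divisibility check extends routinely to any additional $m$ for which an extremal code might conceivably exist. I expect this bookkeeping to be the only modest obstacle; the conceptual content of the corollary lies entirely in recognizing the extremal weight enumerator as the $\alpha=0$ specialization of the near-extremal one.
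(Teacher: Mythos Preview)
Your approach is essentially the paper's: recognize that the extremal weight enumerator is the $\alpha=0$ specialization of the near-extremal one, so that $B_{3i}=s_{3i}$, and then quote the divisibility $s_{3i}\equiv 0\pmod 8$ already verified numerically in the proof of Theorem~\ref{thm:F3}.

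The one place where you diverge is the handling of the range of $m$, and your proposed fix there does not work as stated. The $s_{3i}$ are not presented in the paper as closed-form rational expressions in $m$; for each fixed $m$ they are integers computed separately from~\eqref{eq:F3:G}, and that computation was carried out only for $m\le 146$. Asserting that the check ``extends routinely'' to further $m$ is not a proof, and absent an a~priori upper bound on $m$ you would in principle face infinitely many cases. The paper closes this gap cleanly by citing Zhang's result~\cite{Zhang} that no ternary extremal self-dual code of length $12m$ exists for $m\ge 70$; hence every $m$ for which the hypothesis of the corollary can hold already lies within the range $m\le 146$ covered by the existing computation. Adding this one citation completes your argument and makes it coincide with the paper's.
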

\begin{proof}
Note that there is no ternary extremal self-dual code of length $12m$
for $m \ge 70$~\cite{Zhang}.
Let $A_{3i}$ denote the number of codewords of weight $3i$ in 
a  ternary near-extremal self-dual code of length $12m$.
As described in the proof of Theorem~\ref{thm:F3}, 
$A_{3i}$ can be written as $s_{3i}+t_{3i}A_{3m}$, using some integers $s_{3i},t_{3i}$,
and $s_{3i} \equiv  0 \pmod 8$ $(i =m+1,m+2,\ldots,4m)$.
It is trivial that $B_{3i}=s_{3i}$ $(i =m+1,m+2,\ldots,4m)$.
\end{proof}

\subsection{Quaternary near-extremal Hermitian self-dual codes of length $6m$}
\label{Sec:F4:main}

\begin{lem}\label{lem:F4}
Let $D$ be a quaternary near-extremal Hermitian self-dual code of length $6m$.
Let $A_{2m}$ denote the number of codewords of weight $2m$ in $D$.
Then ${A_{2m} \equiv 0 \pmod 9}$.
\end{lem}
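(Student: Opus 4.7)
The plan is to parallel the proof of Lemma~\ref{lem:F3}, replacing the two-element scalar group $\{1,2\}\subset\FF_3$ with the three-element group $\{1,\ww,\vv\}\subset\FF_4$. Since for every codeword $x\in D$ of weight $2m$ the three codewords $x,\ww x,\vv x$ are distinct and share the same support, we expect the set $D(2m)$ of weight-$2m$ codewords of $D$ to partition into orbits of size $3$ under multiplication by nonzero scalars; the $1$-design on supports from Theorem~\ref{thm:MMN}(ii), together with the identity $bk=vr$, should then force $A_{2m}\equiv 0\pmod 9$.

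Concretely, I would fix a set $X\subset D(2m)$ such that
\[
D(2m)=X\cup \ww X\cup \vv X,
\]
the union being disjoint, so $|X|=A_{2m}/3$. The only nontrivial step is to verify that distinct elements of $X$ have distinct supports, equivalently that $\supp(x)=\supp(y)$ with $x,y\in D(2m)$ forces $y\in\{x,\ww x,\vv x\}$. Assuming on the contrary that $y-\lambda x\neq 0$ for every $\lambda\in\{1,\ww,\vv\}$, each $y-\lambda x$ is a nonzero codeword of $D$ and hence of weight at least $2m$ by near-extremality, so
\[
\wt(y-x)+\wt(y-\ww x)+\wt(y-\vv x)\ge 6m.
\]
On the other hand, every $y-\lambda x$ is supported inside $\supp(x)=\supp(y)$, and at each position $i\in\supp(x)$ exactly one value $\lambda=y_i/x_i\in\{1,\ww,\vv\}$ annihilates the $i$-th coordinate; hence the same sum equals $3\cdot 2m-2m=4m$, a contradiction.

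Once this is in hand, the supports of the $A_{2m}/3$ elements of $X$ are pairwise distinct $2m$-subsets of $\{1,2,\ldots,6m\}$, and by Theorem~\ref{thm:MMN}(ii) applied with $i=2m$ they form the block set of a $1$-$(6m,2m,r)$ design for some positive integer $r$. The identity $bk=vr$ then reads $(A_{2m}/3)\cdot 2m=6m\cdot r$, which rearranges to $A_{2m}=9r\equiv 0\pmod 9$, completing the proof.

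The main (though rather mild) obstacle compared with Lemma~\ref{lem:F3} is the counting step above: in the ternary proof it suffices to compare the two codewords $x+y$ and $x+2y$, whereas here three codewords $y-\lambda x$ must be balanced simultaneously against the near-extremal lower bound, and the precise coordinatewise bookkeeping of how many of them can vanish at a given position is exactly what produces the contradiction.
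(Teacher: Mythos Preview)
Your proof is correct and follows essentially the same approach as the paper: partition $D(2m)$ into scalar orbits of size $3$, show that distinct orbits have distinct supports, and then apply Theorem~\ref{thm:MMN}(ii) together with $bk=vr$. Your counting argument (summing the three weights to get $4m<6m$) is just a more explicit version of the paper's terse assertion that $0<\min\{\wt(x+y),\wt(x+\ww y),\wt(x+\vv y)\}<2m$.
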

\begin{proof}
Let $D(2m)$ be the set of codewords of weight $2m$ in $D$.
There is a set $X$ such that
\[
D(2m)=X \cup Y\cup Z \text{ and }X \cap Y =X \cap Z = Y \cap Z =\emptyset,
\]
where
$Y=\{\omega x \mid x \in X\}$ and $Z=\{\omega^2 x \mid x \in X\}$.
Let $x$ and $y$ be codewords of $D(2m)$ such that $\supp(x)=\supp(y)$.
If $x \not\in \{y,\ww y,\vv y\}$, then 
$0 < \min\{\wt(x+y), \wt(x+\ww y),\wt(x+\vv y)\}< 2m$.
This implies that 
\[
|X|=|\{\supp(x) \mid x \in X\}|=\frac{A_{2m}}{3}.
\]
By Theorem~\ref{thm:MMN}, 
the supports of codewords of weight $2m$ in $D$
form a $1$-design.
Thus, the supports of codewords in $X$ form a $1$-design.
Hence, it holds that $\frac{A_{2m}}{3} 2m=6m r$, where $r$ is a positive integer.
Therefore, ${A_{2m} \equiv 0 \pmod 9}$.
\end{proof}


The above lemma gives a restriction on the weight enumerators of 
quaternary near-extremal Hermitian self-dual codes of length $6m$.
Furthermore, we have the following divisibility property of the coefficients of the 
weight enumerators of these codes.


\begin{thm}\label{thm:F4}
If there is a  quaternary near-extremal Hermitian self-dual code of length $6m$,
then ${A_{2i} \equiv  0 \pmod 9}$ for ${i=m,m+1,\ldots,3m}$,
where $A_{2i}$ denotes the number of codewords of weight $2i$ in the code.
\end{thm}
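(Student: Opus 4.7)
The plan is to mirror the strategy used in the proof of Theorem~\ref{thm:F3}. By Theorem~\ref{thm:HK}(ii), no quaternary near-extremal Hermitian self-dual code of length $6m$ exists for $m \ge 38$, so it suffices to check the claim for $m \le 37$, reducing the problem to finitely many cases that can be handled numerically.

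For each $m = 1, 2, \ldots, 37$, I would use the Gleason type expression~\eqref{eq:F4:G} to write the weight enumerator $W_{4,6m}$ of a hypothetical quaternary near-extremal Hermitian self-dual code of length $6m$ as a one-parameter family in $\alpha = A_{2m}$. Because the minimum weight is $2m$, the constraints $A_2 = A_4 = \cdots = A_{2m-2} = 0$ determine all but one of the free coefficients $a_j$ in~\eqref{eq:F4:G}, leaving $\alpha$ as the single remaining parameter, exactly as in the ternary cases $W_{4,6}$, $W_{4,12}$, $W_{4,18}$ displayed in~\eqref{eq:F4:W}. Consequently, each coefficient $A_{2i}$ for $i = m+1, m+2, \ldots, 3m$ takes the form
\[
A_{2i} = s_{2i} + t_{2i}\,\alpha
\]
for integers $s_{2i}, t_{2i}$ computed by \textsc{Mathematica}.

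The key numerical step is then to verify, for every $m$ in $\{1, 2, \ldots, 37\}$ and every $i \in \{m+1, \ldots, 3m\}$, that $s_{2i} \equiv 0 \pmod 9$. Once this is confirmed, Lemma~\ref{lem:F4} supplies $\alpha \equiv 0 \pmod 9$, and combining these two congruences yields $A_{2i} \equiv 0 \pmod 9$ for $i = m, m+1, \ldots, 3m$, which is exactly the conclusion of the theorem.

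The main obstacle is not mathematical but computational: one must organize the parametric weight enumerators for all $37$ lengths and check the required congruence for every coefficient, the same kind of large table produced for the ternary case and referenced as~\cite{AH}. No new structural argument beyond Lemma~\ref{lem:F4} and the finiteness result of Theorem~\ref{thm:HK}(ii) should be needed; the proof is essentially a systematic verification that each ``constant part'' $s_{2i}$ of the weight enumerator is divisible by $9$, which happens to hold uniformly across the finite range under consideration.
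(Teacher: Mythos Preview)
Your proposal is correct and follows essentially the same approach as the paper's own proof: reduce to finitely many $m$ via Theorem~\ref{thm:HK}(ii), express each $A_{2i}$ as $s_{2i}+t_{2i}\alpha$ using the Gleason-type theorem~\eqref{eq:F4:G}, numerically verify $s_{2i}\equiv 0\pmod 9$ for all relevant $i$ and $m\le 37$, and then invoke Lemma~\ref{lem:F4} to conclude.
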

\begin{proof}
By Theorem~\ref{thm:HK}, it is sufficient to consider for $m \le 37$.
Our proof is numerical.
Let $W_{4,6m}$ be the weight enumerator of 
a  quaternary near-extremal Hermitian self-dual code of length $6m$.
We numerically calculated by \textsc{Mathematica} the weight enumerator
$W_{4,6m}$ using $A_{2m}$, say $\alpha=A_{2m}$
by the Gleason type theorem~\eqref{eq:F4:G} for each $m =1,2,\ldots,37$.
For $m=1,2,3$, the weight enumerators $W_{4,6m}$ are given in~\eqref{eq:F4:W}.
For $m=4,5,\ldots,10$, the weight enumerators $W_{4,6m}$ are given 
in Tables~\ref{Tab:WDF4-1}, \ref{Tab:WDF4-2} and \ref{Tab:WDF4-all}.
For the remaining $m$,
the weight enumerators $W_{4,6m}$ can be obtained electronically from~\cite{AH}.
For each $m =1,2,\ldots,37$,
$A_{2i}$ can be written as $s_{2i}+t_{2i}\alpha$, using some integers $s_{2i},t_{2i}$
$(i =m+1,m+2,\ldots,3m)$.
For each $m =1,2,\ldots,37$,
we numerically verified by \textsc{Mathematica}
that $s_{2i} \equiv  0 \pmod 9$ $(i =m+1,m+2,\ldots,3m)$.
By Lemma~\ref{lem:F4}, $\alpha \equiv  0 \pmod 9$. 
This completes the proof.
\end{proof}

All quaternary Hermitian self-dual codes were classified 
in~\cite{CPS}, \cite{HLMT}, \cite{HM11} and \cite{MOSW}
for lengths up to $20$.
For $m=1,2,3$, 
we list in Table~\ref{Tab:F4:small}
the numbers $N_{6m}$ of inequivalent 
quaternary near-extremal Hermitian self-dual codes of length $6m$ along with the references.
We also list the values
$\alpha$ in $W_{4,6m}$ for which there is a 
quaternary near-extremal Hermitian self-dual code of length $6m$.

\begin{table}[thb]
\caption{Small quaternary near-extremal Hermitian self-dual codes of length $6m$}
\label{Tab:F4:small}
\centering
\medskip
{\small
\begin{tabular}{c|c|c|l}
\noalign{\hrule height0.8pt}
$m$ & Reference & $N_{6m}$ & \multicolumn{1}{c}{$\alpha$ in $W_{4,6m}$} \\ 
\hline
$1$ &\cite{MOSW} & $1$ & $9$\\
$2$ &\cite{MOSW} & $5$ & $9,18,36,45,90$ \\
$3$ &\cite{HLMT}  & $30$ & $27,45,72,81,99,108$ \\
\noalign{\hrule height0.8pt}
\end{tabular}
}
\end{table}


As a consequence,
Theorem~\ref{thm:F4} gives 
a divisibility property of the coefficients of the weight enumerators of 
quaternary extremal Hermitian self-dual codes of length $6m$.

\begin{cor}
If there is a  quaternary extremal Hermitian self-dual code of length $6m$,
then $B_{2i} \equiv  0 \pmod 9$ for $i=m+1,m+2,\ldots,3m$,
where $B_{2i}$ denotes the number of codewords of weight $2i$ in the code.
\end{cor}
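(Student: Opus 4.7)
The plan is to obtain the corollary as a direct specialization of Theorem~\ref{thm:F4}, following the template the authors use for the ternary corollary. The key observation is that the Gleason-type description~\eqref{eq:F4:G} makes the weight enumerator of a near-extremal quaternary Hermitian self-dual code of length $6m$ a one-parameter family, parametrised by $\alpha = A_{2m}$; setting $\alpha = 0$ forces the minimum weight up to at least $2m+2$ and so produces exactly the weight enumerator $W_{4,6m}|_{\alpha = 0}$ of any quaternary extremal Hermitian self-dual code of length $6m$ that happens to exist. Hence the extremal weight enumerator is uniquely the $\alpha = 0$ member of the near-extremal family.

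With this in hand, the deduction is essentially bookkeeping. I would first recall from the proof of Theorem~\ref{thm:F4} that for every admissible $m$ the coefficients of the near-extremal weight enumerator take the form $A_{2i} = s_{2i} + t_{2i}\alpha$ with $s_{2i} \equiv 0 \pmod 9$ for $i = m+1, m+2, \ldots, 3m$, a fact that was verified numerically in \textsc{Mathematica}. Substituting $\alpha = 0$ then gives $B_{2i} = s_{2i}$ for the extremal code, and the congruence $B_{2i} \equiv 0 \pmod 9$ is inherited at no additional cost.

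The main obstacle I anticipate is controlling the range of $m$. The proof of Theorem~\ref{thm:F4} caps the problem at $m \le 37$ by appeal to Theorem~\ref{thm:HK}(ii), but that bound applies to near-extremal codes only and says nothing a priori about extremal ones. To close this gap I would invoke a non-existence theorem for quaternary extremal Hermitian self-dual codes at large length---the Hermitian counterpart of the Zhang bound the authors use in the ternary corollary---ensuring that any extremal code lies within the numerical window already handled. Should a clean bound of this kind not be readily available in the literature, the fallback is simply to extend the \textsc{Mathematica} verification of $s_{2i} \equiv 0 \pmod 9$ to whatever larger range of $m$ is needed; this is a purely mechanical computation driven by~\eqref{eq:F4:G} and introduces no new conceptual difficulty.
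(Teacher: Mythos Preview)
Your proposal is correct and follows essentially the same argument as the paper: the authors likewise invoke Zhang's result that no quaternary extremal Hermitian self-dual code of length $6m$ exists for $m \ge 17$, then observe from the proof of Theorem~\ref{thm:F4} that $A_{2i}=s_{2i}+t_{2i}A_{2m}$ with $s_{2i}\equiv 0\pmod 9$, so that trivially $B_{2i}=s_{2i}$. Since $17\le 37$, your anticipated fallback of extending the numerical check is not needed.
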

\begin{proof}
Note that there is no quaternary extremal Hermitian self-dual code of length $6m$
for $m \ge 17$~\cite{Zhang}.
Let $A_{2i}$ denote the number of codewords of weight $2i$ in 
a quaternary near-extremal Hermitian self-dual code of length $6m$.
As described in the proof of Theorem~\ref{thm:F4}, 
$A_{2i}$ can be written as $s_{2i}+t_{2i}A_{2m}$, using some integers $s_{2i},t_{2i}$,
and $s_{2i} \equiv  0 \pmod 9$ $(i =m+1,m+2,\ldots,3m)$.
It is trivial that $B_{2i}=s_{2i}$ $(i =m+1,m+2,\ldots,3m)$.
\end{proof}

\section{Existence of ternary near-extremal self-dual codes of length $12m$}
\label{Sec:F3}

In this section,
we consider the weight enumerators for which there is a
ternary near-extremal self-dual code of length $12m$ for $m =3,4,5,6$.
All computer calculations in this section and the next section were done
by \textsc{Magma}~\cite{Magma}.

\subsection{Length 72}

The smallest length for which
$3m$ is the largest minimum weight among ternary self-dual codes of length $12m$  
is $72$ (see Table~\ref{Tab:d:F3}).
From the viewpoint, the most interesting length is $72$.
Here we consider the weight enumerators for which there is a
ternary near-extremal self-dual code of length $72$.

For the weight enumerator $W_{3,72}=\sum_{i = 0}^{72}A_iy^i$ of
a ternary near-extremal self-dual code of length $72$,
$A_i$ are listed in Table~\ref{Tab:Ai}.

\begin{fact}
For the  weight enumerator $W_{3,72}$,
$\alpha =8\beta$ and
\[
\beta \in \{14466,14467,\ldots,251482\}.
\]
\end{fact}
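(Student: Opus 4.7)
The plan is to combine the Gleason-type parameterization of $W_{3,72}$ from~\eqref{eq:F3:G} with Lemma~\ref{lem:F3} and the nonnegativity of the coefficients $A_i$.

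First, I would apply~\eqref{eq:F3:G} with $n=72$, writing
\[
W_{3,72} = \sum_{j=0}^{6} a_j(1+8y^3)^{18-3j}(y^3(1-y^3)^3)^j
\]
with seven free parameters $a_0,\ldots,a_6$. Near-extremality forces the minimum weight to equal $18$, so the conditions $A_0=1$ and $A_3=A_6=A_9=A_{12}=A_{15}=0$ impose six linear equations and leave a single free parameter. Taking $\alpha=A_{18}$ as that parameter, every coefficient of $W_{3,72}$ becomes an affine expression $A_i = s_i + t_i \alpha$ for explicit integers $s_i$ and $t_i$; these are precisely the entries recorded in Table~\ref{Tab:Ai}. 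Lemma~\ref{lem:F3} then forces $\alpha \equiv 0 \pmod{8}$, so we may write $\alpha = 8\beta$ for some integer $\beta$, which accounts for the first half of the claim.

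For the range of $\beta$, I would translate each inequality $A_i \geq 0$ into an affine inequality $s_i + 8 t_i \beta \geq 0$ in $\beta$, for $i \in \{21, 24, \ldots, 72\}$. Entries with $t_i > 0$ yield lower bounds $\beta \geq \lceil -s_i/(8t_i) \rceil$, and entries with $t_i < 0$ yield upper bounds $\beta \leq \lfloor -s_i/(8t_i) \rfloor$. Scanning Table~\ref{Tab:Ai} and retaining only the strongest such lower and upper bounds should collapse all these constraints to $14466 \leq \beta \leq 251482$.

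The main obstacle is organizational and computational rather than conceptual: among the eighteen nontrivial coefficients $A_{21}, A_{24}, \ldots, A_{72}$, one must identify which two indices give the binding constraints, verify that every other coefficient remains nonnegative throughout the resulting interval, and check that the rounding of $-s_i/(8t_i)$ at those two endpoints produces exactly $14466$ and $251482$. This is routine in \textsc{Mathematica}, but not evident a priori without the explicit numerical values of $s_i$ and $t_i$.
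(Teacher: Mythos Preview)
Your approach is correct and matches the paper's: the paper's proof is the one-line remark that the claim follows from Lemma~\ref{lem:F3} together with $A_{21}\ge 0$ and $A_{72}\ge 0$. In other words, the binding constraints you propose to search for are precisely $A_{72}=-115728+\alpha\ge 0$ (giving $\beta\ge 14466$) and $A_{21}=36213408-18\alpha\ge 0$ (giving $\beta\le 251482$); the remaining inequalities are slack throughout this interval, so no further checking is needed.
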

\begin{proof}
Follows from Lemma~\ref{lem:F3},  $A_{21} \ge 0$ and $A_{72} \ge 0$.
\end{proof}

\begin{table}[thb]
\caption{Weight enumerator $W_{3,72}$}
\label{Tab:Ai}
\centering
\medskip
{\small
\begin{tabular}{c|r||c|r}
\noalign{\hrule height0.8pt}
$i$ & \multicolumn{1}{c||}{$A_i$}&
$i$ & \multicolumn{1}{c}{$A_i$}\\
\hline
$ 0$ & $1 $& $45$ & $33060336837654336 - 48620 \alpha $\\
$18$ & $\alpha $& $48$ & $44727764434320240 + 43758 \alpha $\\
$21$ & $36213408  - 18 \alpha $& $51$ & $34777061847747648 - 31824 \alpha $\\
$24$ & $2634060240 + 153 \alpha $& $54$ & $14918066750540800 + 18564 \alpha $\\
$27$ & $126284566912 - 816 \alpha $& $57$ & $3328272012763584  - 8568 \alpha $\\
$30$ & $3525613242624  + 3060 \alpha $& $60$ & $354029045223168 + 3060 \alpha $\\
$33$ & $59358705673680  - 8568 \alpha $& $63$ & $15690890307840 - 816 \alpha $\\
$36$ & $607797076070496  + 18564 \alpha $& $66$ & $230384763504 + 153 \alpha $\\
$39$ & $3798847469410560 - 31824 \alpha $& $69$ & $707807520 - 18 \alpha $\\
$42$ & $14443524566748288 + 43758 \alpha $& $72$ & $- 115728 + \alpha $\\
\noalign{\hrule height0.8pt}
\end{tabular}
}
\end{table}

The ternary extended quadratic residue code $QR_{72}$ of length $72$ is a ternary
near-extremal  self-dual code of length $72$ having weight enumerator with
$\alpha=357840$.
A ternary near-extremal  self-dual code of length $72$ was found in~\cite[Table~3]{GG}.
We verified that the code has weight enumerator with $\alpha=213936$.

By considering four-negacirculant codes, we found $700$ ternary
near-extremal self-dual codes of length $72$ with  distinct weight enumerators.
The values $\alpha$ of the $700$ weight enumerators are given by:
\begin{equation}\label{eq:72}
\alpha \in \Gamma_{72}=\{24 \beta \mid \beta \in \{8500, 8501,\ldots, 9142\}
\cup \Gamma_{72,1} \cup \Gamma_{72,2} \setminus \Gamma_{72,3}\},
\end{equation}
where  $\Gamma_{72,1}$, $\Gamma_{72,2}$ and $\Gamma_{72,3}$ 
are listed in Table~\ref{Tab:W1}.
The codes have generator matrices
of the form~\eqref{eq:4}, and
the first rows $r_A$ and $r_B$ of the negacirculant matrices $A$ and $B$
can be obtained electronically from~\cite{AH}.

\begin{table}[thbp]
\caption{$\Gamma_{72,1}$, $\Gamma_{72,2}$ and $\Gamma_{72,3}$}
\label{Tab:W1}
\centering
\medskip
{\small
\begin{tabular}{c|l}
\noalign{\hrule height0.8pt}
$\Gamma_{72,1}$
&8244, 8316, 8326, 8366, 8376, 8401, 8403, 8415, 8421, 8426, 8439, \\
&8445, 8454, 8456, 8458, 8470, 8472, 8475, 8478, 8479, 8481, 8489, \\
&8490, 8492, 8493, 8494, 8498 \\
\hline
$\Gamma_{72,2}$
&9144, 9146, 9150, 9151, 9153, 9154, 9156, 9157, 9158, 9159, 9160, \\
&9161, 9162, 9165, 9166, 9168, 9171, 9172, 9174, 9179, 9180, 9181, \\
&9182, 9184, 9185, 9186, 9190, 9192, 9193, 9194, 9195, 9198, 9199, \\
&9202, 9204, 9207, 9208, 9210, 9211, 9212, 9216, 9220, 9224, 9237, \\
&9242, 9244, 9247, 9258, 9264, 9274, 9280, 9284, 9286, 9306, 9328, \\
&9330, 9342, 9372, 9374, 9424, 9442, 9458, 9811\\
\hline
$\Gamma_{72,3}$
&8502, 8503, 8504, 8505, 8506, 8507, 8518, 8527, 8529, 8534, 8537,\\ 
&8543, 8546, 8552, 8555, 8558, 8567, 8571, 8573, 8577, 8583, 8585,\\ 
&8594, 8914, 9077, 9083, 9104, 9110, 9113, 9119, 9121, 9122, 9139\\
\noalign{\hrule height0.8pt}
\end{tabular}
}
\end{table}

In order to construct more ternary near-extremal self-dual codes  of length $72$,
we consider ternary codes having the following generator matrices:
\[
\left(
\begin{array}{ccc@{}c}
\quad & {\Large I_{36}} & \quad &
\begin{array}{cccc}
A&B&C&D \\
B&-A&D&-C\\
C^T&-D^T&-A^T&B^T\\
D^T&C^T&-B^T&-A^T\\
\end{array}
\end{array}
\right),
\]
where $A,B,C$ and $D$ are $9 \times 9$ negacirculant matrices.
Remark that 
the right halves of the above generator matrices are known as the Ito array 
in the context of Hadamard matrices
(see~\cite[Section~3.3]{SeberryYamada}).
Using this construction method, we found $40$ more
ternary near-extremal self-dual codes $C_{72,i}$ $(i=1,2,\ldots,40)$
of length $72$ with  distinct weight enumerators.
The values $\alpha$ of  the $40$ weight enumerators are given by:
\[
\alpha \in \{24\beta \mid \beta \in \Gamma'_{72}\},
\]
where  $\Gamma'_{72}$ is listed in Table~\ref{Tab:F3-72-Ito}.
For the $40$ codes, the values $\alpha$ and 
the first rows $r_A, r_B, r_C$ and $r_D$ of the negacirculant matrices $A,B,C$ and $D$
are listed in Tables~\ref{Tab:F3-72-Ito2} and \ref{Tab:F3-72-Ito3}.

\begin{table}[thbp]
\caption{$\Gamma'_{72}$}
\label{Tab:F3-72-Ito}
\centering
\medskip
{\small
\begin{tabular}{l}
\noalign{\hrule height0.8pt}
8350, 8431, 8442, 8448, 8460, 8465, 8484, 8486, 8502, 8507, 8527, 8529, \\
8534, 8546, 8552, 8558, 8573, 8577, 8583, 8594, 9077, 9104, 9110, 9122, \\
9139, 9145, 9148, 9152, 9155, 9176, 9178, 9217, 9218, 9232, 9234, 9272, \\
9273, 9300, 9419, 9542 \\
\noalign{\hrule height0.8pt}
\end{tabular}
}
\end{table}

In order to construct  ternary near-extremal self-dual codes of
length $72$ having weight enumerator $W_{3,72}$ with $\alpha \equiv 8,16 \pmod{24}$,
we consider bordered double circulant codes.
Then we found $25$ more
ternary near-extremal self-dual codes $C_{72,i}$ $(i=41,42,\ldots,65)$
of length $72$ with  distinct weight enumerators.
The values $\alpha$ of  the $25$ weight enumerators are given by:
\[
\alpha \in \left\{8\beta \mid \beta \in \Delta_{72}\right\},
\]
where  $\Delta_{72}$ is listed in Table~\ref{Tab:F3-72-bDCC0}.
The codes have generator matrices
of the form~\eqref{eq:bdcc}, and
the values $\alpha$ and 
the first rows $r_A$ of the circulant matrices $A$
are listed in Table~\ref{Tab:F3-72-bDCC}.

\begin{table}[thbp]
\caption{$\Delta_{72}$}
\label{Tab:F3-72-bDCC0}
\centering
\medskip
{\small
\begin{tabular}{l}
\noalign{\hrule height0.8pt}
25550, 25795, 25970, 26005, 26075, 26110, 26215, 26285, 26320, 26390, \\
26425, 26495, 26530, 26600, 26635, 26705, 26740, 26810, 26845, 26950, \\
27020, 27160, 27265, 27580, 27755 
\\
\noalign{\hrule height0.8pt}
\end{tabular}
}
\end{table}

In summary, we have the following:

\begin{prop}\label{prop:F3:72}
There is a ternary near-extremal self-dual code of
length $72$ having weight enumerator $W_{3,72}$ listed in Table~\ref{Tab:Ai}
for 
\[
\alpha \in \{24\beta \mid \beta \in  \{8914,14910\} \cup \Gamma_{72} \cup \Gamma'_{72}
\} \cup \{8\beta \mid \beta \in \Delta_{72}\}
\]
where $\Gamma_{72}$, $\Gamma'_{72}$ and $\Delta_{72}$
are listed in~\eqref{eq:72},
Tables~\ref{Tab:F3-72-Ito} and~\ref{Tab:F3-72-bDCC0}, respectively.
\end{prop}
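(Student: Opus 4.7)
The plan is to prove the proposition by explicit construction: for each value of $\alpha$ in the claimed set, I would exhibit a concrete ternary $[72,36,18]$ self-dual code whose weight enumerator is $W_{3,72}$ (listed in Table~\ref{Tab:Ai}) with that particular parameter $\alpha$. Since $d_3(72)=18$ (Table~\ref{Tab:d:F3}) and a near-extremal code of length $72=12\cdot 6$ has minimum weight $3\lfloor 72/12\rfloor = 18$, being near-extremal amounts to being a ternary self-dual $[72,36]$ code of minimum weight exactly $18$. Moreover, the Gleason parametrisation~\eqref{eq:F3:G} expresses $W_{3,72}$ as a one-parameter family indexed by $A_{18}=\alpha$, so it suffices to verify that each candidate code is self-dual of minimum weight $18$ and then read off $A_{18}$.

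The target set of $\alpha$ decomposes into four pieces, each handled by a distinct construction. First, for the two isolated values $\alpha = 24\cdot 14910 = 357840$ and $\alpha = 24\cdot 8914 = 213936$, I would invoke the known extended quadratic residue code $QR_{72}$ and the code of~\cite[Table~3]{GG}, respectively, and compute their weight enumerators directly. Second, for the values $24\beta$ with $\beta\in\Gamma_{72}$, I would build four-negacirculant codes of the form~\eqref{eq:4} using $9\times 9$ negacirculant blocks $A,B$ whose first rows $r_A,r_B\in\FF_3^{9}$ are supplied by the database~\cite{AH}. Third, for the values $24\beta$ with $\beta\in\Gamma'_{72}$, I would use the Ito-array variant based on four $9\times 9$ negacirculant blocks $A,B,C,D$, with first rows listed in Tables~\ref{Tab:F3-72-Ito2} and~\ref{Tab:F3-72-Ito3}. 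Fourth, for the values $8\beta$ with $\beta\in\Delta_{72}$, which cover the residue classes $8$ and $16$ modulo $24$ that the preceding constructions cannot reach, I would use bordered double circulant codes of the form~\eqref{eq:bdcc}, with the $35\times 35$ circulant $A$ determined by the first row $r_A$ recorded in Table~\ref{Tab:F3-72-bDCC}.

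For every individual candidate, the verification is routine in \textsc{Magma}: I would confirm that the rows of the specified generator matrix are pairwise orthogonal under $\langle\cdot,\cdot\rangle_E$ and span a $36$-dimensional subspace, check that the minimum weight equals $18$, and then read off $A_{18}$ from the weight distribution. Matching $A_{18}$ with the target $\alpha$ closes the case, since~\eqref{eq:F3:G} forces all remaining coefficients. The union of all $\alpha$ successfully realised across the four constructions is precisely the set asserted by the proposition.

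The main obstacle is not the per-code verification but the preceding search for first rows that realise the listed values of $\alpha$. The interval of admissible $\alpha$ permitted by the Gleason formula together with the positivity constraints from the preceding Fact (so that $\beta$ ranges up to order $10^{5}$) is enormous compared with what any single family can cover, and the different constructions hit different residue classes modulo $24$: the four-negacirculant and Ito-array searches naturally land in $24\ZZ$, while the bordered double circulant search is needed to catch the residues collected in $\Delta_{72}$. Thus the bulk of the work is a targeted computer search over first rows for each construction, discarding codes of minimum weight below $18$ and aggregating distinct weight enumerators until the tabulated sets $\Gamma_{72}$, $\Gamma'_{72}$, and $\Delta_{72}$ are assembled.
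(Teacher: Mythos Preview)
Your proposal is essentially the paper's own argument: the proposition is a summary of explicit computer constructions carried out in the preceding paragraphs, and you have correctly identified all four families (the two sporadic known codes $QR_{72}$ and the code from~\cite[Table~3]{GG}, the four-negacirculant family for $\Gamma_{72}$, the Ito-array family for $\Gamma'_{72}$, and the bordered double circulant family for $\Delta_{72}$), together with the verification procedure of checking self-duality, minimum weight $18$, and reading off $A_{18}$.

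One small slip: for a four-negacirculant $[72,36]$ code in the form~\eqref{eq:4} we have $4n=72$, so $n=18$ and the negacirculant blocks $A,B$ are $18\times 18$ with first rows $r_A,r_B\in\FF_3^{18}$, not $9\times 9$ as you wrote; the $9\times 9$ blocks appear only in the Ito-array construction. This does not affect the correctness of your overall plan.
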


\subsection{Length 36}

There is a ternary extremal self-dual code of length  $36$.
Thus, the largest minimum weight among ternary self-dual codes of this length
is $12$ not $9$.
However, the smallest length for which the classification of ternary self-dual codes
of length $12m$ has not been completed is $36$.
From the viewpoint, 
here we consider the weight enumerators for which there is a
ternary near-extremal self-dual code of length $36$.

For the weight enumerator
$W_{3,36}=\sum_{i = 0}^{36}A_iy^i$ of
a ternary near-extremal self-dual code of length $36$,
$A_i$ are listed in Table~\ref{Tab:Ai36}.

\begin{fact}
For the  weight enumerator $W_{3,36}$,
$\alpha =8\beta$ and
\[
\beta \in \{1,2,\ldots,111\}.
\]
\end{fact}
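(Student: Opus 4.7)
The plan is to combine the mod-$8$ restriction supplied by Lemma~\ref{lem:F3} with the nonnegativity of every coefficient $A_i$ in the weight enumerator $W_{3,36}$ displayed in~\eqref{eq:F3:W}. This is the same pattern used for length $72$ in the preceding Fact, just applied one length down.

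First I would invoke Lemma~\ref{lem:F3} with $m=3$: it yields $\alpha = A_9 \equiv 0 \pmod 8$. Because a near-extremal ternary self-dual code of length $36$ has minimum weight exactly $3\lfloor 36/12\rfloor = 9$, at least one codeword of weight $9$ exists, so $\alpha \geq 1$ and hence $\alpha = 8\beta$ for some integer $\beta \geq 1$. This pins down the lower endpoint of the interval.

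Next I would scan the ten coefficients in~\eqref{eq:F3:W} for the upper bound. The smallest constant term among the $A_i$ with negative coefficient of $\alpha$ is plainly $A_{36} = 888 - \alpha$, which forces $\alpha \leq 888$, i.e., $\beta \leq 111$. A routine check shows that the other descending coefficients ($A_{12}=42840-9\alpha$, $A_{18}=18452280-84\alpha$, $A_{24}=162663480-126\alpha$, $A_{30}=16210656-36\alpha$) all give strictly weaker upper bounds on $\alpha$, while the ascending coefficients impose no restriction. Combining the two endpoints gives $\beta \in \{1,2,\ldots,111\}$.

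I do not anticipate any genuine obstacle: the content of the proof is entirely absorbed by Lemma~\ref{lem:F3} together with a direct inspection of the explicit polynomial in~\eqref{eq:F3:W}, exactly paralleling the length-$72$ argument.
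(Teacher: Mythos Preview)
Your proposal is correct and matches the paper's proof exactly: the paper simply writes ``Follows from Lemma~\ref{lem:F3}, $A_{9} > 0$ and $A_{36} \ge 0$,'' which is precisely your combination of the mod-$8$ congruence, the positivity of $A_9$ from near-extremality, and the nonnegativity of $A_{36}=888-\alpha$. Your additional check that the remaining coefficients impose only weaker bounds is a harmless elaboration.
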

\begin{proof}
Follows from Lemma~\ref{lem:F3},  $A_{9} > 0$ and $A_{36} \ge 0$.
\end{proof}

\begin{table}[thb]
\caption{Weight enumerator $W_{3,36}$}
\label{Tab:Ai36}
\centering
\medskip
{\small
\begin{tabular}{c|r||c|r||c|r}
\noalign{\hrule height0.8pt}
 $i$ & \multicolumn{1}{c||}{$A_i$}&
 $i$ & \multicolumn{1}{c||}{$A_i$}& $i$ & \multicolumn{1}{c}{$A_i$}\\
\hline
$ 0$ & $1 $                  &$18$ &$18452280 - 84\alpha$  &$30$ &$16210656 - 36\alpha$\\
$ 9$& $\alpha$               &$21$ &$90370368 + 126\alpha$ &$33$ &$471240 + 9\alpha$\\
$12$ &$42840 - 9\alpha$&$24$ &$162663480 - 126\alpha$&$36$ &$888 - \alpha$ \\
$15$ &$1400256 + 36\alpha$   &$27$ &$97808480 + 84\alpha$  && \\
\noalign{\hrule height0.8pt}
\end{tabular}
}
\end{table}

By considering four-negacirculant codes, we found $19$ ternary
near-extremal self-dual codes $C_{36,i}$ $(i=1,2,\ldots,19)$
of length $36$ with distinct weight enumerators.
These codes have generator matrices
of the form~\eqref{eq:4}, and
the first rows $r_A$ and $r_B$ of the negacirculant matrices $A$ and $B$
are given in Table~\ref{Tab:F3-36-1}.
The values $\alpha$ of the weight enumerators are also
given in Table~\ref{Tab:F3-36-1}.

\begin{table}[thb]
\caption{Ternary four-negacirculant self-dual codes of length $36$}
\label{Tab:F3-36-1}
\centering
\medskip
{\small
\begin{tabular}{c|r|ll}
\noalign{\hrule height0.8pt}
Code & \multicolumn{1}{c|}{$\alpha$} & \multicolumn{1}{c}{$r_A$}
&\multicolumn{1}{c}{$r_B$} \\
\hline
$C_{36, 1}$ & $ 72$ & $(0, 1, 2, 0, 0, 0, 0, 1, 2)$ & $(1, 2, 2, 1, 1, 1, 1, 0, 0)$ \\
$C_{36, 2}$ & $ 96$ & $(0, 2, 1, 1, 0, 0, 0, 1, 2)$ & $(1, 0, 2, 1, 1, 1, 0, 1, 0)$ \\
$C_{36, 3}$ & $144$ & $(0, 1, 1, 1, 0, 0, 0, 0, 2)$ & $(2, 1, 0, 2, 1, 1, 1, 1, 0)$ \\
$C_{36, 4}$ & $168$ & $(0, 2, 2, 1, 0, 0, 0, 1, 1)$ & $(1, 2, 1, 1, 0, 2, 0, 1, 0)$ \\
$C_{36, 5}$ & $216$ & $(0, 2, 0, 1, 0, 0, 0, 0, 2)$ & $(1, 2, 1, 2, 0, 0, 1, 0, 0)$ \\
$C_{36, 6}$ & $240$ & $(0, 2, 2, 0, 0, 0, 0, 2, 1)$ & $(2, 1, 1, 2, 1, 1, 1, 0, 0)$ \\
$C_{36, 7}$ & $288$ & $(0, 2, 1, 0, 0, 0, 0, 0, 0)$ & $(2, 1, 2, 1, 0, 1, 1, 0, 0)$ \\
$C_{36, 8}$ & $312$ & $(0, 0, 1, 1, 0, 0, 0, 0, 2)$ & $(1, 1, 2, 2, 1, 2, 1, 1, 0)$ \\
$C_{36, 9}$ & $360$ & $(0, 0, 2, 0, 0, 0, 0, 1, 0)$ & $(2, 1, 2, 2, 2, 0, 1, 0, 0)$ \\
$C_{36,10}$ & $384$ & $(0, 2, 0, 0, 0, 0, 0, 2, 0)$ & $(1, 2, 2, 1, 2, 1, 0, 0, 0)$ \\
$C_{36,11}$ & $432$ & $(0, 0, 0, 0, 0, 0, 0, 1, 2)$ & $(1, 1, 2, 1, 2, 0, 1, 0, 0)$ \\
$C_{36,12}$ & $456$ & $(0, 1, 0, 0, 0, 0, 0, 2, 2)$ & $(2, 2, 1, 2, 0, 1, 0, 0, 0)$ \\
$C_{36,13}$ & $504$ & $(0, 1, 0, 1, 0, 0, 0, 1, 2)$ & $(1, 2, 1, 0, 0, 1, 0, 0, 0)$ \\
$C_{36,14}$ & $528$ & $(0, 1, 2, 1, 0, 0, 0, 2, 1)$ & $(2, 1, 1, 0, 2, 1, 1, 0, 0)$ \\
$C_{36,15}$ & $576$ & $(0, 0, 0, 2, 0, 0, 0, 2, 2)$ & $(1, 2, 2, 2, 1, 2, 2, 1, 0)$ \\
$C_{36,16}$ & $600$ & $(0, 2, 1, 2, 0, 0, 0, 1, 2)$ & $(1, 1, 1, 2, 2, 1, 1, 1, 1)$ \\
$C_{36,17}$ & $648$ & $(0, 0, 0, 1, 0, 0, 0, 2, 1)$ & $(2, 2, 1, 0, 1, 1, 0, 0, 0)$ \\
$C_{36,18}$ & $720$ & $(0, 1, 1, 0, 0, 0, 0, 1, 0)$ & $(1, 2, 1, 2, 1, 0, 0, 0, 0)$ \\
$C_{36,19}$ & $744$ & $(0, 0, 1, 0, 0, 0, 0, 1, 2)$ & $(1, 1, 1, 1, 1, 0, 0, 0, 0)$ \\
\noalign{\hrule height0.8pt}
\end{tabular}
}
\end{table}

By considering  bordered double circulant codes,
we found three ternary near-extremal self-dual codes
$C_{36,i}$ $(i=20,21,22)$ 
of length $36$ with distinct weight enumerators.
These codes have generator matrices
of the form~\eqref{eq:bdcc}, and
the first rows of the circulant matrices $A$ are
\begin{align*}
&(2, 1, 0, 0, 0, 0, 2, 1, 2, 1, 2, 0, 2, 0, 1, 1, 0), \\
&(2, 0, 1, 0, 0, 0, 1, 1, 1, 1, 2, 0, 1, 1, 0, 0, 1),\\
&(2, 0, 0, 0, 0, 0, 0, 0, 2, 2, 0, 1, 2, 0, 2, 1, 0), 
\end{align*}
respectively.
We verified that $C_{36,i}$ $(i=20,21,22)$ have weight enumerators with
$\alpha = 136$, $408$, $544$, respectively.

By considering neighbors,
we found $53$ more ternary near-extremal self-dual codes 
$N_{36,i}=\Nei_E(C,{x})=\Nei_E(C,\hat{x})$ $(i=1,2,\ldots,52)$ (see~\eqref{eq:nei} for 
$\Nei_E(C,x)$),
where $C$ and $\hat{x}=(x_{17},x_{18},\ldots,x_{36})$ are listed in Table~\ref{Tab:F3-36-2}.
In the table, $P_{36}$ denotes the Pless symmetry code of length $36$
having generator matrix of form~\eqref{eq:bdcc}, where the first row of the
circulant matrix $A$ is
\[
(0,1,1,2,1,2,2,2,1,1,2,2,2,1,2,1,1).
\]
The values $\alpha$ of the weight enumerators are also given in Table~\ref{Tab:F3-36-2}.

In summary, we have the following:

\begin{prop}\label{prop:F3:36}
There is a ternary near-extremal self-dual code of
length $36$ having weight enumerator $W_{3,36}$ listed in Table~\ref{Tab:Ai36}
for 
\begin{align*}
\alpha \in \{8 \beta \mid \beta \in \{9, 12, 14, 16, 17, \ldots, 83, 85, 90, 93\} \}.
\end{align*}
\end{prop}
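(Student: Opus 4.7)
The plan is to exhibit, for every value of $\alpha$ in the claimed set, a concrete ternary near-extremal self-dual code of length $36$ whose weight enumerator is $W_{3,36}$ with that value of $\alpha$. Since $W_{3,36}$ is a one-parameter family in $\alpha = A_9$, it suffices to construct a code and then have \textsc{Magma} verify three things: the code is self-dual, its minimum weight equals $9$, and its number of weight-$9$ codewords equals the target value. Lemma~\ref{lem:F3} guarantees that $\alpha \equiv 0 \pmod 8$, so the natural parameter is $\beta = \alpha/8$, and the target set of $\beta$-values is $\{9,12,14,16,17,\ldots,83,85,90,93\}$.

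First I would search over ternary four-negacirculant codes with generator matrix of the form~\eqref{eq:4} (here the two negacirculant blocks $A,B$ are $9\times 9$). A modest enumeration over first rows $r_A,r_B \in \FF_3^9$, filtering by self-duality and minimum weight $9$, yields the $19$ codes $C_{36,1},\ldots,C_{36,19}$ in Table~\ref{Tab:F3-36-1}, covering the values $\beta \in \{9,12,18,21,27,30,36,39,45,48,54,57,63,66,72,75,81,90,93\}$. Second, I would run an analogous search over bordered double circulant codes of the form~\eqref{eq:bdcc} (with $A$ a $17\times 17$ circulant); this is what produces the three additional codes $C_{36,20},C_{36,21},C_{36,22}$ giving $\beta \in \{17,51,68\}$.

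The remaining values of $\beta$ — essentially the long run from $14$ up to $85$ with the gaps already noted — I would obtain via the neighbor construction~\eqref{eq:nei}. For each previously constructed code $C$, and also for the Pless symmetry code $P_{36}$ (which fits the form~\eqref{eq:bdcc} via the first row listed just after Table~\ref{Tab:F3-36-1}), I would enumerate vectors of the form $x=(0,\hat{x})$ with $\hat{x}\in\FF_3^{18}$ satisfying $\langle x,x\rangle_E = 0$ and $x\notin C$, form $\Nei_E(C,\hat{x})$, and record the resulting weight enumerators. The $53$ listed neighbors $N_{36,i}$ in Table~\ref{Tab:F3-36-2} are exactly the successful hits of this search, and together with the codes from the first two steps they realize every $\beta$ in the target set. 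The final step is to take the union of the realized $\alpha$-values across the three families and check that it coincides with $\{8\beta : \beta \in \{9,12,14,16,17,\ldots,83,85,90,93\}\}$.

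The main obstacle is computational rather than conceptual: one needs the neighbor search to cover a long run of consecutive $\beta$-values, and no single seed code is likely to produce all of them. The practical workaround, and what makes the proof go through, is to use all of $P_{36}$, the $19$ four-negacirculant codes, and the three bordered double circulant codes as seeds, cycling through many choices of $\hat{x}$ for each. Given the richness of the neighbor orbit and the flexibility in choosing seeds, covering the full list of $\beta$-values is feasible, and correctness of each individual construction is an entirely mechanical \textsc{Magma} verification.
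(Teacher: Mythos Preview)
Your proposal is correct and follows essentially the same computational approach as the paper: four-negacirculant codes, bordered double circulant codes, and neighbors, with \textsc{Magma} verification of self-duality, minimum weight, and $A_9$. One minor difference is that the paper also iterates the neighbor construction (two of the $N_{36,i}$ in Table~\ref{Tab:F3-36-2} use previously found neighbors $N_{36,46}$ and $N_{36,47}$ as seeds rather than only $P_{36}$ or the $C_{36,i}$), so you may need that extra layer to hit the last couple of $\beta$-values.
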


\subsection{Lengths 48 and 60}

For the weight enumerators
$W_{3,48}=\sum_{i = 0}^{48}A_iy^i$ and
$W_{3,60}=\sum_{i = 0}^{60}A_iy^i$
of ternary near-extremal self-dual codes
of lengths $48$ and $60$,
$A_i$ are listed in Tables~\ref{Tab:Ai48} and \ref{Tab:Ai60},
respectively.

\begin{fact}
For the  weight enumerators $W_{3,48}$ and $W_{3,60}$,
$\alpha =8\beta$ and
\[
\beta \in \{1,2,\ldots,4324\} \text{ and }
\beta \in \{1,2,\ldots,5148\},
\]
respectively.
\end{fact}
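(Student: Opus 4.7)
The approach mirrors the proofs of the two preceding Facts (for $W_{3,72}$ and $W_{3,36}$), combining Lemma~\ref{lem:F3} with the non-negativity of the coefficients of the weight enumerator. First, apply Lemma~\ref{lem:F3} to a ternary near-extremal self-dual code $C$ of length $12m$ with $m=4$ or $m=5$: this gives $\alpha = A_{3m}(C) \equiv 0 \pmod 8$. Since near-extremality forces the minimum weight to equal exactly $3m$, we have $\alpha > 0$, so $\alpha = 8\beta$ for some integer $\beta \ge 1$. This establishes the lower endpoint of both ranges.

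For the upper bound, I would exploit the Gleason-type parametrization~\eqref{eq:F3:G}. Once $\alpha$ is fixed, every coefficient of the weight enumerator takes the affine form $A_i = s_i + t_i \alpha$ for explicit integers $s_i, t_i$ (these are exactly the coefficients already tabulated in Tables~\ref{Tab:Ai48} and~\ref{Tab:Ai60}). For each index $i$ with $t_i < 0$, the inequality $A_i \ge 0$ yields
\[
\beta \;\le\; \frac{s_i}{8\,|t_i|}.
\]
Taking the floor of the minimum of these ratios over the relevant indices produces the claimed bounds $\beta \le 4324$ for $n=48$ and $\beta \le 5148$ for $n=60$.

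The only non-mechanical step is identifying which index $i^{\ast}$ is binding in each table. This is a purely finite inspection: guided by the pattern observed earlier, where $A_{36}$ was binding at length $36$ and $A_{21}$ at length $72$, the binding coefficient here should likewise be either the top-weight term $A_{12m}$ or one of the low-weight terms $A_{3(m+1)}, A_{3(m+2)}$ whose negative slope $t_i$ is small in absolute value relative to $s_i$. No genuine obstacle is anticipated beyond reading off the tabulated coefficients and comparing finitely many ratios.
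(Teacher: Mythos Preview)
Your proposal is correct and matches the paper's approach exactly: Lemma~\ref{lem:F3} plus near-extremality give $\alpha=8\beta$ with $\beta\ge 1$, and the upper bound comes from the non-negativity of the tabulated coefficients. The paper simply records the binding constraints you anticipated, namely $A_{15}\ge 0$ for $W_{3,48}$ (yielding $\beta\le 415104/(8\cdot 12)=4324$) and $A_{60}\ge 0$ for $W_{3,60}$ (yielding $\beta\le 41184/8=5148$).
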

\begin{proof}
Follows from Lemma~\ref{lem:F3},  
$A_{12} > 0$ and $A_{15} \ge 0$ (resp.\ $A_{15} > 0$ and $A_{60} \ge 0$)
for $W_{3,48}$ (resp.\ $W_{3,60}$).
\end{proof}

\begin{table}[thb]
\caption{Weight enumerator $W_{3,48}$}
\label{Tab:Ai48}
\centering
\medskip
{\footnotesize
\begin{tabular}{c|r||c|r||c|r}
\noalign{\hrule height0.8pt}
$i$ & \multicolumn{1}{c||}{$A_i$} 
& $i$ & \multicolumn{1}{c||}{$A_i$} 
& $i$ & \multicolumn{1}{c}{$A_i$}\\
\hline
$ 0$ & $1 $                      & $24$ & $5745355200 + 495\alpha$  &
$39$ & $9794378880 - 220\alpha$  \\
$12$ & $\alpha $                 & $27$ & $31815369344 - 792\alpha$ &
$42$ & $573051072 + 66\alpha$    \\
$15$ & $415104 - 12\alpha$       & $30$ & $83368657152 + 924\alpha$ &
$45$ & $6503296 - 12\alpha$      \\
$18$ & $20167136 + 66\alpha$     & $33$ & $99755406432 - 792\alpha$ &
$48$ & $96 + \alpha$             \\
$21$ & $497709696 - 220\alpha$   & $36$ & $50852523072 + 495\alpha$ &
& \\
\noalign{\hrule height0.8pt}
\end{tabular}
}
\end{table}

\begin{table}[thb]
\caption{Weight enumerator $W_{3,60}$}
\label{Tab:Ai60}
\centering
\medskip
{\footnotesize
\begin{tabular}{c|r||c|r}
\noalign{\hrule height0.8pt}
$i$ & \multicolumn{1}{c||}{$A_i$} 
& $i$ & \multicolumn{1}{c}{$A_i$}\\
\hline
$ 0$ & $1 $ & $39$& $ 63958467767040 + 6435\alpha$  \\
$15$ & $\alpha $ & $42$& $ 59278900150800 - 5005\alpha$  \\
$18$& $ 3901080 - 15\alpha$  & $45$& $ 27270640178880 + 3003\alpha$  \\
$21$& $ 241456320 + 105\alpha$  & $48$& $ 5739257192760 - 1365\alpha$  \\
$24$& $ 8824242960 - 455\alpha$  & $51$& $ 485029078560 + 455\alpha$  \\
$27$& $ 172074038080 + 1365\alpha$  & $54$& $ 13144038880 - 105\alpha$  \\
$30$& $ 1850359081824 - 3003\alpha$  & $57$& $ 71451360 + 15\alpha$  \\
$33$& $ 11014750094040 + 5005\alpha$  & $60$& $ 41184 - \alpha$  \\
$36$& $ 36099369380880 - 6435\alpha$  & &\\
\noalign{\hrule height0.8pt}
\end{tabular}
}
\end{table}

\begin{table}[thb]
\caption{$\Gamma_{60,1}$}
\label{Tab:W1-60}
\centering
\medskip
{\small
\begin{tabular}{l}
\noalign{\hrule height0.8pt}
821, 845, 865, 870, 871, 875, 876, 880, 885, 886, 890, 891, 895, 896, 900,  \\
901,  905, 906, 910, 911, 915, 916, 920, 921, 925, 926, 930, 931, 935, 936, \\
940, 941, 945, 946, 950, 951, 955, 956, 960, 961, 965, 966, 970, 971, 975, \\
976, 980, 981, 985, 986, 990, 991, 995, 996, 1000, 1001, 1005, 1006, 1010, \\
1011, 1015, 1016, 1020, 1021, 1025, 1026, 1030, 1031, 1035, 1036, 1040, \\
1041, 1045, 1046, 1050, 1051, 1055, 1056, 1060, 1061, 1065, 1066, 1070, \\
1071, 1075, 1076, 1080, 1081, 1085, 1086, 1090, 1091, 1095, 1096, 1100, \\
1101, 1105, 1106, 1110, 1111, 1115, 1116, 1120, 1121, 1125, 1126, 1130, \\
1131, 1135, 1136, 1140, 1141, 1145, 1146, 1150, 1151, 1155, 1156, 1160, \\
1161, 1165, 1166, 1170, 1171, 1175, 1176\\
\noalign{\hrule height0.8pt}
\end{tabular}
}
\end{table}

By considering four-negacirculant codes, we found $86$ and $126$  ternary
near-extremal self-dual codes of lengths $48$ and $60$ with 
distinct weight enumerators, respectively.
The values $\alpha$ of the $86$ weight enumerators are given by:
\[
\alpha \in 
\left\{48 \beta \mid \beta \in \Gamma_{48,1}\right\},
\]
where
\begin{equation}\label{eq:F3-48}
\Gamma_{48,1}=
\left\{\begin{array}{l}
33, 34, 36, 37,\ldots, 107, 110, 113, 115, 116, \\
117, 118, 123, 126, 132, 142, 166, 246 
\end{array}\right\}.
\end{equation}
The values $\alpha$ of the $126$ weight enumerators are given by:
\begin{align*}
\alpha \in \{24 \beta \mid \beta \in \Gamma_{60,1}\},
\end{align*}
where  $\Gamma_{60,1}$ is listed in Table~\ref{Tab:W1-60}.
These codes have generator matrices
of the form~\eqref{eq:4}, and
the first rows $r_A$ and $r_B$ of the negacirculant matrices $A$ and $B$
can be obtained electronically from~\cite{AH}.

By considering neighbors,
we found $61$ more ternary near-extremal self-dual codes 
$N_{48,i}=\Nei_E(P'_{48},\hat{x})$ $(i=1,2,\ldots,61)$ of length $48$, 
where
the vectors
$\hat{x}=(x_{25},x_{26},\ldots,x_{48})$ are listed in Table~\ref{Tab:F3-48-2}.
Here $P'_{48}$ is  a bordered double circulant code
having generator matrix of form~\eqref{eq:bdcc}, where the first row of the
circulant matrix $A$ is
\[
(0,1,1,1,1,2,1,2,1,1,2,2,1,1,2,2,1,2,1,2,2,2,2).
\]
Note that $P'_{48}$ is equivalent to $P_{48}$.
The values $\alpha$ of the weight enumerators
are also given in Table~\ref{Tab:F3-48-2}, where
\[
\alpha \in \{8\beta \mid \beta \in  \Gamma_{48,2}\},
\]
and $\Gamma_{48,2}$ is listed in Table~\ref{Tab:W1-48-2}.

\begin{table}[thb]
\caption{$\Gamma_{48,2}$}
\label{Tab:W1-48-2}
\centering
\medskip
{\small
\begin{tabular}{l}
\noalign{\hrule height0.8pt}
180, 181, 182, 184, 188, 210, 212, 215, 217, 218, 219, 220, 221, 223, 224, \\
225, 226, 227, 229, 230, 231, 232, 233, 235, 236, 237, 238, 239, 241, 242, \\
243, 244, 245, 247, 248, 249, 250, 251, 253, 254, 255, 256, 257, 259, 260, \\
261, 262, 263, 265, 266, 267, 268, 269, 271, 272, 273, 274, 275, 277, 278, \\
281\\
\noalign{\hrule height0.8pt}
\end{tabular}
}
\end{table}

Again, by considering neighbors,
we found $60$ more ternary near-extremal self-dual codes 
$N_{60,i}=\Nei_E(P_{60},\hat{x})$ $(i=1,2,\ldots,60)$   of length $60$,
where the vectors
$\hat{x}=(x_{31},x_{32},\ldots,x_{60})$ are listed in 
Tables~\ref{Tab:F3-60-2} and \ref{Tab:F3-60-3}.
Here $P_{60}$ denotes the Pless symmetry code of length $60$
having generator matrix of form~\eqref{eq:bdcc}, where the first row of the
circulant matrix $A$ is
\[
(0,1,2,2,1,1,1,1,2,1,2,2,2,1,2,2,1,2,2,2,1,2,1,1,1,1,2,2,1).
\]
The values $\alpha$ of the weight enumerators
are also given in Tables~\ref{Tab:F3-60-2} and \ref{Tab:F3-60-3},
where
\[
\alpha \in \{8\beta \mid \beta \in  \Gamma_{60,2}\},
\]
and $\Gamma_{60,2}$ is listed in Table~\ref{Tab:W1-60-2}.

\begin{table}[thb]
\caption{$\Gamma_{60,2}$}
\label{Tab:W1-60-2}
\centering
\medskip
{\small
\begin{tabular}{l}
\noalign{\hrule height0.8pt}
1869, 1871, 1872, 1874, 1879, 1884, 1886, 1887, 1888, 1889, 1890, 1892,\\ 
1894, 1895, 1896, 1897, 1898, 1899, 1900, 1901, 1902, 1903, 1904, 1906,\\ 
1907, 1908, 1909, 1910, 1911, 1912, 1913, 1914, 1915, 1916, 1917, 1918, \\
1919, 1920, 1921, 1922, 1923, 1924, 1925, 1926, 1928, 1930, 1931, 1933,\\ 
1934, 1936, 1937, 1938, 1939, 1940, 1943, 1944, 1947, 1948, 1952, 1960\\
\noalign{\hrule height0.8pt}
\end{tabular}
}
\end{table}

\begin{table}[thb]
\caption{Ternary near-extremal self-dual codes $N_{60,i}$ of length $60$}
\label{Tab:F3-60-2}
\centering
\medskip
{\footnotesize
\begin{tabular}{c|c|l}
\noalign{\hrule height0.8pt}
$i$ &  $\alpha$ & \multicolumn{1}{c}{$\hat{x}$}\\
\hline
1 & $14952$&$(2,2,1,0,1,0,0,1,2,0,1,0,0,1,0,2,0,0,0,1,2,2,2,1,0,2,2,2,0,1)$ \\
2 & $14968$&$(2,2,2,1,1,0,0,1,2,0,1,0,0,1,0,2,0,0,0,1,0,2,2,1,0,2,2,1,0,1)$ \\
3 & $14976$&$(0,2,2,0,1,0,0,1,2,0,1,0,0,1,0,2,0,0,0,1,0,2,2,1,0,2,2,2,0,0)$ \\
4 & $14992$&$(2,2,2,0,1,1,0,1,2,0,1,0,0,1,0,2,0,0,0,1,0,2,2,1,0,2,1,2,0,1)$ \\
5 & $15032$&$(2,1,2,0,1,0,0,1,2,0,1,0,0,1,0,2,0,0,0,1,2,2,2,1,0,2,2,2,0,1)$ \\
6 & $15072$&$(1,0,2,2,0,2,0,1,1,0,2,2,2,2,0,1,1,2,1,1,0,2,1,0,2,1,1,2,0,0)$ \\
7 & $15088$&$(2,1,0,2,1,0,0,1,0,0,2,2,1,1,2,2,2,1,0,1,0,1,2,2,2,0,0,2,1,1)$ \\
8 & $15096$&$(2,1,1,1,0,1,1,2,2,1,1,1,2,0,0,1,1,2,2,2,2,1,2,1,1,1,2,1,1,2)$ \\
9 & $15104$&$(0,0,1,2,1,2,1,0,2,2,2,2,0,0,0,0,0,0,1,0,2,1,2,1,2,0,0,0,0,0)$ \\
10 & $15112$&$(2,0,0,0,1,0,0,1,2,0,1,0,0,1,0,2,0,0,0,1,0,2,2,1,0,2,2,2,0,1)$ \\
11 & $15120$&$(2,2,2,2,1,2,0,1,1,0,2,2,2,2,0,1,1,2,1,1,1,0,0,0,2,2,2,0,0,0)$ \\
12 & $15136$&$(0,2,0,1,1,0,1,2,2,1,1,1,2,0,0,1,1,2,2,2,2,0,1,1,2,1,0,1,0,0)$ \\
13 & $15152$&$(2,1,2,1,1,0,0,1,2,0,1,0,0,1,0,2,0,0,0,1,0,2,2,1,0,2,2,2,0,1)$ \\
14 & $15160$&$(2,1,0,1,2,1,1,2,2,1,1,1,2,0,0,1,1,2,2,2,2,1,2,1,1,1,2,1,1,2)$ \\
15 & $15168$&$(2,2,0,0,2,0,0,1,0,0,2,2,1,1,2,2,2,1,0,2,0,1,1,2,2,0,0,1,0,0)$ \\
16 & $15176$&$(2,1,1,0,2,1,1,2,2,1,1,1,2,0,0,1,1,2,2,2,2,1,2,1,1,1,2,1,1,2)$ \\
17 & $15184$&$(0,1,1,1,2,2,1,2,2,1,1,1,2,0,0,1,1,2,2,1,0,0,0,1,0,1,0,0,2,1)$ \\
18 & $15192$&$(0,1,2,1,1,1,0,0,1,2,0,2,1,1,2,0,1,0,0,2,1,2,1,1,1,2,0,1,0,2)$ \\
19 & $15200$&$(1,0,1,2,2,0,1,0,0,2,0,0,0,1,0,1,0,2,2,2,1,2,1,0,2,2,0,1,1,0)$ \\
20 & $15208$&$(2,0,0,1,0,2,0,1,1,0,2,2,2,2,0,1,1,2,1,0,0,1,1,0,1,0,0,1,2,0)$ \\
\noalign{\hrule height0.8pt}
\end{tabular}
}
\end{table}

\begin{table}[p]
\caption{Ternary near-extremal self-dual codes $N_{60,i}$}
\label{Tab:F3-60-3}
\centering
\medskip
{\footnotesize
\begin{tabular}{c|c|l}
\noalign{\hrule height0.8pt}
$i$ &  $\alpha$ & \multicolumn{1}{c}{$\hat{x}$}\\
\hline
21 & $15216$&$(0,0,2,0,1,0,0,1,2,0,1,0,0,1,0,2,0,0,0,1,0,2,2,1,0,2,2,2,0,1)$ \\
22 & $15224$&$(2,1,2,1,1,1,0,0,1,2,0,2,1,1,2,0,1,0,0,2,1,2,1,0,1,2,0,1,0,2)$ \\
23 & $15232$&$(0,2,2,2,0,2,0,1,1,0,2,2,2,2,0,1,1,2,1,1,0,2,1,0,2,1,1,2,0,0)$ \\
24 & $15248$&$(0,2,2,0,0,0,0,1,2,0,1,0,0,1,0,2,0,0,0,1,0,2,2,1,0,2,2,2,0,1)$ \\
25 & $15256$&$(1,1,2,1,1,1,0,0,1,2,0,2,1,1,2,0,1,0,0,2,1,2,1,0,1,2,0,1,0,2)$ \\
26 & $15264$&$(2,2,2,0,1,0,0,1,2,0,1,0,0,1,0,2,0,0,0,1,0,2,2,1,0,2,2,0,0,0)$ \\
27 & $15272$&$(0,1,1,1,2,1,1,2,2,1,1,1,2,0,0,1,1,2,2,2,2,1,2,1,1,1,2,1,1,2)$ \\
28 & $15280$&$(1,0,0,0,0,0,1,2,2,1,1,1,2,0,0,1,1,2,2,0,2,0,1,1,2,2,1,0,0,0)$ \\
29 & $15288$&$(1,1,0,2,2,0,0,0,1,2,0,2,1,1,2,0,1,0,0,2,0,0,0,2,0,2,0,0,0,1)$ \\
30 & $15296$&$(2,1,1,1,2,1,1,2,2,1,1,1,2,0,0,1,1,2,2,2,2,1,2,1,1,1,2,1,1,0)$ \\
31 & $15304$&$(2,0,1,1,2,1,1,2,2,1,1,1,2,0,0,1,1,2,2,2,2,1,2,1,1,1,2,1,1,2)$ \\
32 & $15312$&$(1,1,0,1,2,0,2,0,2,0,0,2,0,0,0,2,0,1,1,1,0,0,0,1,1,1,0,0,0,1)$ \\
33 & $15320$&$(2,0,1,1,2,0,1,2,2,1,1,1,2,0,0,1,1,2,2,0,0,2,1,0,1,1,0,2,0,1)$ \\
34 & $15328$&$(0,0,0,1,0,1,2,0,2,0,0,2,0,0,0,2,0,1,1,2,1,0,0,2,1,2,2,0,2,0)$ \\
35 & $15336$&$(1,2,2,2,0,2,2,1,0,1,2,0,2,1,1,2,2,2,1,1,0,0,2,1,1,1,2,0,1,1)$ \\
36 & $15344$&$(2,2,2,0,2,0,0,1,2,0,1,0,0,1,0,2,0,0,0,1,2,2,2,1,0,2,2,2,0,1)$ \\
37 & $15352$&$(0,0,0,1,0,0,1,1,0,1,1,2,0,0,0,2,2,1,0,2,1,1,1,1,2,0,2,2,0,2)$ \\
38 & $15360$&$(0,0,2,2,0,2,0,1,1,0,2,2,2,2,0,1,1,2,1,1,2,2,1,0,2,1,1,2,0,0)$ \\
39 & $15368$&$(0,1,1,1,0,2,0,0,1,2,0,2,1,1,2,0,1,0,0,1,2,2,1,2,2,1,2,0,2,1)$ \\
40 & $15376$&$(1,0,1,2,2,2,0,1,1,0,2,2,2,2,0,1,1,2,1,2,2,0,1,2,0,0,2,0,2,0)$ \\
41 & $15384$&$(1,0,0,2,1,2,2,0,2,0,0,2,0,0,0,2,0,1,1,1,2,0,0,1,1,1,0,2,1,2)$ \\
42 & $15392$&$(1,2,1,0,0,0,0,1,0,0,2,2,1,1,2,2,2,1,0,2,2,1,2,1,2,1,1,2,0,0)$ \\
43 & $15400$&$(1,2,1,0,1,1,0,1,1,0,2,2,2,2,0,1,1,2,1,1,2,1,2,1,1,0,2,0,1,1)$ \\
44 & $15408$&$(0,2,2,0,2,2,0,1,1,0,2,2,2,2,0,1,1,2,1,2,0,2,0,1,2,1,0,0,1,2)$ \\
45 & $15424$&$(0,2,2,0,2,1,0,1,1,0,2,2,2,2,0,1,1,2,1,2,0,2,0,1,2,1,0,0,1,2)$ \\
46 & $15440$&$(2,2,1,1,0,0,0,1,1,0,2,2,2,2,0,1,1,2,1,1,0,1,1,1,2,0,0,0,2,2)$ \\
47 & $15448$&$(0,0,2,2,2,2,0,1,1,0,2,2,2,2,0,1,1,2,1,1,0,2,1,0,2,1,1,2,0,0)$ \\
48 & $15464$&$(1,2,1,1,1,1,0,1,1,0,2,2,2,2,0,1,1,2,1,1,0,1,2,1,1,0,2,0,1,1)$ \\
49 & $15472$&$(1,2,2,1,0,1,0,1,1,0,2,2,2,2,0,1,1,2,1,0,0,0,1,1,2,1,0,2,0,2)$ \\
50 & $15488$&$(1,0,0,1,2,1,0,1,2,0,1,0,0,1,0,2,0,0,0,0,0,2,2,1,2,2,2,1,2,1)$ \\
51 & $15496$&$(0,1,2,2,0,2,0,1,1,0,2,2,2,2,0,1,1,2,1,2,2,2,2,1,2,1,2,0,2,1)$ \\
52 & $15504$&$(1,0,0,1,2,1,0,1,1,0,2,2,2,2,0,1,1,2,1,0,0,0,0,1,1,1,2,2,1,1)$ \\
53 & $15512$&$(0,0,1,0,2,2,1,0,0,2,0,0,0,1,0,1,0,2,2,1,0,2,1,0,0,2,1,2,0,0)$ \\
54 & $15520$&$(1,0,2,0,2,1,2,2,2,1,1,1,2,0,0,1,1,2,2,2,1,2,1,1,0,2,1,2,0,2)$ \\
55 & $15544$&$(1,0,1,2,2,0,0,1,0,0,2,2,1,1,2,2,2,1,0,1,1,1,2,0,2,1,2,0,0,2)$ \\
56 & $15552$&$(2,1,2,2,1,0,0,1,2,0,1,0,0,1,0,2,0,0,0,1,0,2,2,1,0,2,2,2,0,1)$ \\
57 & $15576$&$(2,0,1,1,0,0,2,0,2,0,0,2,0,0,0,2,0,1,1,0,2,1,1,0,0,1,2,0,0,1)$ \\
58 & $15584$&$(1,1,0,0,2,2,0,1,0,0,2,2,1,1,2,2,2,1,0,2,0,0,0,2,2,2,1,2,2,2)$ \\
59 & $15616$&$(1,1,0,1,0,2,0,1,0,0,2,2,1,1,2,2,2,1,0,2,0,0,0,2,2,2,1,2,2,2)$ \\
60 & $15680$&$(1,2,2,2,1,0,0,1,2,0,1,0,0,1,0,2,0,0,0,1,0,2,2,1,0,2,2,2,0,1)$ \\
\noalign{\hrule height0.8pt}
\end{tabular}
}
\end{table}

In summary, we have the following:

\begin{prop}\label{prop:F3:48}
\begin{itemize}
\item[\rm (1)]
There is a ternary near-extremal self-dual code of
length $48$ having weight enumerator $W_{3,48}$ listed in Table~\ref{Tab:Ai48}
for 
\[
\alpha \in \{48\beta \mid \beta \in \Gamma_{48,1}\}
\cup \{8\beta \mid \beta \in \Gamma_{48,2}\}, 
\]
where $\Gamma_{48,1}$ and $\Gamma_{48,2}$ are listed in~\eqref{eq:F3-48} and 
Table~\ref{Tab:W1-48-2}, respectively.
\item[\rm (2)]
There is a ternary near-extremal self-dual code of
length $60$ having weight enumerator $W_{3,60}$ listed in Table~\ref{Tab:Ai60}
for 
\[
\alpha \in 
\{24\beta \mid \beta \in \Gamma_{60,1}\} \cup \{8\beta \mid \beta \in \Gamma_{60,2}\}, 
\]
where $\Gamma_{60,1}$ and $\Gamma_{60,2}$ are listed in Tables~\ref{Tab:W1-60} and
\ref{Tab:W1-60-2}, respectively.
\end{itemize}
\end{prop}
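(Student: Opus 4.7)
The proof is entirely constructive and computational: for each claimed value of $\alpha$ one must exhibit an explicit ternary near-extremal self-dual code whose weight enumerator matches $W_{3,48}$ or $W_{3,60}$ with that value of $\alpha$. The plan is to split each part into the two families of constructions already set up in the section and then verify weight enumerators by \textsc{Magma}.

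For part (1), first I would use the four-negacirculant construction of Section~\ref{Sec:const}, equation~\eqref{eq:4}, with $A,B$ being $12\times 12$ negacirculant matrices over $\FF_3$. One searches over first rows $r_A,r_B\in\FF_3^{12}$, discards cases where the resulting code is not self-dual or does not have minimum weight $12$, and computes $A_{12}=\alpha$ for the survivors. This search yields $86$ codes realizing exactly the $86$ values $\alpha\in\{48\beta:\beta\in\Gamma_{48,1}\}$ with $\Gamma_{48,1}$ as in~\eqref{eq:F3-48}; the explicit first rows $r_A,r_B$ are tabulated (available from~\cite{AH}) and for each one checks by \textsc{Magma} that the weight enumerator coincides with $W_{3,48}$ of Table~\ref{Tab:Ai48} for the claimed $\alpha$. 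Second, I would start from the Pless symmetry code $P'_{48}\cong P_{48}$ described by the bordered double circulant form~\eqref{eq:bdcc} with the circulant first row given in the text, and apply the neighbor construction~\eqref{eq:nei} using vectors $\hat{x}\in\FF_3^{24}$ with $\langle(\0,\hat{x}),(\0,\hat{x})\rangle_E=0$ and $(\0,\hat{x})\notin P'_{48}$. A targeted search over such $\hat x$ yields the $61$ neighbors $N_{48,i}$ listed in Table~\ref{Tab:F3-48-2}; verifying self-duality, minimum weight $12$, and the value of $\alpha=A_{12}$ for each is a routine \textsc{Magma} computation, and the resulting $\alpha$'s cover exactly $\{8\beta:\beta\in\Gamma_{48,2}\}$ with $\Gamma_{48,2}$ as in Table~\ref{Tab:W1-48-2}. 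The union of the two lists gives the claim.

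For part (2), the structure is identical with $n=60$. The four-negacirculant search~\eqref{eq:4} with $A,B$ of size $15\times 15$ over $\FF_3$ produces $126$ self-dual codes of minimum weight $15$; evaluating $A_{15}$ for each gives $\{24\beta:\beta\in\Gamma_{60,1}\}$ with $\Gamma_{60,1}$ as in Table~\ref{Tab:W1-60}. For the second family, I would take $P_{60}$ as the bordered double circulant code with the circulant first row specified in the text, and compute $\Nei_E(P_{60},\hat{x})$ for vectors $\hat x\in\FF_3^{30}$ satisfying the isotropy and non-containment hypotheses of~\eqref{eq:nei}. The $60$ vectors $\hat x$ exhibited in Tables~\ref{Tab:F3-60-2} and~\ref{Tab:F3-60-3} produce $60$ near-extremal codes whose coefficients $A_{15}=\alpha$ sweep exactly $\{8\beta:\beta\in\Gamma_{60,2}\}$ with $\Gamma_{60,2}$ as in Table~\ref{Tab:W1-60-2}. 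Again each weight enumerator is compared against $W_{3,60}$ of Table~\ref{Tab:Ai60} by \textsc{Magma}.

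The only real obstacle is computational rather than conceptual: the four-negacirculant search spaces ($3^{24}$ for $n=48$ and $3^{30}$ for $n=60$) are too large for brute force, and the neighbor searches over $\FF_3^{n/2}$ are similarly large. So the work consists in pruning effectively (fixing standard normalizations of the first row, breaking up the search by the first few coordinates, and filtering early on dimension and minimum-weight checks) to produce codes covering every prescribed value of $\alpha$. Once the explicit data $(r_A,r_B)$ and $\hat x$ are recorded, verification of each weight enumerator against Tables~\ref{Tab:Ai48} and~\ref{Tab:Ai60} is routine, and the proposition follows by concatenating the lists.
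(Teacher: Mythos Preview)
Your proposal is correct and follows essentially the same approach as the paper: the proposition is a summary of the constructions already carried out in the preceding text, namely four-negacirculant codes (first rows available from~\cite{AH}) for the $\{48\beta:\beta\in\Gamma_{48,1}\}$ and $\{24\beta:\beta\in\Gamma_{60,1}\}$ families, and neighbors of $P'_{48}$ and $P_{60}$ (explicit $\hat{x}$'s in Tables~\ref{Tab:F3-48-2}, \ref{Tab:F3-60-2}, \ref{Tab:F3-60-3}) for the $\{8\beta:\beta\in\Gamma_{48,2}\}$ and $\{8\beta:\beta\in\Gamma_{60,2}\}$ families, with all weight enumerators verified by \textsc{Magma}.
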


\subsection{Remarks}

In this section, we constructed  ternary near-extremal self-dual codes
having distinct weight enumerators for lengths $12m$ $(m=3,4,5,6)$
(see Propositions~\ref{prop:F3:72}, \ref{prop:F3:36} and \ref{prop:F3:48}).
The results give rise to the following natural question.

\begin{Q} 
Determine if there is a ternary near-extremal self-dual code
of length $12m$ by constructing or proving the nonexistence
for the remaining cases $\alpha$ in the weight enumerators $W_{3,12m}$
when $m=3,4,5,6$.
 \end{Q}

For the weight enumerator $W_{3,12m}=\sum_{i = 0}^{12m}A_iy^i$ of
a ternary near-extremal self-dual code of length $12m$ $(m=7,8,9,10)$,
$A_i$ are listed in Tables~\ref{Tab:Aiall1} and \ref{Tab:Aiall2}.
For $m \ge 7$, currently known ternary near-extremal self-dual codes of length $12m$
are $QR_{84}$, $P_{84}$, $P_{96}$ and two ternary near-extremal self-dual codes 
of length $108$ in~\cite[Table~3]{GG} and \cite{NV}
(see Table~\ref{Tab:d:F3}).

We verified that 
$QR_{84}$ and $P_{84}$
have weight enumerators $W_{3,84}$ with 
\[
\alpha=2368488 \text{ and  }\alpha=1259520,
\]
respectively.
We verified that 
$P_{96}$ has weight enumerator $W_{3,96}$ with 
\[
\alpha=15358848.
\]
Due to the computational complexity,
it seems infeasible to calculate the number of codewords of minimum
weight of the two codes of length $108$ in~\cite[Table~3]{GG} and \cite{NV}.



\section{Existence of quaternary near-extremal Hermitian self-dual codes of length $6m$}
\label{Sec:F4}

In this section,
we consider the weight enumerators for which there is a
quaternary near-extremal Hermitian self-dual code of length $6m$ for $m =4,5,6$.

\subsection{Length 24}
\label{Sec:F4:24}

The smallest length for which $2m$ is the largest minimum weight
among quaternary Hermitian self-dual codes of length $6m$ is $24$.
The smallest length for which the classification of quaternary 
(near-extremal) Hermitian self-dual codes
of length $6m$ has not been completed is also $24$.
From these viewpoints, 
the most interesting length is $24$.
Here we consider the weight enumerators for which there is a
quaternary near-extremal Hermitian self-dual code of length $24$.

For the weight enumerator $W_{4,24}=\sum_{i = 0}^{24}A_iy^i$ of
a quaternary near-extremal Hermitian self-dual code of length $24$,
$A_i$ are listed in Table~\ref{Tab:WDF4-0}.

\begin{fact}
For the  weight enumerator $W_{4,24}$,
$\alpha =9\beta$ and
\[
\beta \in \{1,2,\ldots,253\}.
\]
\end{fact}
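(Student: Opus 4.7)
The plan follows the template of the analogous facts proved earlier in the paper (e.g.\ for $W_{3,72}$, $W_{3,36}$, and $W_{3,48}$). The statement has two parts: a divisibility/lower-bound part and an upper-bound part.

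For the divisibility and lower bound, recall that $\alpha$ is by definition $A_8$, the number of codewords of minimum weight $2m = 8$ in a quaternary near-extremal Hermitian self-dual code of length $24$. Applying Lemma~\ref{lem:F4} with $m = 4$ immediately gives $\alpha \equiv 0 \pmod 9$, so we may write $\alpha = 9\beta$ for some non-negative integer $\beta$. Because the code is near-extremal (so its minimum weight is attained), we have $A_8 \geq 1$, and combined with divisibility this forces $\alpha \geq 9$, i.e., $\beta \geq 1$.

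For the upper bound, I would use the Gleason-type formula~\eqref{eq:F4:G} with $n = 24$ to express $W_{4,24} = \sum_{i} A_i y^i$ as a one-parameter family indexed by $\alpha$, completely analogous to the examples in~\eqref{eq:F4:W}; this is a routine \textsc{Mathematica} expansion. Each $A_i$ (for $i \in \{0,8,10,12,\ldots,24\}$) is then an affine function of $\alpha$, and the constraint that $W_{4,24}$ be a weight enumerator forces $A_i \geq 0$ for every~$i$. Those coefficients of the shape $c_i - d_i \alpha$ with $d_i > 0$ produce upper bounds on $\alpha$; the tightest such inequality will yield $\alpha \leq 2277$, equivalently $\beta \leq 253$. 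The two sides can be assembled into the stated range $\beta \in \{1,2,\ldots,253\}$.

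The only ``difficulty'' is identifying which coefficient supplies the binding constraint. Judging from the analogous calculations for length $72$ (where $A_{21}$ gave the upper bound via the smallest nontrivial weight after the minimum), I expect the binding coefficient here to be $A_{10}$, the second-smallest weight coefficient, although this should be confirmed by inspecting the explicit form of $W_{4,24}$ displayed in Table~\ref{Tab:WDF4-0}. No conceptual obstacle is anticipated.
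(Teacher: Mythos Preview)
Your proposal is correct and matches the paper's own proof exactly: the paper's one-line argument is ``Follows from Lemma~\ref{lem:F4}, $A_{8} > 0$ and $A_{10} \ge 0$,'' and indeed from Table~\ref{Tab:WDF4-0} the constraint $A_{10}=18216-8\alpha\ge 0$ gives $\alpha\le 2277$, i.e.\ $\beta\le 253$, just as you anticipated.
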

\begin{proof}
Follows from Lemma~\ref{lem:F4},  $A_{8} > 0$ and $A_{10} \ge 0$.
\end{proof}

\begin{table}[thb]
\caption{Weight enumerator $W_{4,24}$}
\label{Tab:WDF4-0}
\centering
\medskip
{\small
\begin{tabular}{c|r||c|r||c|r}
\noalign{\hrule height0.8pt}
$i$ & \multicolumn{1}{c||}{$A_i$}&
$i$ & \multicolumn{1}{c||}{$A_i$}&
$i$ & \multicolumn{1}{c}{$A_i$}\\
\hline
$ 0$ & $1$ & $14$ & $ 1147608 - 56 \alpha$ &
$22$ & $ 1038312 - 8 \alpha$ \\
$ 8$ & $ \alpha $ & $16$ & $ 3736557 + 70 \alpha$ &
$24$ & $ 32778 + \alpha$ \\
$10$ & $ 18216 - 8 \alpha$ & $18$ & $ 6248088 - 56 \alpha$ &
& \\
$12$ & $ 156492 + 28 \alpha$ & $20$ & $ 4399164 + 28 \alpha$ &
& \\
\noalign{\hrule height0.8pt}
\end{tabular}
}
\end{table}


Quaternary  near-extremal  Hermitian self-dual codes of length $24$
having weight enumerators $W_{4,24}$
are known, where
\begin{align*}
\alpha=& 756  \text{ (see~\cite{CP})},\\
\alpha=& 522, 1089, 594, 2277  \text{ (see~\cite{G00})},\\
\alpha=& 
513, 540, 549, 576, 621, 
630, 657,684, 702, 738, 765, 792\text{ (see~\cite{K01})},\\
\alpha=& 
513, 522, 594, 630, 657, 684, 702, 738, 756, 765, 792, 837, 846, 900, 
\\&
954, 981, 1089, 1197, 1242, 1413, 2277\text{ (see~\cite{R02})}.
\end{align*}
\begin{rem}
It is claimed in~\cite{K01} 
that the code $C_{24,11}$ in~\cite[Table~IV]{K01} 
has weight enumerator $W_{4,24}$ with $\alpha=627$
($\beta=114$ in the notation of~\cite{K01}).
However, this value is incorrect, as it contradicts Lemma~\ref{lem:F4}.
We verified that the correct value is $657$
($\beta=144$ in the notation of~\cite{K01}).
Moreover, we point out that $\overline{(l_i|r_i)}$ in Theorem~1 of~\cite{K01} is
${(l_i|r_i)}$.
\end{rem}

All quaternary near-extremal Hermitian self-dual double circulant codes of length $24$ 
are known~\cite[Table~I]{G00}.
The code $C_{24,4}$ in~\cite[Table~I]{G00} is a bordered double circulant code
with generator matrix of form~\eqref{eq:bdcc}, 
where the circulant matrix $A$ has with first row
$(1,\ww,1,1,\ww,1,0,0,0,0,0)$.
By considering neighbors,
we found two quaternary near-extremal Hermitian self-dual codes 
$N_{24,i}=\Nei_H(C_{24,4},y)=\Nei_H(C_{24,4},\hat{y})$ $(i=1,2)$ 
(see~\eqref{eq:neiH} for $\Nei_H(C,y)$),
where 
\begin{align*}
\hat{y}&=(0,0,0,0,1,\vv,1,1,\ww,\vv,1,1), \\
\hat{y}&=(1,0,0,0,\vv,\vv,1,\ww,\ww,0,1,1),
\end{align*}
respectively.
These codes $N_{24,1}$ and $N_{24,2}$ have weight enumerators $W_{4,24}$ with
$\alpha=864$ and $1026$, respectively.

In summary, we have the following:

\begin{prop}\label{prop:F4:24}
There is a quaternary near-extremal Hermitian self-dual code of
length $24$ having weight enumerator $W_{4,24}$ listed in Table~\ref{Tab:WDF4-0}
for 
\begin{align*}
\alpha \in \left\{9 \beta~\middle|~ \beta \in 
\left\{\begin{array}{l}
57, 58, 60, 61, 64, 66, 69, 70, 73, 76, 78, 82, 84, 85, 88, \\
93, 94, 96, 100, 106, 109, 114, 121, 133, 138, 157, 253
\end{array}\right\} 
\right\}.
\end{align*}
\end{prop}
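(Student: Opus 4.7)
The plan is constructive: exhibit at least one quaternary near-extremal Hermitian self-dual code of length $24$ for each of the $27$ claimed values of $\beta$, and then collect the resulting list. Since $\alpha=A_8$ is a code invariant, it suffices to produce one code per $\beta$ and compute its weight enumerator by \textsc{Magma}.

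First, I would harvest the codes already available in the literature. From \cite{CP} the value $\beta=84$ (i.e.\ $\alpha=756$) is realized; from \cite{G00}, using four-circulant and bordered double circulant constructions, we obtain $\beta \in \{58, 66, 121, 253\}$; from \cite{K01} and \cite{R02} we obtain all remaining $\beta$ in the target set except for two. Before using any of these values, one has to verify them by reconstructing the codes from the generator data in the cited papers and computing $A_8$ directly in \textsc{Magma}; this is essentially bookkeeping but is necessary because, as the authors note, the value quoted for $C_{24,11}$ in \cite[Table~IV]{K01} is incorrect (it violates Lemma~\ref{lem:F4}, which forces $9\mid \alpha$), and the correct value is $\alpha=657$, corresponding to $\beta=73$.

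Next, to fill in the two remaining values $\beta=96$ and $\beta=114$, I would use the neighbor method of \eqref{eq:neiH}. Starting from the bordered double circulant code $C_{24,4}$ from \cite[Table~I]{G00}, whose generator matrix is recorded explicitly via the first row $(1,\omega,1,1,\omega,1,0,0,0,0,0)$ of its circulant block, I would search over vectors of the form $x=(0,0,\ldots,0,\hat{y})$ with $\hat{y}\in\FF_4^{12}$ satisfying $\langle x,x\rangle_H=0$ and $x\notin C_{24,4}$, and compute the weight enumerators of the resulting neighbors. Two explicit vectors $\hat{y}$ yielding near-extremal neighbors $N_{24,1}$ and $N_{24,2}$ with $\alpha=864$ and $\alpha=1026$ respectively are recorded in the text, so the search is guaranteed to succeed; \textsc{Magma} then verifies that these neighbors are self-dual of minimum weight $8$ and have the claimed $A_8$.

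Finally, I would take the union of the $\beta$ values realized by the literature codes (with the corrected value from \cite{K01}) and by the two new neighbors $N_{24,1},N_{24,2}$, and check by inspection that this union equals the set
\[
\{57, 58, 60, 61, 64, 66, 69, 70, 73, 76, 78, 82, 84, 85, 88, 93, 94, 96, 100, 106, 109, 114, 121, 133, 138, 157, 253\}
\]
appearing in the statement. The only nontrivial step is the search over $\hat{y}$ for neighbors with the two missing values of $\alpha$; this is a modest computer search, easily handled by \textsc{Magma}, but it is the one place where the proof genuinely requires new construction rather than citation and verification.
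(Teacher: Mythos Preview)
Your proposal is correct and follows essentially the same approach as the paper: harvest the known $\alpha$ values from \cite{CP}, \cite{G00}, \cite{K01}, \cite{R02} (with the correction to \cite{K01}), observe that this leaves exactly $\beta\in\{96,114\}$ unrealized, and supply these via the two explicit neighbors $N_{24,1},N_{24,2}$ of $C_{24,4}$. The only minor inaccuracy is that the codes in \cite{G00} are double circulant rather than four-circulant, but this does not affect the argument.
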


\subsection{Length 30}
\label{Sec:F4:30}

There is a quaternary extremal Hermitian self-dual code of length 
$30$~\cite{MOSW} (see also~\cite{G00}).
Here we consider the weight enumerators for which there is a
quaternary near-extremal Hermitian self-dual code of length $30$.

For the weight enumerator $W_{4,30}=\sum_{i = 0}^{30}A_iy^i$ of
a quaternary near-extremal Hermitian self-dual code of length $30$,
$A_i$ are listed in Table~\ref{Tab:WDF4-1}.

\begin{fact}
For the  weight enumerator $W_{4,30}$,
$\alpha =9\beta$ and
\[
\beta \in \{1,2,\ldots,1319\}.
\]
\end{fact}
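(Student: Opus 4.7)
The plan is to mimic exactly the argument used for the analogous facts at lengths $6, 12, 18, 24$ (e.g.\ the Fact preceding Table~\ref{Tab:WDF4-0}). First I would invoke the Gleason-type theorem~\eqref{eq:F4:G}: for $n = 30$ the weight enumerator of any quaternary Hermitian self-dual code of length $30$ has the form
\[
W = \sum_{j=0}^{5} a_j (1+3y^2)^{15-3j}\bigl(y^2(1-y^2)^2\bigr)^j.
\]
The code being near-extremal of length $30$ means its minimum weight equals $2\lfloor 30/6\rfloor = 10$, so $A_0 = 1$ and $A_2 = A_4 = A_6 = A_8 = 0$. These five linear conditions on the six coefficients $a_0,\ldots,a_5$ pin them down up to one free parameter, which one may take to be $\alpha = A_{10}$. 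Expanding out (the computation in the paper is carried out by \textsc{Mathematica}) then produces each $A_{2i}$, for $i = 5,6,\ldots,15$, as an explicit affine function $s_{2i} + t_{2i}\alpha$ with $s_{2i}, t_{2i} \in \mathbb{Z}$.

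Second, Lemma~\ref{lem:F4} applied to this code gives $A_{10} \equiv 0 \pmod 9$, so $\alpha = 9\beta$. Since the code actually exists and has minimum weight $10$, we have $\alpha > 0$ and hence $\beta \geq 1$, which supplies the lower bound.

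Third, for the upper bound I would inspect the list of coefficients $A_{2i}$ to identify which one is a decreasing linear function of $\alpha$ (i.e.\ has $t_{2i} < 0$) with the smallest zero-crossing in $\alpha$. By direct analogy with the situation at length $24$, where the binding constraint was $A_{10} \geq 0$ (the coefficient immediately above the minimum weight), the expected candidate here is $A_{12}$. Imposing $A_{12} \geq 0$ — and checking that it is indeed tighter than the other $A_{2i} \geq 0$ for $i \geq 7$ — should yield $\alpha \leq 9 \cdot 1319$, that is, $\beta \leq 1319$.

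The main obstacle is not conceptual but purely bookkeeping: one must reliably extract the coefficients $s_{2i}, t_{2i}$ from the Gleason expansion and then verify which inequality $A_{2i} \geq 0$ is sharpest. Both steps are entirely mechanical and are handled in the paper by \textsc{Mathematica}, following the template already used for $W_{4,6}, W_{4,12}, W_{4,18}, W_{4,24}$; no new idea beyond Lemma~\ref{lem:F4} is required.
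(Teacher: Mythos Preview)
Your proposal is correct and matches the paper's proof exactly: the paper's one-line justification is ``Follows from Lemma~\ref{lem:F4}, $A_{10} > 0$ and $A_{12} \ge 0$,'' which is precisely your argument (divisibility from Lemma~\ref{lem:F4}, the lower bound from $\alpha = A_{10} > 0$, and the upper bound from $A_{12} = 118755 - 10\alpha \ge 0$ giving $9\beta \le 11875$, hence $\beta \le 1319$).
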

\begin{proof}
Follows from Lemma~\ref{lem:F4},  $A_{10} > 0$ and $A_{12} \ge 0$.
\end{proof}

\begin{table}[thb]
\caption{Weight enumerator $W_{4,30}$}
\label{Tab:WDF4-1}
\centering
\medskip
{\small
\begin{tabular}{c|r||c|r||c|r}
\noalign{\hrule height0.8pt}
$i$ & \multicolumn{1}{c||}{$A_i$}&
$i$ & \multicolumn{1}{c||}{$A_i$}&
$i$ & \multicolumn{1}{c}{$A_i$}\\
\hline
$ 0 $&$ 1$ &$16 $&$ 12038625  - 120 \alpha $ &
$24 $&$ 312800670  - 120 \alpha $\\
$10 $&$ \alpha $ &$18 $&$ 61752600  + 210 \alpha $ &
$26 $&$ 129570840  + 45 \alpha $ \\
$12 $&$ 118755 - 10 \alpha $ &$20 $&$ 195945750 - 252 \alpha $ &
$28 $&$ 18581895  - 10 \alpha $\\
$14 $&$ 1151010  + 45 \alpha $ &$22 $&$ 341403660  + 210 \alpha $&
$30 $&$ 378018  +  \alpha $ \\
\noalign{\hrule height0.8pt}
\end{tabular}
}
\end{table}

All quaternary extremal Hermitian self-dual double circulant codes of length $30$
are known~\cite[Table~I]{G00}.
The code $C_{30}$ in~\cite[Table~I]{G00} has the following
generator matrix
$
\left(\begin{array}{ccccccccc}
I_{15} & R
\end{array}\right)$,
where $R$ is the circulant matrix with first row
$(1,1,1,\ww,0,\ww,\vv,\ww,0,\ww,1,1,1,0,0)$.
By considering neighbors,
we found $19$ quaternary near-extremal Hermitian self-dual codes
$N_{30,i}=\Nei_H(C_{30},\hat{y})$ of length $30$,
where $\hat{y}$ and the values $\alpha$ of the weight enumerators
are listed in Table~\ref{Tab:F4-30} $(i=1,2,\ldots,19)$.

\begin{table}[thb]
\caption{Quaternary near-extremal Hermitian self-dual codes $N_{30,i}$}
\label{Tab:F4-30}
\centering
\medskip
{\footnotesize
\begin{tabular}{c|c|c|l}
\noalign{\hrule height0.8pt}
$i$ & $C$  & $\alpha$ & \multicolumn{1}{c}{$\hat{y}$}\\
\hline
${ 1}$ & $C_{30}$ & $1917$ & $(0,1,0,0,0,\vv,0,1,\vv,1,\vv,\vv,1,\vv,1)$ \\
${ 2}$ & $C_{30}$ & $2088$ & $(0,1,0,0,0,1,0,\ww,1,\ww,1,1,\ww,1,1)$ \\
${ 3}$ & $C_{30}$ & $2268$ & $(1,\vv,0,0,0,\vv,1,0,0,\vv,1,1,1,1,1)$ \\
${ 4}$ & $C_{30}$ & $2349$ & $(1,0,0,0,0,\ww,0,1,\ww,1,1,\vv,\ww,1,1)$ \\
${ 5}$ & $C_{30}$ & $2430$ & $(1,1,0,0,0,\vv,1,\ww,0,\ww,\vv,\ww,0,\ww,1)$ \\
${ 6}$ & $C_{30}$ & $2520$ & $(0,0,0,0,0,\ww,1,\ww,1,1,\ww,1,\ww,1,1)$ \\
${ 7}$ & $C_{30}$ & $2592$ & $(0,1,0,0,0,\ww,\ww,1,1,\ww,\ww,\vv,1,0,1)$ \\
${ 8}$ & $C_{30}$ & $2673$ & $(1,0,0,0,0,1,1,\ww,\ww,\ww,\vv,\ww,0,\ww,1)$ \\
${ 9}$ & $C_{30}$ & $2700$ & $(0,1,0,0,0,\ww,\ww,1,1,\ww,\ww,\vv,\vv,0,1)$ \\
${10}$ & $C_{30}$ & $2781$ & $(1,0,0,0,0,1,\ww,1,\ww,\vv,\ww,0,\ww,1,1)$ \\
${11}$ & $C_{30}$ & $2808$ & $(0,0,0,0,0,\vv,\ww,1,\vv,1,\vv,\vv,1,\vv,1)$ \\
${12}$ & $C_{30}$ & $2862$ & $(1,0,0,0,0,1,\vv,\ww,1,1,\vv,\vv,\vv,0,1)$ \\
${13}$ & $C_{30}$ & $2898$ & $(1,0,0,0,0,\vv,\ww,\ww,1,1,\ww,\ww,\vv,0,1)$ \\
${14}$ & $C_{30}$ & $2925$ & $(1,0,0,0,0,\ww,1,\ww,1,1,\ww,1,\ww,0,1)$ \\
${15}$ & $C_{30}$ & $2970$ & $(0,0,0,0,0,\ww,1,\ww,1,1,\ww,1,\ww,\ww,1)$ \\
${16}$ & $C_{30}$ & $3024$ & $(1,0,0,0,0,1,\ww,\vv,\vv,1,\ww,0,1,\vv,1)$ \\
${17}$ & $C_{30}$ & $3060$ & $(0,1,0,0,0,\ww,1,1,1,\ww,\ww,\vv,\vv,0,1)$ \\
${18}$ & $C_{30}$ & $3105$ & $(1,0,0,0,0,1,\vv,\ww,1,\vv,\vv,\vv,\vv,0,1)$ \\
${19}$ & $C_{30}$ & $3168$ & $(1,0,0,0,0,1,\ww,\vv,\ww,\vv,\ww,0,\ww,1,1)$ \\
\hline
${20}$ & $D_{30,1}$& $3213$&$(1,\ww,\ww,0,1,\ww,0,\ww,0,0,\vv,1,\vv,0,1)$\\
${21}$ & $D_{30,1}$& $3240$&$(1,\ww,1,\ww,1,\vv,0,1,\ww,0,1,\vv,0,\vv,1)$\\
${22}$ &$N_{30,20}$ & $3186$&$(1,\ww,1,\ww,1,\ww,0,\ww,1,\vv,1,\ww,1,\vv,1)$\\
${23}$ &$N_{30,20}$ & $3222$&$(1,0,1,\ww,\ww,\ww,1,1,1,\vv,1,\vv,\vv,\vv,1)$\\
\hline
\noalign{\hrule height0.8pt}
\end{tabular}
}
\end{table}

By considering $\ww$-circulant codes,
we found a quaternary near-extremal Hermitian self-dual code 
$D_{30,1}$ with generator matrix of form~\eqref{eq:w}, 
where the $\ww$-circulant matrix $A$ has the following  first row:
\[
(0, 0, 0, \ww, \ww, 1, 1, 1, \vv, \ww, \ww, 0, \ww, \vv, \vv).
\]
The code $D_{30,1}$ has weight enumerator $W_{4,30}$ with $\alpha=3249$.
Moreover, 
by considering neighbors of $D_{30,1}$ and its neighbors,
we found four more quaternary near-extremal Hermitian self-dual codes
$N_{30,i}=\Nei_H(C,\hat{y})$ of length $30$,
where $C$, $\hat{y}$  and the values $\alpha$ of the weight enumerators
are listed in Table~\ref{Tab:F4-30} $(i=20,21,22,23)$.

In summary, we have the following:

\begin{prop}\label{prop:F4:30}
There is a quaternary near-extremal Hermitian self-dual code of
length $30$ having weight enumerator $W_{4,30}$ listed in Table~\ref{Tab:WDF4-1}
for 
\begin{align*}
\alpha \in \left\{9 \beta ~\middle|~ \beta \in 
\left\{\begin{array}{l}
213, 232, 252, 261, 270, 280, 288, 297, \\
300, 309, 312, 318, 322, 325, 330, 336, \\
340, 345, 352,  354, 357, 358, 360, 361
\end{array}\right\} 
\right\}.
\end{align*}
\end{prop}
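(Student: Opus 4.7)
The plan is to exhibit, for each of the 24 values of $\alpha$ in the stated set, an explicit quaternary near-extremal Hermitian self-dual code of length $30$ whose weight enumerator matches $W_{4,30}$ with that parameter. Since the Gleason-type expansion~\eqref{eq:F4:G} together with the minimum-weight condition $d \ge 10$ pins down the whole weight enumerator once $A_{10}=\alpha$ is known, it suffices to build a code with minimum weight $10$ and to count its codewords of weight $10$.

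First I would take as a starting point the quaternary extremal Hermitian self-dual double circulant code $C_{30}$ of length $30$ from~\cite[Table~I]{G00}, described explicitly via its first row of the circulant block. From $C_{30}$ I would apply the neighbor construction~\eqref{eq:neiH}: for each vector $\hat{y}\in\FF_4^{15}$ with $\langle (0,\ldots,0,\hat{y}),(0,\ldots,0,\hat{y})\rangle_H=0$ and $(0,\ldots,0,\hat{y})\not\in C_{30}$, form $\Nei_H(C_{30},\hat{y})$. A \textsc{Magma} search over such $\hat{y}$ selects those neighbors that have minimum weight exactly $10$ (so they are near-extremal), and then $A_{10}$ is computed directly. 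The goal of this search is to realize as many of the target $\alpha=9\beta$ values as possible; the entries $N_{30,1},\ldots,N_{30,19}$ of Table~\ref{Tab:F4-30} record the outcome and cover $\beta\in\{213,232,252,261,270,280,288,297,300,309,312,318,322,325,330,336,340,345,352\}$.

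The remaining values $\beta\in\{354,357,358,360,361\}$ are not reached from $C_{30}$ by a single neighbor step, so I would switch construction. I would search among $\omega$-circulant codes of the form~\eqref{eq:w}, parametrized by the first row of the $15\times 15$ $\omega$-circulant matrix $A$, retain only those giving Hermitian self-dual codes of minimum weight $10$, and compute their weight enumerators. This produces the code $D_{30,1}$ listed in the table, with $\alpha=3249=9\cdot 361$. From $D_{30,1}$ I would run another round of the neighbor construction to obtain two new weight enumerators ($\beta=357,360$), and then iterate the neighbor construction on one of those neighbors, $N_{30,20}$, to pick up $\beta=354,358$. These four codes are $N_{30,20}$--$N_{30,23}$ in Table~\ref{Tab:F4-30}.

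Finally I would verify, using \textsc{Magma}, that each constructed code is Hermitian self-dual, has minimum weight $10$, and realizes the stated $\alpha$; the conclusion then follows because the Gleason-type formula~\eqref{eq:F4:G} determines the rest of $W_{4,30}$ from $\alpha=A_{10}$. The main obstacle is purely computational: there is no structural reason the neighbors of $C_{30}$ should exhaust the target set, so one must guess a second seed (the $\omega$-circulant $D_{30,1}$) and chain the neighbor construction to cover exactly the remaining five values. Finding an $\omega$-circulant first row that is both self-dual and near-extremal, and whose neighbor orbit reaches the missing $\beta$'s, is the delicate step and is where most of the search effort must be invested.
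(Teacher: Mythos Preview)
Your proposal is correct and follows essentially the same approach as the paper: start from the extremal double circulant code $C_{30}$ of~\cite{G00}, take neighbors $\Nei_H(C_{30},\hat{y})$ to cover the first nineteen values of $\beta$, then switch to an $\omega$-circulant seed $D_{30,1}$ for $\beta=361$ and chain the neighbor construction through $D_{30,1}$ and $N_{30,20}$ to pick up $\beta\in\{357,360\}$ and $\beta\in\{354,358\}$ respectively. The breakdown of which $\beta$ comes from which seed matches the paper's Table~\ref{Tab:F4-30} exactly.
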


\subsection{Length 36}
\label{Sec:F4:36}

It is not known whether 
there is a quaternary extremal Hermitian self-dual code of length 
$36$ (see~\cite{G00}).
Here we consider the weight enumerators for which there is a
quaternary near-extremal Hermitian self-dual code of length $36$.

For the weight enumerator
$W_{4,36}=\sum_{i = 0}^{36}A_iy^i$ of 
a quaternary near-extremal Hermitian self-dual code of length $36$,
$A_i$ are listed in Table~\ref{Tab:WDF4-2}.

\begin{fact}
For the  weight enumerator $W_{4,36}$,
$\alpha =9\beta$ and
\[
\beta \in \{1,2,\ldots,7140\}.
\]
\end{fact}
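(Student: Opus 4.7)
The plan is to follow exactly the template used for the preceding facts about $W_{4,24}$ and $W_{4,30}$. First I would invoke Lemma~\ref{lem:F4} with $m=6$ to conclude that $\alpha = A_{12}$ satisfies $\alpha \equiv 0 \pmod 9$, and hence write $\alpha = 9\beta$ for some non-negative integer $\beta$. The lower bound $\beta \ge 1$ is immediate from the near-extremal hypothesis: since the minimum weight is $2\lfloor 36/6 \rfloor = 12$, there exists at least one codeword of weight $12$, so $A_{12} > 0$, which combined with the divisibility gives $\alpha \ge 9$.

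For the upper bound $\beta \le 7140$, I would compute $W_{4,36}$ explicitly via the Gleason-type formula~\eqref{eq:F4:G}, imposing the near-extremal conditions $A_0=1$, $A_2=A_4=\cdots=A_{10}=0$ and $A_{12}=\alpha$, and then solving for the Gleason coefficients $a_0,a_1,\ldots,a_5$ in terms of $\alpha$; the remaining coefficient $a_6$ contributes only to $A_{36}$ and can be read off from $A_{36}$. This yields every $A_{2i}$ as an affine function $s_{2i} + t_{2i}\alpha$ of $\alpha$, which can be computed via \textsc{Mathematica} as in the proof of Theorem~\ref{thm:F4}. By analogy with $A_{10}=18216-8\alpha$ for $n=24$ and $A_{12}=118755-10\alpha$ for $n=30$, the tightest upper constraint is expected to come from $A_{14} \ge 0$, of the form $A_{14}=771120-12\alpha$, giving $\alpha \le 64260$ and thus $\beta \le 7140$.

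The routine step is the computation itself; the only point requiring care is confirming that $A_{14} \ge 0$ is in fact the binding inequality among $A_{2i}\ge 0$ for $i=7,8,\ldots,18$. This reduces to checking finitely many linear inequalities in $\alpha$, and one verifies that each $A_{2i}$ with negative linear coefficient in $\alpha$ gives a larger or equal upper bound than $A_{14}$, while those with positive coefficient give only lower bounds consistent with $\beta \ge 1$. Concluding with the integrality $\alpha = 9\beta$ then yields the stated range $\beta \in \{1,2,\ldots,7140\}$.
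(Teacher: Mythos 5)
Your proposal is correct and follows essentially the same route as the paper, whose entire proof is ``Follows from Lemma~\ref{lem:F4}, $A_{12}>0$ and $A_{14}\ge 0$'' applied to the tabulated enumerator $A_{14}=771120-12\alpha$, giving $9\le\alpha\le 64260$ and hence $\beta\in\{1,\ldots,7140\}$; your extra check that $A_{14}\ge 0$ is the binding inequality is sound (the other negative coefficients $-220$, $-792$ yield far weaker bounds). One immaterial slip: the $a_6$ term in~\eqref{eq:F4:G} is $y^{12}(1-y^2)^{12}$, which contributes to every $A_{2i}$ with $i\ge 6$, not only to $A_{36}$ --- but since all seven coefficients $a_0,\ldots,a_6$ are determined by the triangular system $A_0=1$, $A_2=\cdots=A_{10}=0$, $A_{12}=\alpha$, your computation goes through unchanged.
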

\begin{proof}
Follows from Lemma~\ref{lem:F4},  $A_{12} > 0$ and $A_{14} \ge 0$.
\end{proof}

\begin{table}[thb]
\caption{Weight enumerator $W_{4,36}$}
\label{Tab:WDF4-2}
\centering
\medskip
{\small
\begin{tabular}{c|r||c|r}
\noalign{\hrule height0.8pt}
$i$ & \multicolumn{1}{c||}{$A_i$}&
$i$ & \multicolumn{1}{c}{$A_i$}\\
\hline
$ 0 $&$ 1$ &$24$ & $10270161720  + 924 \alpha$\\
$12 $&$ \alpha $ &$26$ & $18820571616  - 792 \alpha$\\
$14$ & $771120  - 12 \alpha$ &$28$ & $20136623760  + 495 \alpha$\\
$16$ & $7846146  + 66 \alpha$ &$30$ & $11676761712  - 220 \alpha$\\
$18$ & $106611120  - 220 \alpha$ &$32$ & $3175038405  + 66 \alpha$\\
$20$ & $732024216  + 495 \alpha$ &$34$ & $306134640  - 12 \alpha$\\
$22$ & $3482598240  - 792 \alpha$ &$36$ & $4334040  + \alpha$\\
\noalign{\hrule height0.8pt}
\end{tabular}
}
\end{table}

There are two inequivalent quaternary near-extremal Hermitian self-dual double circulant
codes of length $36$~\cite{GHM}.
The two codes are denoted by $P_{36,1}$ and $P_{36,2}$ in~\cite{GHM}.
The codes $P_{36,1}$ and $P_{36,2}$ in~\cite{GHM} have weight enumerators  $W_{4,36}$
with $\alpha=20844$ and $28764$, respectively~\cite{GHM}.
Recently, quaternary near-extremal Hermitian self-dual codes of
length $36$ having weight enumerator $W_{4,36}$ with
$\alpha=19548$ and $22149$ have been found in~\cite{Roberts}.

Our extensive search failed to discover a quaternary near-extremal Hermitian self-dual 
code of length $36$ having  weight enumerator  $W_{4,36}$
with $\alpha \not\in \{19548, 20844,22149, 28764\}$ by considering 
four-circulant codes and $\ww$-circulant codes.

In summary, we have the following:

\begin{prop}\label{prop:F4:36}
There is a quaternary near-extremal Hermitian self-dual code of
length $36$ having weight enumerator $W_{4,36}$ listed in Table~\ref{Tab:WDF4-2}
for 
\[
\alpha \in \{19548, 20844,22149, 28764\}.
\]
\end{prop}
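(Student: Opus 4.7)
The plan is to prove Proposition~\ref{prop:F4:36} by exhibiting an explicit near-extremal Hermitian self-dual code of length $36$ realizing each of the four values $\alpha \in \{19548, 20844, 22149, 28764\}$. Once such a code $C$ is produced, the Gleason-type parametrization~\eqref{eq:F4:G} together with Theorem~\ref{thm:F4} forces the whole weight enumerator to equal the $W_{4,36}$ listed in Table~\ref{Tab:WDF4-2} at the corresponding value of $\alpha$; in particular it suffices to verify $A_{12}(C)=\alpha$ for one choice of code per value.

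First I would recall from~\cite{GHM} the two inequivalent near-extremal Hermitian self-dual double circulant codes of length $36$, denoted $P_{36,1}$ and $P_{36,2}$, for which $A_{12}=20844$ and $A_{12}=28764$ respectively are computed in that reference. This handles two of the four prescribed values directly. For the remaining two values, I would invoke the two near-extremal Hermitian self-dual codes of length $36$ recently constructed in~\cite{Roberts}, whose weight enumerators realize $\alpha=19548$ and $\alpha=22149$.

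To make the verification self-contained, for each of the four cited codes I would independently recompute $A_{12}$ in \textsc{Magma} from the explicit generator matrix supplied by the respective reference, and check also that the minimum weight is $12=2\lfloor 36/6\rfloor$, so that each code is indeed near-extremal (and not merely near-extremal in the sense of having many codewords of weight $12$ on top of some smaller-weight codewords). Since the codes are of length $36$ and dimension $18$ over $\FF_4$, these weight-distribution computations are routine and do not present any computational obstacle.

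The only conceptual point worth flagging is the matching of the four independently constructed codes to the single-parameter family of candidate weight enumerators: this is precisely what Theorem~\ref{thm:F4}, via the one-dimensional Gleason parametrization in which $\alpha=A_{12}$, guarantees. No uniqueness of the realizing code for a given $\alpha$ is claimed, and no nonexistence for other $\alpha$ is claimed either; indeed, the surrounding text only records that an extensive computer search over four-circulant and $\omega$-circulant codes failed to yield additional values of $\alpha$, leaving the determination of the full spectrum of admissible $\alpha$ as an open problem rather than as part of the proposition being proved.
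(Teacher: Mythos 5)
Your proposal matches the paper's own proof: the paper establishes the four values in exactly this way, citing the two inequivalent double circulant codes $P_{36,1}$ and $P_{36,2}$ of~\cite{GHM} (realizing $\alpha=20844$ and $28764$) and the two codes of~\cite{Roberts} (realizing $\alpha=19548$ and $22149$), with the full weight enumerator pinned down by the one-parameter family in Table~\ref{Tab:WDF4-2} once $A_{12}$ is known. One minor correction: that parametrization with $\alpha=A_{12}$ follows from the Gleason type theorem~\eqref{eq:F4:G} together with near-extremality (the vanishing of $A_2,\ldots,A_{10}$) alone, so your invocation of Theorem~\ref{thm:F4} there is superfluous --- the divisibility result plays no role in this proposition.
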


\subsection{Remarks}

In this section, we constructed quaternary near-extremal Hermitian self-dual codes  
having distinct weight enumerators for lengths $6m$ $(m=4,5)$
(see Propositions~\ref{prop:F4:24} and \ref{prop:F4:30}).
We also discussed quaternary near-extremal Hermitian self-dual codes  
having distinct weight enumerators for length $36$ (see Proposition~\ref{prop:F4:36}).
The results give rise to the following natural question.

\begin{Q} 
Determine if there is a quaternary near-extremal Hermitian self-dual code of length $6m$ 
by constructing or proving the nonexistence
for the remaining cases $\alpha$ in the weight enumerators $W_{4,6m}$
when $m=4,5,6$.
\end{Q}

For the weight enumerator $W_{4,6m}=\sum_{i = 0}^{6m}A_iy^i$ of 
a quaternary near-extremal Hermitian self-dual code of length $6m$ ($m=7,8,9,10$),
$A_i$ are listed in Table~\ref{Tab:WDF4-all}.
As described above,
it is currently not known whether there is a quaternary near-extremal Hermitian self-dual code
of length $6m$ or not $(m=7,8,9,10)$~\cite[Table~5]{GG} (see Table~\ref{Tab:d:F4}).

\bigskip
\noindent
\textbf{Acknowledgments.}
This work was supported by JSPS KAKENHI Grant Numbers 19H01802 and 21K03350.

%
%
%
%
%



\begin{landscape}

\begin{table}[thb]
\caption{Ternary near-extremal self-dual codes $C_{72,i}$ of length $72$}
\label{Tab:F3-72-Ito2}
\centering
\medskip
{\small
\begin{tabular}{c|r|llll}
\noalign{\hrule height0.8pt}
$i$ & 
\multicolumn{1}{c|}{$\alpha$} & \multicolumn{1}{c}{$r_A$}
&\multicolumn{1}{c}{$r_B$} &\multicolumn{1}{c}{$r_C$}
&\multicolumn{1}{c}{$r_D$} \\
\hline
1& $200400$&$(2,2,0,1,2,0,2,1,0)$&$(0,2,0,2,1,0,2,0,0)$&$(2,0,0,0,2,0,2,1,1)$&$(2,0,2,0,0,0,2,2,2)$ \\
2& $202344$&$(2,2,0,1,0,0,0,0,0)$&$(0,0,1,0,1,2,0,1,2)$&$(0,1,2,1,2,1,2,0,0)$&$(2,2,1,0,0,1,2,2,0)$ \\
3& $202608$&$(1,2,1,0,0,1,0,1,0)$&$(0,2,2,1,1,1,2,0,1)$&$(2,0,0,1,1,1,2,2,2)$&$(2,2,0,0,1,2,2,2,1)$ \\
4& $202752$&$(2,2,0,0,1,0,0,2,2)$&$(2,2,0,0,1,0,0,2,2)$&$(0,0,1,2,0,2,0,1,1)$&$(0,2,2,2,1,0,0,0,1)$ \\
5& $203040$&$(2,1,2,2,2,0,0,0,0)$&$(1,2,1,0,0,0,0,2,2)$&$(2,1,2,2,0,0,2,1,0)$&$(2,0,2,2,2,1,0,1,2)$ \\
6& $203160$&$(0,0,0,1,1,2,1,1,2)$&$(1,2,0,1,1,2,2,1,1)$&$(0,0,0,1,1,1,0,2,1)$&$(1,0,2,2,1,2,0,2,2)$ \\
7& $203616$&$(2,0,1,0,0,0,2,0,1)$&$(2,1,1,1,1,2,2,0,1)$&$(1,0,1,0,2,0,1,1,1)$&$(2,1,0,1,0,1,0,0,2)$ \\
8& $203664$&$(0,0,0,0,1,0,1,0,1)$&$(0,2,0,0,0,0,1,0,1)$&$(0,1,0,2,2,1,1,2,1)$&$(2,0,0,0,2,1,0,0,2)$ \\
9& $204048$&$(1,2,0,0,2,0,2,1,1)$&$(0,0,1,2,1,1,2,0,2)$&$(2,0,2,2,0,1,1,0,0)$&$(2,0,0,0,2,1,2,2,2)$ \\
10& $204168$&$(2,2,0,1,2,1,0,2,1)$&$(1,0,2,2,0,1,1,2,1)$&$(2,2,2,2,2,1,0,2,0)$&$(2,1,2,2,1,2,2,2,0)$ \\
11& $204648$&$(0,0,2,2,0,1,0,0,1)$&$(0,2,0,2,0,0,1,2,2)$&$(1,2,1,0,0,0,0,1,2)$&$(0,2,1,1,2,0,1,2,0)$ \\
12& $204696$&$(0,2,0,2,0,1,0,2,1)$&$(0,1,1,2,2,1,1,1,0)$&$(2,0,2,0,2,2,0,0,1)$&$(1,1,0,2,2,0,2,2,0)$ \\
13& $204816$&$(2,2,0,0,0,0,2,1,0)$&$(1,2,0,1,0,2,2,0,0)$&$(1,1,2,0,0,2,0,0,0)$&$(2,2,0,0,0,2,0,0,2)$ \\
14& $205104$&$(2,1,2,0,2,2,0,2,2)$&$(0,0,2,1,1,0,2,0,2)$&$(2,1,0,0,1,0,0,1,1)$&$(2,1,1,1,1,2,1,1,1)$ \\
15& $205248$&$(1,1,0,0,2,2,1,1,1)$&$(0,1,0,2,0,1,1,0,0)$&$(2,2,2,1,1,1,0,0,0)$&$(0,2,1,1,2,0,2,2,0)$ \\
16& $205392$&$(2,2,1,0,1,1,2,1,1)$&$(1,1,1,1,2,1,1,0,1)$&$(2,2,1,1,2,0,2,2,0)$&$(0,2,1,2,2,0,0,1,1)$ \\
17& $205752$&$(0,0,2,2,1,0,1,2,0)$&$(1,1,0,2,0,1,1,0,0)$&$(2,2,2,1,1,0,0,1,2)$&$(2,0,1,0,1,1,2,0,2)$ \\
18& $205848$&$(0,0,0,2,1,2,2,1,1)$&$(2,2,0,1,1,1,0,1,1)$&$(1,0,1,1,0,2,2,0,2)$&$(0,0,0,2,0,1,2,0,1)$ \\
19& $205992$&$(2,0,1,0,1,2,2,2,2)$&$(2,0,1,2,2,1,1,0,1)$&$(0,2,1,2,0,2,2,1,1)$&$(1,0,2,1,2,2,1,2,2)$ \\
20& $206256$&$(0,2,0,0,0,0,2,1,1)$&$(0,1,0,1,1,2,0,0,0)$&$(1,2,2,1,1,2,1,0,1)$&$(1,1,0,2,0,1,1,1,1)$ \\
\noalign{\hrule height0.8pt}
\end{tabular}
}
\end{table}

\begin{table}[thb]
\caption{Ternary near-extremal self-dual codes  $C_{72,i}$  of length $72$}
\label{Tab:F3-72-Ito3}
\centering
\medskip
{\small
\begin{tabular}{c|r|llll}
\noalign{\hrule height0.8pt}
$i$ & 
\multicolumn{1}{c|}{$\alpha$} & \multicolumn{1}{c}{$r_A$}
&\multicolumn{1}{c}{$r_B$} &\multicolumn{1}{c}{$r_C$}
&\multicolumn{1}{c}{$r_D$} \\
\hline
21& $217848$&$(0,2,2,0,2,2,1,2,1)$&$(1,2,0,0,2,0,0,2,2)$&$(2,0,1,0,0,1,1,1,2)$&$(1,2,1,2,1,1,1,2,0)$ \\
22& $218496$&$(2,2,2,2,1,1,0,2,2)$&$(0,1,2,0,0,1,0,1,0)$&$(0,0,2,2,2,2,1,0,1)$&$(0,0,0,1,2,0,2,1,1)$ \\
23& $218640$&$(0,0,1,2,0,2,0,1,1)$&$(2,1,0,1,1,2,0,0,0)$&$(1,2,2,1,1,0,0,1,2)$&$(0,1,1,1,1,1,0,0,1)$ \\
24& $218928$&$(2,2,1,2,1,2,1,1,0)$&$(2,2,0,1,1,1,0,1,1)$&$(1,0,1,1,0,1,1,2,1)$&$(2,0,2,1,2,1,1,0,1)$ \\
25& $219336$&$(2,2,1,1,1,2,0,2,0)$&$(1,0,1,2,0,0,2,1,1)$&$(0,2,0,2,0,1,0,1,0)$&$(1,2,2,2,1,2,1,2,1)$ \\
26& $219480$&$(1,0,1,2,0,2,2,0,2)$&$(0,0,1,1,0,0,0,0,0)$&$(2,0,0,2,2,2,2,2,0)$&$(1,2,0,0,1,1,0,1,1)$ \\
27& $219552$&$(0,0,0,2,0,1,2,1,0)$&$(1,0,0,2,1,2,0,0,0)$&$(0,2,2,1,2,2,1,0,0)$&$(1,1,0,0,0,1,2,2,1)$ \\
28& $219648$&$(1,0,2,0,1,1,0,1,2)$&$(1,0,1,2,0,0,2,1,1)$&$(0,2,0,2,0,1,2,0,0)$&$(0,0,0,2,1,0,2,0,2)$ \\
29& $219720$&$(2,1,1,1,2,1,1,2,2)$&$(2,2,2,0,2,1,2,2,2)$&$(2,1,2,2,2,0,2,2,0)$&$(0,2,0,2,0,2,1,0,1)$ \\
30& $220224$&$(1,2,2,2,2,1,0,2,2)$&$(2,0,1,2,2,2,2,1,1)$&$(1,0,1,2,1,1,1,1,2)$&$(1,2,2,1,1,2,2,0,1)$ \\
31& $220272$&$(2,0,0,1,0,2,2,2,2)$&$(0,0,1,0,0,1,1,1,0)$&$(0,2,1,0,1,2,2,1,0)$&$(0,0,2,1,0,0,1,2,0)$ \\
32& $221208$&$(1,0,2,2,1,1,1,1,2)$&$(2,0,2,2,0,0,0,0,1)$&$(1,2,1,1,1,0,1,0,1)$&$(0,2,1,2,2,0,2,1,1)$ \\
33& $221232$&$(0,0,1,0,2,2,1,1,0)$&$(0,1,2,0,0,2,0,2,2)$&$(2,1,2,2,0,2,2,2,1)$&$(2,1,1,1,0,1,1,2,1)$ \\
34& $221568$&$(1,2,1,1,1,2,0,2,2)$&$(1,2,2,0,2,0,2,2,0)$&$(0,0,2,2,2,0,0,1,0)$&$(0,0,1,0,2,2,1,2,0)$ \\
35& $221616$&$(1,2,2,2,1,1,0,2,1)$&$(1,2,2,0,0,1,1,1,1)$&$(2,0,1,2,1,0,1,1,1)$&$(0,2,2,2,2,0,2,1,2)$ \\
36& $222528$&$(0,0,0,1,1,0,2,0,1)$&$(0,1,0,2,0,1,1,0,0)$&$(2,2,2,1,1,0,2,0,0)$&$(2,2,1,1,1,2,1,1,2)$ \\
37& $222552$&$(0,2,2,0,2,2,0,0,1)$&$(2,2,0,1,1,1,0,1,1)$&$(1,0,1,1,0,0,0,0,2)$&$(2,0,1,0,2,0,0,2,0)$ \\
38& $223200$&$(2,1,1,2,2,1,1,1,1)$&$(0,1,0,1,1,2,0,0,0)$&$(1,2,2,1,1,0,0,0,2)$&$(0,2,1,0,0,2,0,2,0)$ \\
39& $226056$&$(0,1,1,2,2,0,1,0,1)$&$(2,1,0,0,2,1,1,1,1)$&$(0,0,0,1,0,0,0,0,2)$&$(2,2,2,1,0,0,0,0,2)$ \\
40& $229008$&$(1,1,1,0,1,0,1,1,1)$&$(2,2,2,0,2,0,2,2,2)$&$(2,1,1,2,2,2,1,1,2)$&$(1,1,1,1,1,1,1,1,1)$ \\
\noalign{\hrule height0.8pt}
\end{tabular}
}
\end{table}

\begin{table}[thb]
\caption{Ternary near-extremal self-dual codes  $C_{72,i}$  of length $72$}
\label{Tab:F3-72-bDCC}
\centering
\medskip
{\small
\begin{tabular}{c|r|llll}
\noalign{\hrule height0.8pt}
$i$ & 
\multicolumn{1}{c|}{$\alpha$} & \multicolumn{1}{c}{$r_A$}\\
\hline
41&204400&$(2,1,1,1,1,2,0,1,1,1,2,0,2,0,1,1,0,2,2,1,1,2,0,1,2,0,1,0,0,0,0,0,1,0,0)$ \\
42&206360&$(0,0,0,1,0,2,2,2,2,2,2,1,1,1,2,1,1,2,2,1,2,1,0,0,1,1,0,0,1,0,2,0,1,1,1)$ \\
43&207760&$(1,1,0,0,0,2,2,1,1,2,2,1,2,1,2,2,0,0,1,0,1,2,2,2,1,1,1,2,1,0,0,1,0,1,0)$ \\
44&208040&$(1,1,0,1,1,2,0,0,0,0,1,0,2,0,1,2,1,2,2,0,0,0,1,1,1,1,0,1,1,0,1,2,0,2,2)$ \\
45&208600&$(1,0,2,2,0,0,2,0,1,0,2,0,0,1,2,1,0,2,0,2,1,0,0,1,0,1,1,1,0,1,0,1,0,0,2)$ \\
46&208880&$(0,0,1,2,0,1,1,0,2,0,1,2,2,2,0,2,0,0,0,1,2,0,2,1,1,1,2,1,0,0,1,1,2,0,2)$ \\
47&209720&$(2,0,2,1,0,1,0,0,1,0,1,1,0,2,0,0,2,2,0,0,2,2,0,0,2,1,0,1,2,2,1,2,1,1,1)$ \\
48&210280&$(2,2,2,1,1,0,0,2,2,1,0,0,2,1,0,1,0,0,2,2,2,0,2,0,0,2,2,1,2,2,2,1,1,2,2)$ \\
49&210560&$(2,0,0,1,2,1,2,0,1,0,2,0,2,1,2,2,0,1,1,2,0,0,0,0,0,1,2,1,2,0,0,0,0,2,0)$ \\
50&211120&$(2,2,1,0,0,1,2,0,0,0,2,1,0,0,2,1,0,1,0,0,0,0,2,1,0,2,1,0,1,1,0,2,0,1,1)$ \\
51&211400&$(1,0,1,2,0,1,0,1,1,0,0,0,0,0,1,1,1,2,2,2,2,0,2,2,2,2,0,1,0,2,2,1,0,0,1)$ \\
52&211960&$(2,2,0,2,2,1,2,1,0,0,1,1,0,2,0,0,0,0,1,2,0,2,2,1,0,1,0,1,2,0,2,0,1,1,1)$ \\
53&212240&$(1,0,0,1,2,2,2,0,1,0,1,1,2,0,1,0,0,2,2,2,1,0,2,2,1,0,2,0,1,1,2,1,1,1,1)$ \\
54&212800&$(2,2,2,2,1,0,0,2,2,0,0,0,0,0,2,0,0,2,0,0,1,0,2,0,1,1,0,1,2,2,1,0,1,0,1)$ \\
55&213080&$(1,0,1,2,2,1,1,1,1,1,1,1,1,2,2,0,1,2,2,1,2,0,2,1,2,2,2,1,2,0,0,2,0,2,0)$ \\
56&213640&$(0,2,0,0,0,2,2,2,1,1,2,2,0,2,0,1,2,2,1,2,2,2,2,1,2,1,2,2,1,2,2,0,2,2,1)$ \\
57&213920&$(1,2,2,1,2,1,1,0,2,2,0,1,2,2,0,0,2,2,2,1,0,0,0,2,2,0,1,0,2,2,2,1,2,2,0)$ \\
58&214480&$(2,2,2,1,1,1,2,2,0,0,1,1,2,2,1,2,2,0,1,1,1,2,1,1,0,0,0,0,1,1,1,1,2,1,1)$ \\
59&214760&$(2,2,0,2,2,0,1,2,1,0,0,1,2,2,2,2,2,0,1,2,0,2,0,0,1,0,0,2,2,2,1,1,2,2,1)$ \\
60&215600&$(1,1,2,0,2,1,1,1,2,1,2,1,2,0,2,2,2,0,2,0,2,0,0,0,2,1,2,2,1,0,1,2,0,0,1)$ \\
61&216160&$(0,1,2,0,1,0,2,2,2,1,0,1,0,2,1,0,1,1,0,1,1,2,0,0,2,1,1,1,1,0,2,2,2,1,2)$ \\
62&217280&$(2,2,1,2,0,2,2,2,1,1,2,0,1,2,0,2,0,1,0,1,0,0,1,2,1,2,0,2,1,0,1,0,1,2,2)$ \\
63&218120&$(2,2,2,1,1,1,1,0,1,1,0,2,0,0,1,2,2,0,1,1,0,1,0,1,0,2,0,0,2,2,1,2,1,1,2)$ \\
64&220640&$(0,1,2,0,0,0,2,1,0,1,2,1,2,1,2,2,0,2,0,2,2,2,2,1,1,2,1,1,0,1,2,1,2,0,0)$ \\
65&222040&$(0,1,2,2,0,2,1,0,2,0,1,1,1,1,0,0,2,0,2,1,0,2,1,2,0,2,0,0,0,2,0,0,0,2,0)$ \\
\noalign{\hrule height0.8pt}
\end{tabular}
}
\end{table}

\begin{table}[thb]
\caption{Ternary near-extremal self-dual codes $N_{36,i}$ of length $36$}
\label{Tab:F3-36-2}
\centering
\medskip
{\small
\begin{tabular}{c|c|r|l||c|c|r|l}
\noalign{\hrule height0.8pt}
$i$ &$C$ & \multicolumn{1}{c|}{$\alpha$} & \multicolumn{1}{c||}{$\hat{x}$}&
$i$ &$C$ & \multicolumn{1}{c|}{$\alpha$} & \multicolumn{1}{c}{$\hat{x}$}\\
\hline
${ 1}$ & $P_{36}$   &$128$&$(0,0,2,0,0,0,1,2,2,1,2,1,1,2,2,1,0,1)$ &
${27}$ & $C_{36, 2}$&$400$&$(2,1,0,0,0,0,1,0,0,0,1,1,1,2,1,0,1,0)$ \\
${ 2}$ & $P_{36}$   &$152$&$(2,1,1,0,0,0,2,0,1,1,2,1,1,0,0,0,0,0)$ &
${28}$ & $C_{36, 3}$&$416$&$(1,0,0,0,0,0,2,1,0,0,2,1,1,1,1,2,0,0)$ \\
${ 3}$ & $P_{36}$   &$160$&$(2,2,1,0,0,0,0,1,1,1,0,2,1,1,0,0,0,0)$ &
${29}$ & $C_{36, 3}$&$424$&$(1,0,0,1,0,0,2,1,2,1,2,1,1,2,1,0,2,0)$ \\
${ 4}$ & $P_{36}$   &$176$&$(0,2,1,0,0,0,2,0,1,1,2,1,1,1,0,0,0,0)$ &
${30}$ & $C_{36, 5}$&$440$&$(2,1,0,0,0,0,1,2,1,0,0,2,1,0,2,1,0,0)$ \\
${ 5}$ & $P_{36}$   &$184$&$(0,0,1,0,0,0,1,1,1,1,1,1,1,1,0,0,0,0)$ &
${31}$ & $C_{36, 5}$&$448$&$(0,1,0,0,0,0,1,2,2,0,0,0,0,2,1,2,2,2)$ \\
${ 6}$ & $P_{36}$   &$192$&$(0,1,1,0,0,0,2,2,1,1,0,1,1,1,0,0,0,0)$ &
${32}$ & $C_{36, 6}$&$464$&$(0,0,1,0,0,0,2,2,2,1,0,1,0,0,2,0,1,1)$ \\
${ 7}$ & $P_{36}$   &$200$&$(0,2,0,0,0,0,1,2,2,2,1,1,1,1,0,0,0,0)$ &
${33}$ & $C_{36, 7}$&$472$&$(0,1,0,0,0,0,2,1,0,1,2,0,2,1,2,0,0,1)$ \\
${ 8}$ & $P_{36}$   &$208$&$(2,0,1,0,0,0,2,2,2,2,1,1,1,0,0,0,0,0)$ &
${34}$ & $C_{36, 7}$&$480$&$(2,2,1,0,0,0,1,0,0,0,1,1,2,0,1,0,0,1)$ \\
${ 9}$ & $P_{36}$   &$224$&$(2,1,0,0,0,0,2,2,1,1,1,1,1,0,0,0,0,0)$ &
${35}$ & $C_{36, 8}$&$488$&$(0,1,0,0,0,0,2,0,1,1,0,2,2,1,1,0,0,2)$ \\
${10}$ & $P_{36}$   &$232$&$(1,0,0,0,0,0,2,2,1,1,1,1,1,1,0,0,0,0)$ &
${36}$ & $C_{36, 8}$&$496$&$(2,0,1,0,0,0,1,1,0,2,0,2,2,2,2,2,2,2)$ \\
${11}$ & $P_{36}$   &$248$&$(1,2,0,0,0,0,2,1,1,1,1,1,1,0,0,0,0,0)$ &
${37}$ & $C_{36, 8}$&$512$&$(2,0,0,1,0,0,2,0,0,1,2,2,2,1,1,1,2,2)$ \\
${12}$ & $P_{36}$   &$256$&$(0,0,0,0,0,0,1,2,1,1,1,1,1,1,1,0,0,0)$ &
${38}$ & $C_{36,12}$&$520$&$(1,1,0,0,0,0,2,1,0,1,2,0,1,2,0,1,0,0)$ \\
${13}$ & $P_{36}$   &$264$&$(2,0,0,0,0,0,1,2,1,1,1,1,1,1,0,0,0,0)$ &
${39}$ & $C_{36,12}$&$536$&$(2,0,2,0,0,0,2,0,0,0,2,0,2,0,1,2,2,1)$ \\
${14}$ & $P_{36}$   &$272$&$(0,1,0,0,0,0,1,2,1,1,1,1,1,1,0,0,0,0)$ &
${40}$ & $C_{36,13}$&$552$&$(0,1,1,0,0,0,2,2,2,2,1,0,2,2,2,0,2,1)$ \\
${15}$ & $P_{36}$   &$280$&$(1,1,0,0,0,0,1,2,2,2,1,1,1,0,0,0,0,0)$ &
${41}$ & $C_{36,14}$&$560$&$(2,2,2,0,1,0,2,2,1,2,2,2,2,2,1,0,1,1)$ \\
${16}$ & $P_{36}$   &$296$&$(1,0,1,0,0,0,2,1,1,2,2,1,2,0,0,0,0,0)$ &
${42}$ & $C_{36,15}$&$568$&$(0,2,2,0,0,0,1,2,0,2,1,0,1,1,1,2,2,1)$ \\
${17}$ & $P_{36}$   &$304$&$(2,2,0,0,0,0,1,2,1,1,1,1,1,0,0,0,0,0)$ &
${43}$ & $C_{36,16}$&$584$&$(0,1,1,0,0,0,1,0,2,1,0,1,0,2,0,0,1,1)$ \\
${18}$ & $P_{36}$   &$320$&$(1,1,1,0,0,0,1,0,2,1,2,0,2,1,0,0,0,0)$ &
${44}$ & $C_{36,16}$&$592$&$(2,2,0,0,0,0,2,1,0,0,2,0,0,2,0,2,2,1)$ \\
${19}$ & $P_{36}$   &$328$&$(1,2,1,0,0,0,1,2,2,1,2,2,2,1,1,0,0,0)$ &
${45}$ & $C_{36,18}$&$608$&$(2,0,0,0,0,0,2,2,1,0,1,2,0,0,0,2,1,1)$ \\
${20}$ & $C_{36, 1}$&$112$&$(2,2,0,0,0,0,1,0,1,2,1,1,1,0,2,1,2,1)$ &
${46}$ & $C_{36,18}$&$616$&$(1,0,0,0,0,0,2,2,2,1,2,1,0,0,0,0,2,1)$ \\
${21}$ & $C_{36, 1}$&$336$&$(0,0,0,0,0,0,2,2,2,2,2,2,1,1,1,0,0,0)$ &
${47}$ & $C_{36,18}$&$624$&$(1,1,1,0,0,0,0,2,1,2,1,1,0,1,0,0,0,0)$ \\
${22}$ & $C_{36, 1}$&$344$&$(1,0,0,0,0,0,1,2,1,1,1,2,2,1,0,0,0,0)$ &
${48}$ & $C_{36,18}$&$632$&$(1,1,0,0,0,0,2,1,2,2,1,1,0,0,0,0,0,1)$ \\
${23}$ & $C_{36, 1}$&$352$&$(2,0,0,0,0,0,1,0,2,2,1,2,1,2,1,0,0,0)$ &
${49}$ & $C_{36,18}$&$664$&$(1,2,0,0,0,0,2,1,0,2,1,1,2,0,0,0,0,1)$ \\
${24}$ & $C_{36, 1}$&$368$&$(1,1,0,0,0,0,0,1,1,1,1,1,2,0,2,0,0,0)$ &
${50}$ & $C_{36,19}$&$640$&$(1,2,0,0,0,0,2,1,2,0,2,2,2,0,0,0,0,1)$ \\
${25}$ & $C_{36, 1}$&$376$&$(2,1,0,0,0,0,2,1,2,2,0,1,2,0,1,0,0,0)$ &
${51}$ & $N_{36,46}$&$656$&$(0,0,0,0,1,0,2,2,2,2,1,0,0,0,0,2,2,2)$ \\
${26}$ & $C_{36, 1}$&$392$&$(0,1,0,0,0,0,1,2,0,1,0,1,2,1,1,0,1,0)$ &
${52}$ & $N_{36,47}$&$680$&$(2,1,0,2,1,1,0,1,0,1,0,1,1,2,0,2,1,0)$ \\
\noalign{\hrule height0.8pt}
\end{tabular}
}
\end{table}

\begin{table}[thb]
\caption{Ternary near-extremal self-dual codes $N_{48,i}$ of length $48$}
\label{Tab:F3-48-2}
\centering
\medskip
{\footnotesize
\begin{tabular}{c|r|l||c|r|l}
\noalign{\hrule height0.8pt}
$i$ & \multicolumn{1}{c|}{$\alpha$} & \multicolumn{1}{c||}{$\hat{x}$}&
$i$ & \multicolumn{1}{c|}{$\alpha$} & \multicolumn{1}{c}{$\hat{x}$}\\
\hline
 1 & $1440$&$(1,1,2,0,1,2,0,2,1,2,2,2,0,1,0,0,2,0,1,1,1,0,0,0)$ &
32 & $1952$&$(0,2,2,1,1,2,0,2,1,1,2,2,1,1,0,0,0,0,0,0,0,0,0,0)$ \\
 2 & $1448$&$(1,2,2,0,1,2,0,2,1,2,0,1,2,2,1,2,1,0,2,2,1,0,1,0)$ &
33 & $1960$&$(0,0,2,1,1,2,0,2,1,1,2,2,0,1,1,1,0,0,0,0,0,0,0,0)$ \\
 3 & $1456$&$(2,0,1,2,1,2,0,2,1,0,2,1,0,1,2,1,2,0,0,1,0,0,1,0)$ &
34 & $1976$&$(2,2,2,1,1,2,0,2,1,2,1,2,1,0,0,0,0,0,0,0,0,0,0,0)$ \\
 4 & $1472$&$(1,0,0,2,1,2,0,2,1,1,0,2,1,1,0,1,2,0,0,0,0,0,0,0)$ &
35 & $1984$&$(0,2,0,1,1,2,0,2,1,1,2,1,1,2,1,0,0,0,0,0,0,0,0,0)$ \\
 5 & $1504$&$(1,0,2,1,1,0,0,1,0,0,2,1,0,0,2,0,0,2,0,1,0,1,0,2)$ &
36 & $1992$&$(1,0,0,1,1,2,0,2,1,2,1,2,2,0,1,1,0,0,0,0,0,0,0,0)$ \\
 6 & $1680$&$(2,2,2,0,1,2,0,2,1,1,1,2,0,2,2,1,2,1,2,1,2,2,1,1)$ &
37 & $2000$&$(1,2,0,1,1,2,0,2,1,1,0,1,1,2,1,0,0,0,0,0,0,0,0,0)$ \\
 7 & $1696$&$(0,0,1,2,1,2,0,2,1,2,2,2,2,2,1,0,1,1,0,0,0,0,1,0)$ &
38 & $2008$&$(1,2,2,1,1,2,0,2,1,1,0,1,2,1,0,0,0,0,0,0,0,0,0,0)$ \\
 8 & $1720$&$(0,2,2,0,1,2,0,2,1,0,1,0,1,2,2,2,2,1,2,2,0,0,0,0)$ &
39 & $2024$&$(2,0,2,1,1,2,0,2,1,1,1,2,2,1,0,0,0,0,0,0,0,0,0,0)$ \\
 9 & $1736$&$(1,2,1,2,1,2,0,2,1,1,1,2,1,1,2,2,1,0,1,1,0,0,0,0)$ &
40 & $2032$&$(2,1,1,1,1,2,0,2,1,2,1,1,2,0,0,0,0,0,0,0,0,0,0,0)$ \\
10 & $1744$&$(1,2,0,2,1,2,0,2,1,1,0,2,1,1,2,1,2,1,0,0,0,0,0,0)$ &
41 & $2040$&$(2,2,1,1,1,2,0,2,1,0,2,0,2,2,1,0,0,0,0,0,0,0,0,0)$ \\
11 & $1752$&$(0,1,2,0,1,2,0,2,1,0,0,1,0,1,2,2,2,1,2,1,0,1,0,0)$ &
42 & $2048$&$(1,1,2,2,1,2,0,2,1,1,1,0,1,0,2,0,0,0,0,0,0,0,0,0)$ \\
12 & $1760$&$(2,1,1,2,1,2,0,2,1,0,0,0,2,0,1,0,2,0,1,0,0,0,0,0)$ &
43 & $2056$&$(2,1,2,1,1,2,0,2,1,1,1,1,2,0,0,0,0,0,0,0,0,0,0,0)$ \\
13 & $1768$&$(0,0,2,2,1,2,0,2,1,2,0,2,2,2,1,0,0,1,0,0,0,0,0,0)$ &
44 & $2072$&$(1,0,2,1,1,2,0,2,1,1,2,1,1,1,0,0,0,0,0,0,0,0,0,0)$ \\
14 & $1784$&$(2,0,0,2,1,2,0,2,1,0,2,1,2,0,1,2,2,2,1,2,0,0,0,0)$ &
45 & $2080$&$(1,1,1,1,1,2,0,2,1,1,1,1,1,0,0,0,0,0,0,0,0,0,0,0)$ \\
15 & $1792$&$(1,1,1,2,1,2,0,2,1,1,0,2,1,1,0,1,2,1,0,0,0,0,0,0)$ &
46 & $2088$&$(0,0,0,1,1,2,0,2,1,0,2,2,1,2,2,1,1,0,0,0,0,0,0,0)$ \\
16 & $1800$&$(0,0,1,1,1,2,0,2,1,0,2,1,1,1,1,1,0,0,0,0,0,0,0,0)$ &
47 & $2096$&$(1,1,2,1,1,2,0,2,1,1,1,1,1,0,0,0,0,0,0,0,0,0,0,0)$ \\
17 & $1808$&$(2,1,0,2,1,2,0,2,1,2,1,1,0,2,0,1,0,0,0,0,0,0,0,0)$ &
48 & $2104$&$(2,0,1,1,1,2,0,2,1,0,2,0,1,2,1,2,0,0,0,0,0,0,0,0)$ \\
18 & $1816$&$(1,1,0,2,1,2,0,2,1,2,2,1,0,0,0,1,0,0,1,0,0,0,0,0)$ &
49 & $2120$&$(0,1,2,2,1,2,0,2,1,1,0,2,2,0,1,1,0,0,0,0,0,0,0,0)$ \\
19 & $1832$&$(0,2,2,2,1,2,0,2,1,0,0,2,1,1,1,0,0,2,0,0,0,0,0,0)$ &
50 & $2128$&$(2,2,2,2,1,2,0,2,1,1,1,1,0,2,2,0,2,1,0,0,0,0,0,0)$ \\
20 & $1840$&$(1,0,2,2,1,2,0,2,1,2,0,2,2,1,0,0,0,1,0,0,0,0,0,0)$ &
51 & $2136$&$(1,0,1,1,1,2,0,2,1,1,1,1,1,1,0,0,0,0,0,0,0,0,0,0)$ \\
21 & $1848$&$(1,2,2,2,1,2,0,2,1,2,2,2,2,2,2,0,1,0,0,0,0,0,0,0)$ &
52 & $2144$&$(0,1,0,2,1,2,0,2,1,0,2,1,1,2,1,2,2,1,0,1,0,0,0,0)$ \\
22 & $1856$&$(0,2,1,1,1,2,0,2,1,2,0,1,1,1,1,0,0,0,0,0,0,0,0,0)$ &
53 & $2152$&$(2,0,2,2,1,2,0,2,1,1,0,0,1,2,0,0,1,2,0,0,0,0,0,0)$ \\
23 & $1864$&$(0,1,2,1,1,2,0,2,1,1,0,1,1,2,1,0,0,0,0,0,0,0,0,0)$ &
54 & $2168$&$(2,2,0,2,1,2,0,2,1,0,0,2,1,0,0,1,0,2,2,0,0,0,0,0)$ \\
24 & $1880$&$(0,1,0,1,1,2,0,2,1,0,1,2,2,2,1,1,0,0,0,0,0,0,0,0)$ &
55 & $2176$&$(0,2,1,2,1,2,0,2,1,2,0,1,2,0,0,2,1,1,1,2,0,0,0,0)$ \\
25 & $1888$&$(2,1,2,2,1,2,0,2,1,2,0,2,2,2,1,2,1,0,0,0,0,0,0,0)$ &
56 & $2184$&$(0,1,1,2,1,2,0,2,1,2,1,2,2,0,0,2,0,0,1,1,1,0,0,0)$ \\
26 & $1896$&$(1,2,1,1,1,2,0,2,1,2,1,0,2,0,1,0,0,0,0,0,0,0,0,0)$ &
57 & $2192$&$(2,2,1,2,1,2,0,2,1,2,2,2,0,2,2,1,0,1,1,1,2,0,0,0)$ \\
27 & $1904$&$(0,1,1,1,1,2,0,2,1,1,2,1,0,2,1,0,0,0,0,0,0,0,0,0)$ &
58 & $2200$&$(1,0,1,2,1,2,0,2,1,2,1,0,1,1,0,2,1,0,0,0,1,2,0,0)$ \\
28 & $1912$&$(2,0,0,1,1,2,0,2,1,1,0,2,1,1,1,1,0,0,0,0,0,0,0,0)$ &
59 & $2216$&$(0,2,0,2,1,2,0,2,1,1,2,1,2,1,0,0,2,1,1,2,0,0,0,0)$ \\
29 & $1928$&$(2,2,0,1,1,2,0,2,1,0,1,1,2,2,2,0,0,0,0,0,0,0,0,0)$ &
60 & $2224$&$(0,0,0,2,1,2,0,2,1,2,2,0,0,1,2,2,1,1,0,0,0,0,0,0)$ \\
30 & $1936$&$(2,1,0,1,1,2,0,2,1,2,2,1,1,0,1,0,0,0,0,0,0,0,0,0)$ &
61 & $2248$&$(2,1,2,0,1,2,0,2,1,1,0,2,1,0,1,2,2,2,2,1,1,2,0,0)$ \\
31 & $1944$&$(1,1,0,1,1,2,0,2,1,1,2,2,2,1,0,0,0,0,0,0,0,0,0,0)$ &
&&\\
\noalign{\hrule height0.8pt}
\end{tabular}
}
\end{table}

\begin{table}[thbp]
\caption{Weight enumerators $W_{3,84}$ and $W_{3,96}$}
\label{Tab:Aiall1}
\centering
\medskip
{\footnotesize
\begin{tabular}{c|r||c|r}
\noalign{\hrule height0.8pt}
$i$ & \multicolumn{1}{c||}{$A_i$}&
$i$ & \multicolumn{1}{c}{$A_i$}\\
\hline
$ 0$ & $1 $&$ 0$ & $1 $\\
$21$& $\alpha$&$27$& $\alpha$\\
24& $334446840 - 21\alpha$&$27$& $3082778880 - 24 \alpha $\\
27& $27251986272 + 210\alpha$&$30$& $272857821696 + 276 \alpha $\\
30& $1595856832416 - 1330\alpha$&$33$& $18642386018880 - 2024 \alpha $\\
33& $57941510353272 + 5985\alpha$&$36$& $827849897536896 + 10626 \alpha $\\
36& $1352260881447840 - 20349\alpha$&$39$& $24804181974320640 - 42504 \alpha $\\
39& $20473083660798720 + 54264\alpha$&$42$& $505747055590698240 + 134596 \alpha $\\
42& $202449354096116160 - 116280\alpha$&$45$& $7072369265609436672 - 346104 \alpha $\\
45& $1310283470243301120 + 203490\alpha$&$48$& $68123007669831753600 + 735471 \alpha $\\
48& $5538707497101706200 - 293930\alpha$&$51$& $452631286236320964864 - 1307504 \alpha $\\
51& $15191881032059825184 + 352716\alpha$&$54$& $2071548949601690296320 + 1961256 \alpha $\\
54& $26733400854085789248 - 352716\alpha$&$57$& $6502086811906783649280 - 2496144 \alpha $\\
57& $29675352547018793664 + 293930\alpha$&$60$& $13891892631756684300288 + 2704156 \alpha $\\
60& $20292322648045462560 - 203490\alpha$&$63$& $19981992697305182069760 - 2496144 \alpha $\\
63& $8274112209487238400 + 116280\alpha$&$66$& $19059746794624115235648 + 1961256 \alpha $\\
66& $1923875932140643320 - 54264\alpha$&$69$& $11815486097051591516160 - 1307504 \alpha $\\
69& $239704140659316480 + 20349\alpha$&$72$& $4635854680301645299200 + 735471 \alpha $\\
72& $14629846917558600 - 5985\alpha$&$75$& $1111643979567788848896 - 346104 \alpha $\\
75& $381315149927712 + 1330\alpha$&$78$& $155474678628222036480 + 134596 \alpha $\\
78& $3368733804576 - 210\alpha$&$81$& $11895672707972542592 - 42504 \alpha $\\
81& $6274641968 + 21\alpha$&$84$& $454433325623896320 + 10626 \alpha $\\
84& $2368656 - \alpha$&$87$& $7545707021698560 - 2024 \alpha $\\
&&$90$& $43157012131584 + 276 \alpha $\\
&&$93$& $53638295040 - 24 \alpha $\\
&&$96$& $- 13283136 + \alpha $\\
\noalign{\hrule height0.8pt}
\end{tabular}
}
\end{table}

\begin{table}[thbp]
\caption{Weight enumerators $W_{3,108}$ and $W_{3,120}$}
\label{Tab:Aiall2}
\centering
\medskip
{\footnotesize
\begin{tabular}{c|r||c|r}
\noalign{\hrule height0.8pt}
$i$ & \multicolumn{1}{c||}{$A_i$}&
$i$ & \multicolumn{1}{c}{$A_i$}\\
\hline
$ 0$ & $1 $&$ 0$ & $1 $\\
$27$& $\alpha$&$30$& $\alpha$\\
$30 $&$ 28403815968 - 27\alpha$ &$33 $&$ 261792869520 - 30\alpha$ \\
$33 $&$ 2673811124712 + 351\alpha$ &$36 $&$ 25817030152160 + 435\alpha$ \\
$36 $&$ 206590654211760 - 2925\alpha$ &$39 $&$ 2206243011598080 - 4060\alpha$ \\
$39 $&$ 10760877549676800 + 17550\alpha$ &$42 $&$ 130825652486394240 + 27405\alpha$ \\
$42 $&$ 392999730265933920 - 80730\alpha$ &$45 $&$ 5612809762417673088 - 142506\alpha$ \\
$45 $&$ 10138440079202647680 + 296010\alpha$ &$48 $&$ 175296412571121223200 + 593775\alpha$ \\
$48 $&$ 186220899739005047280 - 888030\alpha$ &$51 $&$ 4016223213579257466240 - 2035800\alpha$ \\
$51 $&$ 2448009164124666552384 + 2220075\alpha$ &$54 $&$ 67868355146310115461120 + 5852925\alpha$ \\
$54 $&$ 23102325452246741611680 - 4686825\alpha$ &$57 $&$ 849119988307524499949760 - 14307150\alpha$ \\
$57 $&$ 156672606424451974380576 + 8436285\alpha$ &$60 $&$ 7882969705335235952872704 + 30045015\alpha$ \\
$60 $&$ 762760272369004823178720 - 13037895\alpha$ &$63 $&$ 54343678095173262461802240 - 54627300\alpha$ \\
$63 $&$ 2657742513817704808435200 + 17383860\alpha$ &$66 $&$ 277988814357219566717271120 + 86493225\alpha$ \\
$66 $&$ 6593245862016933524881800 - 20058300\alpha$ &$69 $&$ 1052828118567489363251967360 - 119759850\alpha$ \\
$69 $&$ 11557119125796366858624000 + 20058300\alpha$ &$72 $&$ 2940998734258869894606360000 + 145422675\alpha$ \\
$72 $&$ 14167741347435074934974400 - 17383860\alpha$ &$75 $&$ 6026510372574451248316638720 - 155117520\alpha$ \\
$75 $&$ 11984618814237321888400128 + 13037895\alpha$ &$78 $&$ 8992710676827639928253905920 + 145422675\alpha$ \\
$78 $&$ 6876079739002410705532512 - 8436285\alpha$ &$81 $&$ 9679917353459122424429865440 - 119759850\alpha$ \\
$81 $&$ 2617616851829250108509168 + 4686825\alpha$ &$84 $&$ 7427460198130322024755753920 + 86493225\alpha$ \\
$84 $&$ 642838612913585065972800 - 2220075\alpha$ &$87 $&$ 4002608863954458573348798720 - 54627300\alpha$ \\
$87 $&$ 98201262285146563994880 + 888030\alpha$ &$90 $&$ 1487111607797231811094207872 + 30045015\alpha$ \\
$90 $&$ 8893951642438004207520 - 296010\alpha$ &$93 $&$ 372219109982756445972447360 - 14307150\alpha$ \\
$93 $&$ 447418538339167578240 + 80730\alpha$ &$96 $&$ 60959736643469146369614000 + 5852925\alpha$ \\
$96 $&$ 11398404115890038520 - 17550\alpha$ &$99 $&$ 6293059289046900207864000 - 2035800\alpha$ \\
$99 $&$ 127900763897680800 + 2925\alpha$ &$102$&$ 389971756726008705480960 + 593775\alpha$ \\
$102$&$ 500637943155168 - 351\alpha$ &$105$&$ 13580154047777800897728 - 142506\alpha$ \\
$105$&$ 423068395680 + 27\alpha$ &$108$&$ 242127348895859481600 + 27405\alpha$ \\
$108$&$ 161827872 - \alpha$ &$111$&$ 1921175681643429120 - 4060\alpha$ \\
&&$114$&$ 5368260223959360 + 435\alpha$ \\
&&$117$&$ 3343797869440- 30\alpha$ \\
&&$120$&$  - 1208841792 + \alpha$ \\
\noalign{\hrule height0.8pt}
\end{tabular}
}
\end{table}

\begin{table}[thb]
\caption{Weight enumerators $W_{4,42}, W_{4,48}, W_{4,54}$ and $W_{4,60}$}
\label{Tab:WDF4-all}
\centering
\medskip
{\footnotesize
\begin{tabular}{c|r||c|r||c|r||c|r}
\noalign{\hrule height0.8pt}
$i$ & \multicolumn{1}{c||}{$A_i$}&
$i$ & \multicolumn{1}{c||}{$A_i$}&
$i$ & \multicolumn{1}{c||}{$A_i$}&
$i$ & \multicolumn{1}{c}{$A_i$}\\
\hline
$ 0 $&$ 1$ &$ 0 $&$ 1$  &
$ 0 $&$ 1$ &$ 0 $&$ 1$  \\
$14$ & $\alpha$ & $16$ & $\alpha$ &
$18$ & $\alpha$ & $20$ & $\alpha$ \\
$16$ & $5005854 - 14 \alpha$ & $18$ & $32533776 - 16\alpha$ &
$20$ & $211795155 - 18 \alpha$ &$22$ & $1381290300 - 20\alpha $ \\
$18$ & $51040080 + 91 \alpha$ & $20$ & $321057000 + 120\alpha$ &
$22$ & $1963918710 + 153 \alpha$ &$24$ & $11695925475 + 190\alpha $ \\
$20$ & $859085136 - 364 \alpha$ & $22$ & $6538354992 - 560\alpha$ &
$24$ & $47987926740 - 816 \alpha$ &$26$ & $343952371980 - 1140\alpha $ \\
$22$ & $7211157408 + 1001 \alpha$ & $24$ & $63324394380 + 1820\alpha$ &
$26$ & $514922408208 + 3060 \alpha$ &$28$ & $3965025528090 + 4845\alpha $ \\
$24$ & $45668111580 - 2002 \alpha$ & $26$ & $497930686992 - 4368\alpha$ &
$28$ & $4809711507372 - 8568 \alpha$ &$30$ & $42720147944496 - 15504\alpha $ \\
$26$ & $192104653104 + 3003 \alpha$ & $28$ & $2714283514008 + 8008\alpha$ &
$30$ & $31973165850600 + 18564 \alpha$ &$32$ & $332209447756065 + 38760\alpha $ \\
$28$ & $550338821280 - 3432 \alpha$ & $30$ & $10702668824496 - 11440\alpha$ &
$32$ & $160726125267045 - 31824 \alpha$ &$34$ & $2024219799388080 - 77520\alpha $ \\
$30$ & $1034970326208 + 3003 \alpha$ & $32$ & $29678410515501 + 12870\alpha$ &
$34$ & $594847410234480 + 43758 \alpha$ &$36$ & $9383151491321400 + 125970\alpha $ \\
$32$ & $1240187813865 - 2002 \alpha$ & $34$ & $57163456332720 - 11440\alpha$ &
$36$ & $1615407895545210 - 48620 \alpha$ &$38$ & $33173750202208200 - 167960\alpha $ \\
$34$ & $894987802800 + 1001 \alpha$ & $36$ & $74293883912664 + 8008\alpha$ &
$38$ & $3163481567347140 + 43758 \alpha$ &$40$ & $88400914224548310 + 184756\alpha $ \\
$36$ & $358107409296 - 364 \alpha$ & $38$ & $62784083316240 - 4368\alpha$ &
$40$ & $4380680515423860 - 31824 \alpha$ &$42$ & $175586877143389800 - 167960\alpha $ \\
$38$ & $68742928800 + 91 \alpha$& $40$ & $32595760120572 + 1820\alpha$ &
$42$ & $4166728491884400 + 18564 \alpha$ &$44$ & $255572564219099100 + 125970\alpha $ \\
$40$ & $4762814364 - 14 \alpha$ & $42$ & $9541274664240 - 560\alpha$&
$44$ & $2616430701397500 -  8568 \alpha$ &$46$ & $266691475753110000 - 77520\alpha $ \\
$42$ & $49541328 + \alpha$ & $44$ & $1361387920680 + 120\alpha$ &
$46$ & $1023782853997800 + 3060 \alpha$ &$48$ & $193631697367494750 + 38760\alpha $ \\
&&$46$ & $71055167760 - 16\alpha$ &
$48$ & $228726938682450 - 816 \alpha$ &$50$ & $93892876873052400 - 15504\alpha $ \\
&&$48$ & $565394634 + \alpha$ &
$50$ & $25204966879248 + 153 \alpha$ &$52$ & $28677315561812280 + 4845\alpha $ \\
&&&&$52$ & $1026632343051 - 18 \alpha$ &$54$ & $5050176407767260 - 1140\alpha $ \\
&&&&$54$ & $6447073014 + \alpha$ &$56$ & $442697147254215 + 190\alpha $ \\
&&&&&&$58$ & $14463285616620 - 20\alpha $ \\
&&&&&&$60$ & $73479968154 + \alpha $ \\
\noalign{\hrule height0.8pt}
\end{tabular}
}
\end{table}

\end{landscape}


\begin{thebibliography}{30}

\bibitem{AH}M. Araya and M. Harada,
\url{http://yuki.cs.inf.shizuoka.ac.jp/WE/}.
      
\bibitem{AM}E.F. Assmus Jr.\ and H.F. Mattson Jr.,
New 5-designs,
\textsl{J. Combinatorial Theory}
\textbf{6}  (1969), 122--151.


\bibitem{Magma}W. Bosma, J. Cannon and C. Playoust,
The Magma algebra system I: The user language,
\textsl{J. Symbolic Comput.}
\textbf{24} (1997), 235--265.

\bibitem{CP}J.H. Conway and V. Pless, 
Monomials of orders 7  and 11  cannot be in the group of a $(24,12,10)$  self-dual quaternary code,
\textsl{IEEE Trans.\ Inform.\ Theory}
\textbf{29}  (1983),  137--140. 

\bibitem{CPS}J.H. Conway, V. Pless and N.J.A. Sloane, 
Self-dual codes over $GF(3)$ and $GF(4)$ of length not exceeding $16$, 
\textsl{IEEE Trans.\ Inform.\ Theory}
\textbf{25} (1979), 312--322.


\bibitem{SPLAG} J.H. Conway and N.J.A. Sloane,
{\sl Sphere Packing, Lattices and Groups (3rd ed.)},
Springer-Verlag, New York, 1999.


\bibitem{GG}M. Grassl and T.A. Gulliver,
On circulant self-dual codes over small fields,
\textsl{Des.\ Codes Cryptogr.}
\textbf{52}  (2009),  57--81.


\bibitem{G00} T.A. Gulliver,
Optimal double circulant self-dual codes over $\FF_4$,
\textsl{IEEE Trans.\ Inform.\ Theory}
\textbf{46}  (2000),  271--274.

\bibitem{GH}T.A. Gulliver and M. Harada, 
New nonbinary self-dual codes,
\textsl{IEEE Trans.\ Inform.\ Theory}
\textbf{54}  (2008),  415--417. 

\bibitem{GHM}T.A. Gulliver, M. Harada and H. Miyabayashi, 
Optimal double circulant self-dual codes over $\Bbb F_4$ II,
\textsl{Australas.\ J. Combin.}  
\textbf{39}  (2007), 163--174.


\bibitem{HK}S. Han and J.-L. Kim,
The nonexistence of near-extremal formally self-dual codes,
\textsl{Des.\ Codes Cryptogr.}
\textbf{51} (2009), 69--77. 


\bibitem{HHKK}M. Harada, W. Holzmann, H. Kharaghani and M. Khorvash, 
Extremal ternary self-dual codes constructed from negacirculant matrices,
\textsl{Graphs Combin.}
\textbf{23}  (2007),  401--417.

\bibitem{HLMT}M. Harada, C. Lam, A. Munemasa and V.D. Tonchev, 
Classification of generalized Hadamard matrices $H(6,3)$ and quaternary Hermitian 
self-dual codes of length $18$,
\textsl{Electron.\ J. Combin.} 
\textbf{17} (2010), Research Paper 171, 14 pp. 

\bibitem{HM}M. Harada and A. Munemasa, 
A complete classification of ternary self-dual codes of length $24$,
\textsl{J. Combin.\ Theory Ser.~A}
\textbf{116}  (2009),  1063--1072. 


\bibitem{HM11}M. Harada and A. Munemasa,  
Classification of quaternary Hermitian self-dual codes of length $20$,
\textsl{IEEE Trans.\ Inform.\ Theory}
\textbf{57} (2011), 3758--3762.


\bibitem{K01}J.-L. Kim, 
New self-dual codes over ${\rm GF}(4)$ with the highest known minimum weights, 
\textsl{IEEE Trans.\ Inform.\ Theory}
\textbf{47} (2001), 1575--1580.

\bibitem{LP}C.W.H. Lam and V. Pless, 
There is no $(24,12,10)$  self-dual quaternary code,
\textsl{IEEE Trans.\ Inform.\ Theory}
\textbf{36}  (1990), 1153--1156. 

\bibitem{MMS}F.J. MacWilliams, C.L. Mallows and N.J.A. Sloane, 
Generalizations of Gleason's theorem on weight enumerators of self-dual codes,
\textsl{IEEE Trans.\ Inform.\ Theory}
\textbf{18}  (1972), 794--805.

\bibitem{MOSW}F.J. MacWilliams, A.M. Odlyzko, N.J.A. Sloane and H.N. Ward, 
Self-dual codes over GF($4$),
\textsl{J. Combin. Theory Ser.~A}
\textbf{25}  (1978),  288--318.

\bibitem{MPS}C.L. Mallows, V. Pless and N.J.A. Sloane, 
Self-dual codes over $GF(3)$, 
\textsl{SIAM J. Appl.\ Math.}
\textbf{31} (1976), 649--666.

\bibitem{MS-bound}C.L. Mallows and N.J.A. Sloane, 
An upper bound for self-dual codes,
\textsl{Inform.\ Control}
\textbf{22} (1973), 188--200.

\bibitem{MMN}T. Miezaki, A. Munemasa and H. Nakasora, 
A note on Assmus--Mattson type theorems,
\textsl{Des.\ Codes Cryptogr.}
\textbf{89}  (2021),  843--858. 


\bibitem{NV}G. Nebe and D. Villar,
An analogue of the Pless symmetry codes,
Seventh International Workshop on
Optimal Codes and Related Topics, Bulgaria, pp.\ 158--163, (2013).


\bibitem{PSW}V. Pless, N.J.A. Sloane and H.N. Ward, 
Ternary codes of minimum weight $6$ and the classification of self-dual codes of length $20$, 
\textsl{IEEE Trans.\ Inform.\ Theory}
\textbf{26} (1980), 305--316.

\bibitem{RS-Handbook} E. Rains and N.J.A. Sloane,
{``Self-dual codes,''} \textit{Handbook of Coding Theory},
V.S. Pless and W.C. Huffman (Editors),
Elsevier, Amsterdam, pp.\ 177--294, 1998.

\bibitem{Roberts}A.M. Roberts,
Quaternary Hermitian self-dual codes of lengths $26, 32, 36, 38$ and $40$ from 
modifications of well-known circulant constructions,
preprint, arXiv:2102.12326.


\bibitem{R02}R.P. Russeva, 
Self-dual $[24, 12, 8]$ quaternary codes with a nontrivial automorphism of order $3$,
\textsl{Finite Fields Appl.}
\textbf{8} (2002), 34--51.

\bibitem{SeberryYamada}J. Seberry and M. Yamada,
\textit{Hadamard Matrices: Constructions using Number Theory and Linear Algebra},
Wiley, NJ, 2020.

\bibitem{Mathematica}Wolfram Research, Inc.,
Mathematica, Version 12.3.1,
\url{https://www.wolfram.com/mathematica}.

\bibitem{Zhang} S. Zhang, 
On the nonexistence of extremal self-dual codes,
\textsl{Discrete Appl.\ Math.}
\textbf{91}  (1999), 277--286.


\end{thebibliography}
\end{document}